\documentclass[a4paper,UKenglish]{lipics-v2016}

\usepackage{microtype}

\usepackage{bbm}
\usepackage{amsmath}
\usepackage{amssymb}
\usepackage{mathrsfs}

\newtheorem{observation}[theorem]{Observation}
\newcommand{\MyHide}[1]{}

\bibliographystyle{plainurl}

\title{Non-approximability and Polylogarithmic Approximations  of the Single-Sink Unsplittable and Confluent  Dynamic Flow Problems
\footnote{The work of all three authors  was partially supported by RGC Hong Kong CERG grant 16208415.}
}
\titlerunning{Non-approximability and Polylogarithmic Approximations  of Dynamic Flows} 

\author[1]{Mordecai J. Golin}
\author[2]{Hadi Khodabande}
\author[3]{Bo Qin}
\affil[1]{CSE Department. Hong Kong UST, \texttt{golin@cse.ust.hk}}
\affil[2]{
CE Department. Sharif University of Technology,
\texttt{khodabande@ce.sharif.edu}}
\affil[1]{CSE Department. Hong Kong UST,  \texttt{bqin@cse.ust.hk}}
\authorrunning{M. J. Golin, H. Khodabande and B. Qin} 

\Copyright{Mordecai J. Golin, Hadi Khodabande and Bo Qin}

\subjclass{G.1.6, G.2.1, G.2.2
\vspace{-0.5cm}
}
\keywords{Optimization, Approximation, Dynamic Flow, Confluent Flow
\vspace{-0.5cm}
}

\EventEditors{Yoshio Okamoto and Takeshi Tokuyama}
\EventNoEds{2}
\EventLongTitle{28th International Symposium on Algorithms and Computation (ISAAC 2017)}
\EventShortTitle{ISAAC 2017}
\EventAcronym{ISAAC}
\EventYear{2017}
\EventDate{December 9--12, 2017}
\EventLocation{Phuket, Thailand}
\EventLogo{}
\SeriesVolume{92}
\ArticleNo{41} 


\begin{document}

\maketitle
\vspace{-0.8cm}
\begin{abstract}
{\em Dynamic Flows} were introduced by Ford and Fulkerson in 1958 to model flows over time. They differ from standard network flows by defining edge  {\em capacities} to be the total amount of flow that can enter an edge {\em in one time unit}. In addition, each edge has a {\em length}, representing the time needed to traverse it.
Dynamic Flows  have  been used to model many problems including traffic congestion, hop-routing of packets and evacuation protocols in buildings. While the basic  problem of moving  the maximal amount of supplies from sources to sinks is polynomial time solvable, natural  minor modifications can make it NP-hard.
 One such  modification is  that flows be {\em confluent,} i.e., all flows leaving a vertex must leave along the same edge.  This corresponds to natural conditions in, e.g., evacuation planning and hop routing.  

We investigate  the {\em single-sink Confluent Quickest Flow} problem.  The input is a graph with edge capacities and lengths,  sources with supplies and a sink. The problem is to  find  a confluent flow minimizing the time required to send  supplies to the sink.  
Our main results include:
\begin{enumerate}

\item [$\bullet$]
{\em Logarithmic Non-Approximability.}
Directed Confluent Quickest Flows cannot be approximated in polynomial time with an $O(\log n)$ approximation factor, unless $P=NP$.

\item [$\bullet$]
{\em Polylogarithmic Bicriteria  Approximations.}
Polynomial time $(O(\log^8 n), O(\log^2 \kappa))$  bicritera approximation algorithms for the Confluent Quickest Flow problem where $\kappa$ is the number of sinks,  in both directed and undirected graphs.
\end{enumerate}
Corresponding results are also developed for the {\em Confluent Maximum Flow over time} problem. 
The techniques developed  are also  used to improve recent approximation algorithms for {\em static} confluent flows.

\vspace{-0.3cm}
\end{abstract}

\section{Introduction}
\label{Sec:Introduction}
Network Flow problems  
are very well known.  Their input is  a graph network with {\em capacities} $c(e)$  on its edges.  $c(e)$ is  the maximum  {\em flow} that can be pushed through $e$. The problem is usually  to maximize the amount of flow that can be pushed through the network.
By contrast, {\em Dynamic} network flows, while  introduced by Ford and Fulkerson \cite{Ford1958} in 1958, around the same time as regular network flows, are not
as well known.   In Dynamic Flows, $c(e)$ becomes  the amount of flow {\em that can enter $e$ in one time unit} 
while edge  {\em length}  $\ell(e)$ is the time that it takes for a unit of flow to traverse $e.$  Dynamic Flow problems need to consider the additional problem of {\em congestion}, which  may arise while flow waits to enter an edge.

Dynamic flows  have been used to model problems as diverse as traffic movement,  evacuation protocols and hop-routing of packets.  The {\em (Dynamic) Maximum Flow Over Time} problem  is  to find the maximum amount of flow that can be pushed from sources to sinks in a given amount of time. The  {\em (Dynamic) Quickest Flow} problem is     to find the  minimum time in which a fixed amount of flow can be pushed from sources to sinks.  In addition, there are  {\em multicommodity-flow} versions which require specific amounts of flow between given  source-sink pairs  and  {\em transshipment problems}  versions which
 do not restrict which source's demands are pushed to which sinks. It is known that the {\em Quickest Multicommodity Flow Over Time} problem is NP-Hard \cite{hall2007multicommodity}  while the {\em Quickest Transshipment} problem can be solved in polynomial time  \cite{Hoppe1994,Hoppe2000b}.
 Good surveys on Dynamic Flow problems and an introduction to its basic literature 
can be found in
\cite{kohler2009traffic,Pascoal2006,Skutella2009}.

In  basic (static)  network flow problems, {\em  splittable} flow is permitted, i.e., flow between a source and sink  can be divided into multiple parts with each  being routed over a different path.  
{\em Unsplittable flows} require that all flow between a particular source and sink be routed over only one path.  {\em Confluent flows} require that all flow passing through a vertex must leave that vertex on the same edge\footnote{Thus, confluent flows partition flows into edge disjoint in-trees, with the root of each tree being a sink.} \cite{chen2007almost,shepherd2015polylogarithmic}. Very recent work \cite{SV15}  has shown that, for the static single-sink case,  unless $P=NP$,  optimal unsplittable flows and optimal confluent flows do not have   polynomial time constant-factor approximation\footnote{The objectives studied  in \cite{SV15} are the total {\em amount} of flow that can be confluently routed or the number of demands that can be confluently satisfied in the {\em static} flow.} algorithms and, in fact,  confluent flows can not be approximated   to within a factor of $O(m^{1/2-\epsilon})$.

Confluent flows were  introduced by~\cite{merge03}, with  applications including  Internet routing~\cite{bley2008routing}, evacuation problems~\cite{Mamada2005}, 
and traffic coordination~\cite{kohler2009traffic}. 
Several works have studied confluent flows that minimize the maximum {\em congestion} in routing networks 
e.g.,~\cite{merge03,chen2007almost,shepherd2015polylogarithmic}. However, these works usually do not take into consideration the transit time (or edge length)  required for a packet  to traverse a single link,
though this parameter is usually considered in general network analyses  (see, e.g.,
~\cite{Harris2013routing}).
This immediately raises the Confluent Quickest Flow problem: Does there exists any routing scheme that minimizes the total time for sending all packets via a feasible (congestion bounded) confluent flow?

Another scenario in which  {\em confluent dynamic flows} arise
naturally is
in modelling evacuation protocols. 
Let vertices represent locations to be evacuated and edges represent paths between vertices.
 A vertex's original  supply  is  the number of people to be evacuated from it and a sink corresponds to an emergency exit.   $\ell(e)$  is  the time required to traverse  path $e$;  $c(e)$ is  the number of people  that can enter $e$ in parallel,  i.e., its width. The Confluent Flow restriction states that all people passing through a vertex   must leave by the same edge, i.e., following a sign pointing ``This way out''.  
The Quickest Flow problem corresponds  to placing the exit signs so as to minimize the time required to evacuate all people.  The Maximum Flow Over Time problem corresponds  to placing the signs so as to maximize the number of people that  can be evacuated in a given amount of time.


The  single-source single-sink version of the Confluent Quickest Flow problem is known as the {\em Quickest-Path Problem} and has long been known to be polynomial-time solvable 
\cite{Pascoal2006}. 
The Confluent Flow version of the multiple-source multiple-sink Quickest Transshipment problem was  known to be polynomial-time solvable when $G$ is a tree \cite{Mamada2005}. It was also known that, for general graphs, the single-sink Confluent Quickest Transshipment problem is NP-Hard \cite{KAMIYAMA2009}.  But  no other hardness complexity results, and in particular, non-approximability results, were known for 
general $G$.

Our first results  are that {\em Confluent Dynamic Flow} problems on directed graphs, both the Quickest Flow and Max Flow Over Time versions, cannot be approximated to within $O(\log n)$ ($n$ being the number of vertices in $G$) unless $P=NP$.
Our results hold even when the graph has {\em a single sink}.
Since, Multicommodity Flow and Transshipment are equivalent in the single-sink case we write  ``Quickest Flow'' instead of ``Quickest Multicommodity Flow'' or ``Quickest Transshipment''.


In the other direction,  we present  {\em polylogarithmic bicriteria  approximation} algorithms for both  the {\em single-sink} Confluent Quickest Flow and Confluent Maximum Flow Over Time problems, in both directed and undirected networks.
Note that known approximation algorithms for confluent flows are restricted to static networks in~\cite{merge03,chen2007almost, shepherd2015polylogarithmic}, and known optimal algorithms  for dynamic confluent flows are restricted to special graphs, e.g., trees~\cite{Mamada2005}. To the best of our knowledge, our algorithm is the first polylogarithmic approximation for these problems in general networks. These results are presented in Tables~\ref{tab-results1}-\ref{tab-results2}.

\vspace{-0.4cm}
\subsection{Single-Sink Dynamic Unsplittable/Confluent Flow Problems}
\label{subsec:Problem Types}
The input to the problems is a dynamic flow network, i.e.,
a graph 
$G=(V,E)$ with $n$ nodes and $m$ edges, where edge $e$ has  capacity $c(e)$ and length $\ell(e)$. Also specified are a collection of sources $\{s_1,...,s_k\}\subset V$ and  a sink $t \in V$.  The problems studied are: 
\begin{itemize}
\vspace{-0.cm}
\item [$\bullet$] { \sc Quickest Flow Problem:}  Provides   additional inputs  $\{d_1,...,d_{\kappa}\}$.   $d_i$ is the supply at source  $s_i$.  The problem is to find a flow minimizing  the time it takes to send all of the   $d_i$ units of supply to  sink $t.$  
\item [$\bullet$] { \sc Maximum Flow Over Time Problem:} Provides additional input of time horizon $T.$ 
The problem is to find  a flow maximizing the amount of supply sent to the sink $t$ within  time horizon $T$.  Supply at the $s_i$ is unlimited.
\vspace{-0.2cm}
\end{itemize}

We treat two different types of flow restrictions:
\begin{itemize}
\vspace{-0.cm} 
\item  [$\bullet$]    {\sl Unsplittable Flow:} All flow from $s_i$ to $t$ must pass along the same  path $P_i$
from $s_i$ to $t$.
\vspace{-0.cm}
\item [$\bullet$] {\sl Confluent Flow:} Any two supplies that meet at a node must traverse an identical path to the sink $t$. In particular, at most one edge out of each node $v$ is allowed to carry flow. Consequently, the support of the flow is a tree with all paths in the tree terminating at $t.$
\vspace{-0.cm}

\end{itemize}

\begin{table*}[t]
  \centering

\begin{footnotesize}
\begin{tabular}{|l|l|l|}

	\hline

	 {\scriptsize Flow} &  {\scriptsize Dynamic Network }  &  {\scriptsize Hardness or LB on Approx. Ratio} \\

\hline
	\hline

{\scriptsize Confluent} & {\scriptsize Trees }  & {\scriptsize Polynomial-Time Solvable~\cite{Mamada2005} }\\

\hline

{\scriptsize Confluent }& {\scriptsize Directed/Undirected }  & {\scriptsize NP-Hard~\cite{KAMIYAMA2009}}\\

	\hline

{\scriptsize Unsplittable} & {\scriptsize Directed/Undirected }   & {\scriptsize $3/2-\epsilon$ (Thm.~\ref{thm-2approfixed}) }\\
\hline

{\scriptsize Confluent} & {\scriptsize Directed}   & {\scriptsize $\Omega(\log n)$ (Thm.~\ref{thm-conlogappro})*} \\
\hline

{\scriptsize Unsplittable/Confluent} & {\scriptsize Directed/Undirected }  & {\scriptsize  No $(\frac{15}{14}-\epsilon,1+\alpha)$-Approx. (Thm.~\ref{thm-biapproxhardQuick})*}\\
	\hline

  \end{tabular}
  \end{footnotesize}
\renewcommand{\captionfont}{\scriptsize}
\renewcommand{\tablename}{*}
\vspace{-0.cm}
\caption*{* Corresponding results also hold for the single-sink Maximum Flow Over Time problem (Thms~\ref{thm-unspovertime},~\ref{thm-confluentovertime},~\ref{thm-biapproxhardTime}).
}
\vspace{-0.3cm}
\renewcommand{\captionfont}{\small}
\renewcommand{\tablename}{Table}
\caption{Hardness or lower bounds on approx. ratio for the single-sink Quickest Flow problem.}

  \label{tab-results1}
\vspace{-0.5cm}
\end{table*}


\begin{table*}[t]
  \centering
\begin{footnotesize}
  \begin{tabular}{|l|l|l|l|l|l|}
	\hline

	{\scriptsize Network } & {\scriptsize Capacity} & {\scriptsize Objective} & {\scriptsize Sources}  & {\scriptsize Sinks} & {\scriptsize UB on Approx. Ratio} \\

\hline
\hline

{\scriptsize Static} & {\scriptsize Uncapacitated} & {\scriptsize Min Congestion*} & {\scriptsize $n$}  & {\scriptsize $k \ (k\leq n)$} & {\scriptsize $O(\log^3 n)$~\cite{merge03}$\dagger$} \\
\hline

{\scriptsize Static} & {\scriptsize Uncapacitated} & {\scriptsize Min Congestion*} & {\scriptsize $n$}  & {\scriptsize $k \ (k\leq n)$} & {\scriptsize $1+\ln k$~\cite{chen2007almost}} \\
\hline

{\scriptsize Static} & {\scriptsize Uncapacitated} & {\scriptsize Min Congestion*} & {\scriptsize $\kappa\ (\kappa\leq n)$}  & {\scriptsize $k \ (k\leq n)$} & {\scriptsize $O(\log^3 \kappa)$ (Thm.~\ref{thm-mergetree})$\dagger$}\\
\hline

{\scriptsize Static} & {\scriptsize Node} & {\scriptsize Max Demand} & {\scriptsize $n$}  & {\scriptsize $1$} & {\scriptsize $O(\log^6 n)$ with NBA$^{\mbox{\tiny\ref{NBA}}}$~\cite{shepherd2015polylogarithmic}} \\
\hline

{\scriptsize Static} & {\scriptsize Edge/Node} & {\scriptsize Max Demand}  & {\scriptsize $\kappa\ (\kappa\leq n)$ }  & {\scriptsize $1$} & {\scriptsize $O(\log^{10} \kappa)$ with NBA$^{\mbox{\tiny\ref{NBA}}}$ (Thm.~\ref{thm-StaticConfluent})$\dagger$}\\
\hline

{\scriptsize Dynamic} & {\scriptsize Edge} & {\scriptsize Max Flow Over Time} & {\scriptsize $\kappa \ (\kappa\leq n)$}  & {\scriptsize $1$} & {\scriptsize $(O(\log^2 \kappa),O(\log^8 n))$ (Thm.~\ref{thm-PolylogConFlowOverTime})$\dagger$}\\

\hline

{\scriptsize Dynamic} & {\scriptsize Edge} & {\scriptsize Quickest Flow} & {\scriptsize $\kappa \ (\kappa\leq n)$}  & {\scriptsize $1$} & {\scriptsize $(O(\log^8 n), O(\log^2 \kappa))$ (Thm.~\ref{thm-PolylogConQuickFlow})$\dagger$}\\

\hline

  \end{tabular}
  \end{footnotesize}
\renewcommand{\captionfont}{\scriptsize}
\renewcommand{\tablename}{*}
\vspace{-0.1cm}\caption*{*
Minimize the maximum node congestion in a network that admits a feasible splittable flow satisfying all supplies.
}
\renewcommand{\tablename}{$\dagger$}
\vspace{-0.3cm}
\caption  *{ \hspace{-.1cm} $\dagger$ These results hold with high probability, or more precisely, with probability $1-n^{-c}$, where $c$ is a constant.
}
\vspace{-0.2cm}
\renewcommand{\captionfont}{\small}
\renewcommand{\tablename}{Table}
\caption{Upper bounds on approximation ratio for variations of the Fixed-Sink Confluent Flow problem. 
The first three  items are for uncapacitated problems but are included here because they serve as the internal building blocks for the approximation algorithms for the capacitated problems.}

\label{tab-results2}
\vspace{-0.8cm}
\end{table*}

\vspace{-0.4cm}
\subsection{Our Results}
\label{subsec: our results}
Section \ref{Sec:2_Approx}
presents a simple proof that, unless $P=NP$,   $\forall \epsilon >0,$ it is impossible to construct a polynomial-time $3/2-\epsilon$ approximation algorithm
for the single-sink Quickest Flow problem when flows are restricted to be either unsplittable or confluent. This result 
holds
 for both directed and undirected graphs and even when the graph is  restricted to have  only one sink

Section \ref{Sec:Log_Hardness} proves, for the confluent directed graph case,   the much stronger result  that unless $P=NP$,
it is impossible to construct a polynomial-time
$O(\log n)$   approximation algorithm
for the single-sink Quickest Flow problem.  
 The major tool used is a modification of a grid graph construction from \cite{SV15}  which was an extension of one pioneered by~\cite{G2003}.
We note that our reduction is not the same as that in  \cite{SV15}. There,  the objective function was the maximum {\em amount} of static flow that could be pushed. Here, the objective function is the minimum amount of {\em time}  required to push the supplies. 
Our proof works by deriving new properties of the grid-graph. Section \ref{Sec:Max_Flow_Over_Time} extends the analysis to the Maximum Flow Over Time problem with our
 lower bounds on the approximation ratio being summarized in
Table~1.

We also note that it might seem intuitive that, because confluent flows are ``harder'' than static flows,  the non-approximability of confluent static flows, e.g.,  the result from  \cite{SV15}, should immediately imply the non-approximability of confluent dynamic flows.  This is not true, though.  The two problems are trying to optimize very different things, making them incomparable.  More specifically, in the static case, the goal is  {\em Demand Maximization}, i.e., to find a subset of the demands of maximum total value that can be confluently routed.  In the dynamic case, the goal is to find  a confluent routing of ALL demands in minimal time.  To appreciate the distinction it is instructive to examine  confluent routing on {\em trees}  where  the static problem is
NP-Hard 
\cite{Dressler2014} 
but the dynamic case is {\em polynomial-time} solvable~\cite{Mamada2005}.

Despite the non-approximability shown above for confluent dynamic flows, one might hope to create  {\em bicriteria $(\alpha, \beta)$ approximation}s\footnote{These will be formally introduced in  Definition \ref{def:bic}.}. 
However, in Section~\ref{Sec:Constant_Bi_Approximation}, we demonstrate that, for both directed and undirected graphs,  there exists   a constant $\alpha > 0$ such that, for any $\epsilon>0$, there is no polynomial-time  $(\frac{15}{14}-\epsilon,1+\alpha)$-approximation for the Unsplittable/Confluent Quickest Flow problem, unless $P=NP$. Similar results are obtained for the Unsplittable/Confluent Max Flow Over Time problem. Our proof
utilizes a reduction from  the Bounded Occurrence 3-Dimensional Matching problem.

In contrast to the above we show, in Section~\ref{Sec:Polylog_Approx_Confluent}, how to construct  a $(O(\log^8 n),O(\log^2 \kappa))$-approximation for  the Confluent Quickest Flow problem, where $\kappa$ is now the number of sources,  in  polynomial time. To this end, we use the idea of routing a confluent flow in a static {\em monotonic network}, i.e., one in which  each vertex is given an additional  {\em vertex capacity}  that satisfies that all  edges go from  a low-capacity node to a high-capacity one, which was introduced in \cite{shepherd2015polylogarithmic}.
  Recall that in our original confluent flow problem the support of the flow is  a tree.  In that tree, a parent node never supports less flow than its child.  So, intuitively,  a feasible confluent flow requires its tree support to be monotonic.
We develop new techniques (Theorem~\ref{thm-monotonicflow}) that permit constructing,
 in polynomial time, a confluent flow that routes all supplies in a given monotonic network, while bounding both {\em node congestion} and {\em flow length}.

Via this  monotonic technique, we build a novel  multi-layer monotonic network and construct a confluent static flow on it which is finally re-routed to produce a confluent dynamic flow for our original graph problem.
Our method guarantees that a {\em dynamic flow} can be found such that the total transit time is at most polylogarithmic factor times the optimal. Similarly,
this also lets us  develop a polynomial-time $(O(\log^2 \kappa),O(\log^8 n))$-approximation of the Confluent Maximum  Flow Over Time problem.

Our technique mainly differs from that in~\cite{shepherd2015polylogarithmic}  in constructing  {\em length-bounded} confluent flows in {\em static} networks (which might be of independent interest). It also permits us to improve their approximation algorithms when not all vertices are sources.
More specifically,  recall that  \cite{shepherd2015polylogarithmic} gives an
$O(\log^6 n)$
approximation algorithm for the demand maximization confluent
flow problem, with the no-bottleneck assumption (NBA)\footnote{\label{NBA}In node-/edge-capacitated networks, the NBA is that  $\max_{v\in V} d(v)\leq \min_{v\in V}c(v)$, and $\max_{v\in V} d(v)\leq \min_{e\in E}c(e)$, resp..}. If restricted to static networks, our technique can give an $O(\log^{10} \kappa)$
approximation for the same problem.  If $\kappa$ is bounded, for example, this gives a {\em constant} approximation, which is nearly optimal.

Our improvement to the approximation ratio comes through a combination of   (i) a novel construction of the multi-layer network, and (ii) a new building block inside our monotonic network technique---a better routing approach for {\em uncapacitated} networks (Theorem~\ref{thm-mergetree}). This will be discussed in more detail in  Section~\ref{Sec:Polylog_Approx_Confluent}.


Our  Theorem~\ref{thm-mergetree} enables us to route confluent flows in uncapacitated monotonic sub-networks with congestion bounded by poly$(\log \kappa)$ instead of poly$(\log n)$.
While this might look weak compared to the $1 + \ln k$ ($k$ being the number of sinks) bound from \cite{chen2007almost} this is only used as a subroutine.  In fact,  the internal constructions of both \cite{shepherd2015polylogarithmic}  and our proofs for approximating the {\em capacitated} static problem build uncapacitated sub-networks which can have $\Theta(n)$ induced sources and sinks.  Plugging in the bound of~\cite{chen2007almost} would give a poly$(\log n)$ bound.  We develop a new  combinatorial argument that, combined with our new poly$(\log \kappa)$ bounds for  uncapacitated monotonic sub-networks, gives a poly$(\log \kappa)$ bound for the capacitated one as well, yielding our Theorem~\ref{thm-monotonicflow-relaxed}. This leads us to the final improvement.

A chart  presenting previously known results and our new ones is given in
 Table~\ref{tab-results2}.

\vspace{-0.2cm}
\section{Preliminaries: Definitions and NP-Hard Problems}
\label{Sec:Preliminaries}

Let $I$ be some input to an optimization problem,  $OPT(I)$ be the {\em optimum} value to the given problem on $I$ and $|I|$ be its {\em size}.
As examples,
$I$ could be a dynamic flow problem on a graph with $n$ vertices  and $m$ edges. We could have just as easily defined $|I|= m+n$.

\MyHide{Let ${\cal A}$  be some algorithm for approximately solving a given optimization problem with $A(I)$ being the value of the solution produced by ${\cal A}$ on input instance $I$.
For $\rho > 0$, we call ${\cal A}$ as a {\em $\rho$-approximation algorithm} if either (i) for all instances $I$ in the minimization problem, $A(I) \le \rho \cdot OPT(I)$, or (ii) for all instances $I$ in the maximization problem,  $\rho \cdot A(I) \ge  OPT(I)$.}

We now define {\em bicriteria approximation}s for the two-objective optimization problem.

\vspace{-0.2cm}\begin{definition}[Bicriteria Approximation]
For any $\alpha, \beta>0$, an $(\alpha, \beta)$-approximation algorithm ${\cal A}$ for the two-objective optimization problem is a function that takes as input any parameter $k$ and any instance $I$,  and outputs a solution $x$ such that
\begin{enumerate}
\item  $\alpha f(x)\geq f(x^*),\ g(x)\leq \beta k$, if the optimization problem is to find a solution $x$ maximizing the cost function $f(x)$ subject to another cost function $g(x)\leq k$,
\item  $f(x)\leq \alpha  f(x^*),\ \beta g(x)\geq  k$, if the optimization problem is to find a solution $x$ minimizing the cost function $f(x)$ subject to another cost function $g(x)\geq k$,
\end{enumerate}
where $x^*$ is the optimal solution for the input $I$ and $k$.
%
\label{def:bic}
\end{definition}

We can actually define two different types of confluent flows:
\vspace{-0.2cm}\begin{definition}
A flow in $G$ is   {\em node-confluent} if, for every  vertex $v$, all flow leaving $v$ leaves along the same edge.
A flow in $G$ is   {\em edge-confluent} if, for every  edge $e=(u,v)$ if all flow that passes through $e$ must leave $v$ through the same edge $(v,w)$.
\end{definition}
\vspace{-0.2cm}
In this paper the term ``confluent'', when used alone, will denote node-confluence.  When edge-confluence is needed (in some proofs) it will be explicitly specified.

Finally we will use  the following NP-hard problems in  our reductions:


\vspace{-0.2cm}\begin{definition}{\bf The Two-Disjoint Paths (Uncapacitated) Problem:}
Given a graph $G$ and node pairs $\{x_1, y_1\}$ and $\{x_2, y_2\}$,
decide if $G$ contains paths $P_1$ from $x_1$ to $y_1$ and $P_2$ from $x_2$ to $y_2$ such that they are disjoint.
\end{definition}\vspace{-0.2cm}
In undirected graphs the Two-Disjoint Paths (Uncapacitated) problem, for both edge-disjoint and node-disjoint paths, is polynomial-time solvable~\cite{Robertson}. However, in directed graphs, the problem is NP-hard for both edge-disjoint and node-disjoint paths~\cite{FHW}.

\vspace{-0.1cm}\begin{definition}{\bf The Two-Disjoint Paths (Capacitated) Problem:}
 Let $G$ be a (static) graph whose edges are labelled   either $\alpha$ or $\beta$ with $\beta\geq \alpha$. These labels are the  {\em capacities} of the edges. Given node pairs $\{x_1, y_1\}$ and $\{x_2, y_2\}$, decide whether $G$ contains paths $P_1$ from $x_1$ to $y_1$ and $P_2$ from $x_2$ to $y_2$ such that:
\begin{itemize}
\item[i.] $P_1$ and $P_2$ are disjoint (node-disjoint or edge-disjoint);
\item[ii.] $P_2$ may only use edges of capacity $\beta$ ($P_1$ may use both capacity $\alpha$ and capacity $\beta$ edges).
\end{itemize}
\end{definition}
The version of node-disjoint  paths  was proven to  be NP-hard for undirected graphs by~\cite{G2003}.
The version of edge-disjoint paths was proven to be NP-hard by~\cite{NSV10}.

\vspace{-0.2cm}\begin{definition}[The Bounded Occurrence 3-Dimensional Matching Problem (BO3DM)]
Suppose there are three disjoint sets $A=\{a_1,...,a_n\}$, $B=\{b_1,...,b_n\}$
and $C=\{c_1,..,c_n\}$, and a set $T=\{T_{\mu}\in A\times B\times C:\mu\in[m]\}$ such that each element of $A, B, C$ occurs in the same constant number $M$ of triples in $T$. The goal is to find the largest subset $T'\subset T$ such that all triples in $T'$ are disjoint, i.e., no two elements of $T'$ contain the same element of $A,B,C$.
\end{definition}
\vspace{-0.2cm}
\cite{G14} shows that  there exists an $\epsilon_0>0$ such that it is NP-hard to
decide whether there exist $n$ disjoint triples in $T$ ({\em satisfiable} instance) or there
exist at most $(1-\epsilon_0)n$ disjoint triples in $T$ ({\em $\epsilon_0$-unsatisfied} instance).



{\em Dynamic Flows.} We first describe the mechanics of flow over one edge $e=(u,v)$ with capacity $c$ and length $\ell$.  Suppose there  are  $d$ units of supply on node $u.$  Assume the discrete case in which $d,c,\ell$ are all integral and all $d$ need to be moved from $u$ to $v$.  Items move in groups of size at most $c$, with one group entering $e$ each time unit.  Thus, the items are transported in $\lceil d/c\rceil$ groups.  It takes $\ell$ time units for the first group  to arrive at $v$.  Since the groups left $u$ at consecutive time units they arrive at $v$ in consecutive time units.  Thus, it requires
$\lceil d/c\rceil - 1 + \ell$ time to move all items from $u$ to $v$ over $e.$
Also, in both cases, if other items arrived at $u$ wanting to enter $e$ they would have to wait until all items already at $u$ had departed before entering $e$. To provide intuition, we  give examples of quickest flows in the unsplittable/confluent cases in Figure \ref{fig:Types of Flow}.

Finally, we introduce some notations. A flow $f$ is {\em feasible} if $\forall e \in E$, $f(e)\leq c(e)$. For any $e\in E$, we define its {\em edge congestion} as $EC(e):=f(e)/c(e)$. Under certain circumstance, we may introduce the {\em node capacity} $c(v)$ of $v\in V$, and define its {\em node congestion} $NC(v):=f^{out}(v)/c(v)$, where $f^{out}(v)$ is the total flow out of $v$. For a flow $f$, we let its edge congestion $EC(f):=\max_{e\in E}EC(e)$ and node congestion $NC(f):=\max_{v\in V\setminus \{t_1,...,t_k\}}NC(v)$, where $t_1,...,t_k$ are sinks.

A static flow $f$ can be specified by a collection of source-sink paths $\mathcal{P}=(P_1,...,P_K)$ and corresponding flow values $f_1,...,f_K$. We define the {\em length} of flow $f$ as $L(f):=\max_{i\in[k]}L(P_i)$, where $L(P_i):=\sum_{e\in P_i}\ell(e)$ is the length of $P_i$. $f$ is called as {\em $L$-length-bounded} for some $L\in\mathbb{R}^+$ if $L(f)\leq L$, i.e., no
path in $\mathcal{P}$ has path length longer than $L$. Also, if all $f_i$'s are identical, we call $f$ as {\em uniform}.


\section{Approximation Hardness for Unsplittable/Confluent Dynamic Flows}
\label{Sec:Approx_Hardness}

\subsection{Constant Approximation Hardness of the Quickest Flows Problem}
\label{Sec:2_Approx}

This section gives a simple proof  that a  polynomial-time constant approximation algorithm for the single-sink  Unsplittable/Confluent Quickest Flow problem would imply  $P=NP$  (proof in
Appendix~\ref{apd:Proof_of_Apd_2_Approx}).
\MyHide{ These results  
hold
in both Fixed-Sink and Sink-Location versions (proofs in
Appendix~\ref{apd:Proof_of_Apd_2_Approx}).}

\begin{theorem}\label{thm-2approfixed}
The  single-sink Unsplittable/Confluent Quickest Flow problem in both directed and undirected graphs
cannot be approximated to within a factor $3/2-\epsilon$, for any $\epsilon>0$, unless $P=NP$.
\end{theorem}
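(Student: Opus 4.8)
The plan is to reduce from the directed Two-Disjoint Paths (Uncapacitated) Problem, which the excerpt states is NP-hard. Given an instance with node pairs $\{x_1,y_1\}$ and $\{x_2,y_2\}$, I would build a dynamic flow network in which two specified sources must route their supplies to a single sink $t$, and I would arrange lengths and capacities so that the quickest flow takes ``small'' time if and only if the two demands can travel along \emph{disjoint} paths. The key idea is that, under the unsplittable/confluent restriction, each source's entire supply is forced onto a single path; if the two chosen paths are forced to share an edge (or node), then congestion forces one group of supply to wait for the other, adding a detectable delay. If the paths can be made disjoint, no such waiting occurs and the evacuation finishes faster.

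Concretely, I would take a copy of the directed graph $G$, give every original edge length $0$ (or a uniform small length) and large capacity so that travel within $G$ is essentially instantaneous and uncongested, then attach the sources and sink through carefully calibrated ``gadget'' edges. The first step is to install a source $s_1$ feeding $x_1$ and a source $s_2$ feeding $x_2$, each carrying one unit of supply, and to merge $y_1,y_2$ into the sink $t$ via unit-capacity exit edges. The crucial second step is to choose capacities so that any path carrying a supply unit that is forced to overlap with the other supply incurs a unit of congestion delay: with unit capacities on the shared resource, two units cannot enter in the same time step, so the later unit waits one time unit, i.e., the completion time jumps from some value $T_0$ to $T_0+1$. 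Calibrating so that the disjoint (``yes'') case finishes in time $2$ and the non-disjoint (``no'') case requires time at least $3$ then yields the $3/2$ threshold: any algorithm with ratio strictly better than $3/2$ would distinguish the two cases and decide Two-Disjoint Paths.

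The main technical step is the equivalence argument: I would show (i) if disjoint paths $P_1,P_2$ exist, then routing $s_i$'s supply along $P_i$ gives a feasible confluent flow of completion time exactly $2$, and (ii) conversely any confluent (hence unsplittable) flow of completion time $2$ induces two disjoint paths in $G$, since confluence forces each source's supply onto one path and the time bound forbids any shared-resource congestion delay. Part (ii) is where confluence does the real work: it guarantees the two supplies cannot ``split and merge,'' so an overlap is genuinely a shared path rather than a transient coincidence. I would handle both node-disjoint and edge-disjoint versions by placing the bottleneck unit capacities on nodes or edges respectively, and I would note that the undirected hardness follows from the same construction by appealing to the capacitated Two-Disjoint Paths problem (NP-hard in the undirected case, per the excerpt), substituting capacity bottlenecks for the disjointness obstruction.

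The hardest part will be pinning down the timing arithmetic so that the ``yes'' and ``no'' instances separate cleanly at the ratio $3/2$ while keeping the network genuinely single-sink and the flow genuinely confluent (not merely unsplittable). In particular I must ensure that in the ``no'' case \emph{every} confluent routing, not just the obvious one, suffers the extra delay; this requires verifying that confluence plus the capacity bottleneck leaves no alternative schedule that avoids the one-unit wait. I expect this case analysis, rather than the construction itself, to carry the weight of the proof.
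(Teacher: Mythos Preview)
Your overall strategy---reduce from Two-Disjoint Paths, force each supply onto a single path, and engineer a gap---is the same as the paper's. But the specific construction you sketch has a real hole, and the paper's proof differs from yours in exactly the way needed to close it.

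\textbf{The crossing problem.} You attach $s_1$ to $x_1$, $s_2$ to $x_2$, and both $y_1,y_2$ to a common sink $t$. Nothing in your gadget forces $s_1$'s supply to exit through $y_1$ and $s_2$'s through $y_2$. So in a \textsc{No} instance of (uncapacitated) $2$-Disjoint Paths it may still be that there are disjoint paths $x_1\to y_2$ and $x_2\to y_1$ inside $G$; routing the two unit supplies along those crossed paths gives completion time~$2$ with no congestion at all, and your algorithm cannot distinguish \textsc{Yes} from \textsc{No}. Your description also wavers between giving internal edges ``large capacity'' (so sharing them causes no delay) and ``unit capacity on the shared resource'' (so it does); either choice alone leaves the crossing loophole open.

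\textbf{How the paper fixes this.} The paper reduces from the \emph{Capacitated} Two-Disjoint Paths problem, with two capacity values $\alpha<\beta=2\alpha$, and builds an \emph{asymmetric} gadget: $(s_1,x_1)$ and $(y_1,t)$ have capacity $\alpha$, while $(s_2,x_2)$ and $(y_2,t)$ have capacity $\beta$; the supplies are $M\alpha$ and $M\beta$ with $M$ huge. Now a crossed routing that sends $s_2$'s large supply out through $(y_1,t)$ hits an $\alpha$-capacity bottleneck and is slow. More generally, in the \textsc{No} case either the two paths share an edge (throughput $\le\beta=2\alpha$) or $s_2$'s path must use some $\alpha$-edge (throughput $\le 2\alpha$), so the total time is at least $3M\alpha/(2\alpha)=3M/2$, against $M+p+2$ in the \textsc{Yes} case. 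The large supplies turn the argument into a clean throughput bound, avoiding the delicate one-time-step scheduling analysis you flagged as ``the hardest part.'' This asymmetric, capacitated reduction is also what makes the undirected case go through uniformly (uncapacitated $2$-Disjoint Paths is polynomial in undirected graphs), which you noted but did not actually build.

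In short: your plan would go through once you (i) use the capacitated problem with two distinct capacity levels and asymmetric source/sink gadget edges, and (ii) replace unit supplies by supplies of size $\Theta(M)$ so the gap is governed by sustained throughput rather than a single unit of queueing delay. That is precisely the paper's construction.
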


\vspace{-0.5cm}
\subsection{Logarithmic Approximation Hardness of Confluent Quickest Flows}
\label{Sec:Log_Hardness}

For the single-sink directed Confluent Quickest Flow problem we now derive a much stronger result   than in the previous section. That is, it is
 NP-hard to even  get a $O(\log n)$ approximation to the optimal solution.


To prove the logarithmic approximation hardness, we construct the following  instance.

{\bf Hard instance.}
Before building the desired  hard instance, we describe the dynamic half-grid network $G_N$. It can be viewed as an extension of the static half-grid graph in \cite{SV15}. There are $N$ rows (numbered from bottom to top) and $N$ columns (numbered from right to left). All the edges in the $i$-th row and all the edges in the $i$-th column have capacity $1/i$. The $i$-th row extends as far as the $i$-th column and vice versa. The sink $t$, located at the bottom of the half-grid, is connected with the bottom node $t_i$ of the $i$-th column by an edge of capacity $1/i$. Also, at the leftmost node of the $i$-th row, there is a source $s_i$ with supply $M^2/i$, where $M$ is a sufficiently large constant.  We set all edge lengths as $1$, and always enforce edge directions to
be downwards and to the right. The half-grid is given in Figure~\ref{fig:half-grid}.

Suppose we are now given an instance  $\mathcal{I}$ of the directed node-disjoint version of the Two-Disjoint Paths (Uncapacitated) problem. We replace each 4-degree node in the half-grid by a copy of $\mathcal{I}$. Inside the copy, all edges   have length $1$. Consider the copy of $\mathcal{I}$  at the
intersection of the $i$-th column and $j$-th row (with $j>i$) in $G_N$. That instance is incident to two edges of capacity $1/i$
and two edges of capacity $1/j$.
Inside that $\mathcal{I}$, we let the edges of capacity $1/j$ be incident
to $x_1$ and $y_1$, and the edges of capacity $1/i$ be incident to $x_2$ and $y_2$; we set all edge capacities to  $1/i$. This completes the hard instance of directed confluent dynamic flows. Denote the constructed network as $\mathcal{G}$.

Utilizing $\mathcal{G}$, we obtain the logarithmic approximation hardness for the Confluent Quickest Flow problem. The proof (Appendix~\ref{App:ApdLogHardness}) works by showing that if we could get a logarithmic approximation, we could solve $\mathcal{I}$. 

\begin{theorem}\label{thm-conlogappro}
The single-sink Confluent Quickest Flow problem in directed graphs
cannot be approximated to a factor within $O(\log n)$,

unless $P=NP$.
\end{theorem}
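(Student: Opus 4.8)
The statement is proved by a gap-producing reduction from the directed node-disjoint Two-Disjoint Paths (Uncapacitated) problem, which is NP-hard~\cite{FHW}, into the single-sink directed Confluent Quickest Flow problem using the network $\mathcal{G}$ built above. Given one instance $\mathcal{I}$, I place a copy of $\mathcal{I}$ at every degree-$4$ node of the half-grid $G_N$ to form $\mathcal{G}$. Since $|\mathcal{I}|$ is fixed and $\mathcal{G}$ has $n=\mathrm{poly}(N,|\mathcal{I}|)$ nodes, we have $\log n=\Theta(\log N)$, so it suffices to exhibit an $\Omega(\log N)$ multiplicative gap between the optimal makespans in the two cases, \emph{$\mathcal{I}$ satisfiable} (every embedded gadget is a YES-instance) versus \emph{$\mathcal{I}$ unsatisfiable} (every gadget is a NO-instance). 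A polynomial-time $O(\log n)$-approximation with a small enough constant, run on $\mathcal{G}$, would then separate the two cases and thereby decide $\mathcal{I}$.

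\textbf{Completeness (satisfiable case).} Assuming $\mathcal{I}$ has node-disjoint paths $P_1\colon x_1\to y_1$ and $P_2\colon x_2\to y_2$, I would build an explicit confluent flow in which each source $s_i$ sends its supply $M^2/i$ along its canonical L-shaped route to the matching sink edge $t_i$--$t$ (see Figure~\ref{fig:half-grid}), so that this route uses only edges of capacity $1/i$ on its own row and column, passes through thicker internal gadget edges, and meets the route of any other source only inside an intersection gadget. At the gadget at column $i$ and row $j$ ($j>i$), I route the horizontal row-$j$ flow (rate $1/j$) through $P_1$, which joins the two capacity-$1/j$ edges, and the vertical column-$i$ flow (rate $1/i$) through $P_2$, which joins the two capacity-$1/i$ edges. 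Node-disjointness of $P_1,P_2$ guarantees that no node carries both flows, so the flow is confluent, and since every internal edge has capacity $1/i\ge 1/j$ it is feasible. The bottleneck rate along $s_i$'s route is exactly $1/i$, so by the edge mechanics described in the Preliminaries the makespan for $s_i$ is $\lceil (M^2/i)/(1/i)\rceil-1+L_i=M^2-1+L_i$, where $L_i=O(N\,|\mathcal{I}|)$ is the route length. Choosing $M$ polynomially large but exceeding this length gives $\mathrm{OPT}(\mathcal{G})\le T_{\mathrm{good}}:=M^2(1+o(1))$.

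\textbf{Soundness (unsatisfiable case).} This is the crux and the main obstacle: I must show that if $\mathcal{I}$ is unsatisfiable then \emph{every} confluent flow on $\mathcal{G}$---not merely the canonical one---has makespan $\Omega(M^2\log N)$. The key structural fact to establish is that a NO-gadget can no longer act as a crossing. Since there are no two node-disjoint paths $x_1\to y_1$ and $x_2\to y_2$, any flow entering a gadget both horizontally and vertically must route the two through a shared node, where confluence forces them to merge and leave on a single outgoing edge; thus at a NO-gadget a flow cannot simultaneously continue horizontally \emph{and} descend to a lower-index column. I would then prove a no-crossing (monotonicity) invariant for arbitrary confluent flows on the directed down-and-right half-grid: because flows can never cross, the sources cannot spread their supplies across the thin sink edges, and the repeated merging propagates down the in-tree support so that a constant fraction of the total supply $\sum_{i=1}^N M^2/i=M^2 H_N=\Theta(M^2\log N)$ is forced across a bottleneck cut of total capacity $O(1)$ near the sink. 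The makespan is at least (supply across the bottleneck)$/$(its capacity), which is $\Omega(M^2\log N)=:T_{\mathrm{bad}}$. Tracking how merges cascade through the support to a low-capacity cut, uniformly over all confluent flows, is the delicate step where the new grid-graph properties and the directedness (which underlies NP-hardness of the embedded gadget) are essential.

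\textbf{Conclusion.} Combining the two bounds yields $T_{\mathrm{bad}}/T_{\mathrm{good}}=\Omega(\log N)=\Omega(\log n)$, so there is a constant $c>0$ such that a polynomial-time $(c\log n)$-approximation algorithm, applied to $\mathcal{G}$, outputs a value below $T_{\mathrm{bad}}$ exactly when $\mathcal{I}$ is satisfiable and hence decides whether $\mathcal{I}$ admits two disjoint paths. As that problem is NP-hard, this forces $P=NP$, proving the theorem. The only routine points to verify are that $M$ (and hence $\mathcal{G}$) remains polynomial in the input size and that $\log n=\Theta(\log N)$, both immediate from the construction.
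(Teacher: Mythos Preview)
Your proposal is correct and follows the paper's approach: canonical L-shaped routes give makespan $M^2+O(Np)$ in the YES case (the paper's Lemma~\ref{lmm-singlenode1}), and in the NO case the non-crossing property of the induced trees in $G_N$ forces all supply through a constant-capacity cut, giving makespan $\ge M^2 H_N/2$ (Lemmas~\ref{lmm-treecut} and~\ref{lmm-singlenode2}). Two clarifications are worth making. First, the cut is not a single bottleneck ``near the sink'': the paper builds it \emph{recursively} over nested subgrids $G(r_j,l_j)$ of $G_N$, selecting at most two column edges at each level and bounding the total weight by~$2$ via a telescoping estimate; this recursive cut construction is precisely the ``delicate step'' you flag, and it is the main technical content you would still have to supply. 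Second, ``$|\mathcal{I}|$ is fixed'' is misleading: $p=|\mathcal{I}|$ is the input size of the reduction, and $N$ must be chosen as a polynomial in $p$ (the paper takes $N=\Theta(p^{(1/\epsilon-1)/2})$ for small constant~$\epsilon$) so that the reduction runs in polynomial time while still yielding $H_N=\Theta(\log N)=\Theta(\log n)$.
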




\vspace{-0.4cm}
\subsection{Approximation Hardness of the Max  Flow Over Time Problem}
\label{Sec:Max_Flow_Over_Time}
This section discusses the approximation hardness of the single-sink Unsplittable and Confluent Maximum Flow Over Time problem.  


To derive the approximation hardness of the  Unsplittable Maximum Flow Over Time problem, we will again reduce from the directed/undirected edge-disjoint version of Two-Disjoint Paths (Capacitated) problem.
We construct the same network as in Section~\ref{Sec:2_Approx} and utilizing this constructed network, we show (see also proofs in Appendix~\ref{App: Apd MFT})

\begin{theorem}\label{thm-unspovertime}
The single-sink  Unsplittable Maximum Flow Over Time problem in both directed and undirected graphs
cannot be approximated to  a factor within $3/2-\epsilon$, for any $\epsilon> 0$, unless $P=NP$.
\end{theorem}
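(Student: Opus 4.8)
The plan is to recycle the network $\mathcal{N}$ built for the Quickest Flow reduction of Theorem~\ref{thm-2approfixed}, which is obtained from an instance of the edge-disjoint Two-Disjoint Paths (Capacitated) problem by attaching two unlimited-supply sources feeding $x_1,x_2$ and routing $y_1,y_2$ into the single sink $t$. The only change for the Max Flow Over Time setting is that I fix a time horizon $T$, which I take to be the value at which the optimal quickest flow of a YES-instance just finishes. I then argue that the maximum amount of unsplittable flow reaching $t$ by time $T$ exhibits a multiplicative gap of $3/2$ between YES- and NO-instances of the Two-Disjoint Paths problem, so that any polynomial-time $(3/2-\epsilon)$-approximation would decide that NP-hard problem.

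For \emph{completeness} (the YES-case), if edge-disjoint paths $P_1$ (from $x_1$ to $y_1$, using any capacity) and $P_2$ (from $x_2$ to $y_2$, using $\beta$-edges only) exist, I route the two sources unsplittably along $P_1$ and $P_2$. Edge-disjointness guarantees the two flows never contend for capacity, so each pumps at its path's bottleneck rate for the entire horizon. A direct count of how much flow each stream delivers in $T$ time units, using the $\lceil d/c\rceil - 1 + \ell$ mechanics from the Preliminaries, shows the total reaching $t$ is at least a value $F_{\mathrm{yes}}$; I will choose $\alpha,\beta$ and $T$ so that $F_{\mathrm{yes}}=3k$.

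For \emph{soundness} (the NO-case), which I expect to be the main obstacle, I must bound the throughput of \emph{every} unsplittable routing, not only the intended one. The crux is a normalization/exchange argument: a feasible unsplittable solution is specified by a single path per source, and if no disjoint $P_1,P_2$ with $P_2\subseteq\{\beta\text{-edges}\}$ exist, then for every admissible pair of paths either $P_2$ is forced onto an $\alpha$-capacity (low-rate) edge, or $P_1$ and $P_2$ share an edge whose per-time-unit capacity caps their combined rate. In either event the aggregate rate into $t$ over the horizon is throttled, and I bound the delivered flow by $F_{\mathrm{no}}\le 2k=\tfrac23 F_{\mathrm{yes}}$. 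Proving this uniformly over all path choices, so that no asymmetric or deliberately congested routing beats the bound, is the delicate step, and it is exactly where the $\alpha/\beta$ capacity labelling and the edge-disjointness of the source problem do the work.

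Finally, for a maximization problem a $(3/2-\epsilon)$-approximation returns a value $A$ with $(3/2-\epsilon)A\ge OPT$, hence $A\ge OPT/(3/2-\epsilon)$. In a YES-instance $A\ge 3k/(3/2-\epsilon)>2k$, whereas in a NO-instance $A\le OPT\le 2k$; thresholding at $2k$ therefore decides the Two-Disjoint Paths (Capacitated) instance in polynomial time. Since the edge-disjoint version of that problem is NP-hard for the relevant graph classes (\cite{NSV10}), this yields the claimed non-approximability in both directed and undirected graphs unless $P=NP$.
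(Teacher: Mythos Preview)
Your proposal is correct and follows essentially the same route as the paper: you reuse the network from Theorem~\ref{thm-2approfixed} (the reduction from the edge-disjoint Two-Disjoint Paths (Capacitated) problem), fix a time horizon, and argue a $3\alpha$-vs-$2\alpha$ rate gap between YES and NO instances. The paper sets $T=M+p$ and computes $F_{\mathrm{yes}}\ge 3\alpha M$ versus $F_{\mathrm{no}}\le 2\alpha(M+p)$, so the ratio is $\tfrac{3M}{2(M+p)}$ rather than exactly $3/2$; your clean ``$3k$ vs.\ $2k$'' thresholding therefore needs the $M\gg p$ limit made explicit (as the paper does by letting $\epsilon=\tfrac{3p}{2(M+p)}$), but otherwise the arguments coincide.
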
\vspace{-0.2cm}

Although the above hard instance applies to the confluent flow, we present a stronger lower bound for the Confluent Maximum  Flow Over Time in directed graphs.
\begin{theorem}\label{thm-confluentovertime}
The single-sink Confluent Maximum  Flow Over Time problem in directed graphs
cannot be approximated to  a factor within $O(\log n)$,
unless $P=NP$.
\end{theorem}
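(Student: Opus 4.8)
The plan is to recycle the very hard instance $\mathcal{G}$ built for Theorem~\ref{thm-conlogappro} and to observe that the logarithmic gap proved there is, at heart, a gap in the \emph{rate} at which a confluent flow can be pushed into the sink. Concretely, let $v_{\mathrm{conf}}(\mathcal{G})$ denote the value of a maximum \emph{static} confluent flow in $\mathcal{G}$, i.e. the largest amount of flow per unit time that can be routed to $t$ subject to node-confluence. The argument underlying Theorem~\ref{thm-conlogappro} in fact establishes a dichotomy: when the embedded instance $\mathcal{I}$ is a \textsc{Yes}-instance of the directed node-disjoint version of Two-Disjoint Paths, the disjoint paths let the columns of the half-grid operate in parallel and some confluent tree achieves $v_{\mathrm{conf}}(\mathcal{G}) = \Theta(\sum_i 1/i) = \Theta(\log n)$ along paths of length $O(N)$; when $\mathcal{I}$ is a \textsc{No}-instance, the absence of disjoint paths forces every confluent tree through a low-capacity bottleneck, so that $v_{\mathrm{conf}}(\mathcal{G}) = O(1)$. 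This is precisely what drives the quickest-flow gap there: routing the total demand $D=\sum_i M^2/i = \Theta(M^2 \log n)$ needs time at least $D/v_{\mathrm{conf}}(\mathcal{G})$, which is $\Theta(M^2)$ in the \textsc{Yes} case and $\Omega(M^2\log n)$ in the \textsc{No} case. I would first isolate this dichotomy as a standalone lemma.

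Given the dichotomy, the Maximum Flow Over Time claim follows by sandwiching the amount deliverable by a horizon $T$. I fix $T$ to be polynomially large in $N$ (say $T = N^c$ for a suitable constant $c$, so that $T$ dwarfs $N$ and hence all path lengths in $\mathcal{G}$). For the upper bound, note that at every instant the flow crossing into $t$ respects the confluence constraint, so its aggregate rate never exceeds $v_{\mathrm{conf}}(\mathcal{G})$; integrating, the total flow reaching $t$ by time $T$ is at most $T\cdot v_{\mathrm{conf}}(\mathcal{G})$ plus a transient term bounded by the total edge capacity times the maximum path length, i.e. $O(N\log n)$. For the lower bound, in the \textsc{Yes} case I temporally repeat the confluent tree attaining rate $\Theta(\log n)$ along paths of length $O(N)$; this delivers at least $\Theta(\log n)\cdot (T-O(N))$ units by time $T$. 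Plugging in $T=N^c$, the \textsc{Yes} instance delivers $\Omega(T\log n)$ while the \textsc{No} instance delivers at most $O(T)$, a multiplicative gap of $\Omega(\log n)$.

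Consequently, any polynomial-time $o(\log n)$-approximation for the single-sink Confluent Maximum Flow Over Time problem would, on input $(\mathcal{G},T)$, distinguish the \textsc{Yes} from the \textsc{No} case and hence decide the directed node-disjoint version of Two-Disjoint Paths, which is NP-hard; this yields the theorem. Since $n = |V(\mathcal{G})| = \Theta(N^2\,|\mathcal{I}|)$, the gap $\Omega(\log N)$ is indeed $\Omega(\log n)$.

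I expect the main obstacle to be the \textsc{No}-case upper bound $v_{\mathrm{conf}}(\mathcal{G}) = O(1)$, made fully robust for the dynamic setting: one must argue that no confluent routing---even one that buffers supply at intermediate nodes or lets the chosen out-edges vary over time---can sustain an aggregate arrival rate better than $O(1)$ when $\mathcal{I}$ is unsatisfiable. This amounts to propagating the purely local obstruction ``no two disjoint paths inside each copy of $\mathcal{I}$'' into a single global rate bottleneck across the grid, exactly the grid-graph structural property derived for Theorem~\ref{thm-conlogappro}. The remaining steps---the temporally repeated \textsc{Yes} flow and the choice of $T$---are routine once the dichotomy lemma is in hand.
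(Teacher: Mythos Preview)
Your proposal is correct and essentially identical to the paper's proof. Both reuse the half-grid instance $\mathcal{G}$ from Theorem~\ref{thm-conlogappro}; the dichotomy you isolate is precisely Lemma~\ref{lmm-treecut} (every confluent tree in the \textsc{No} case admits a cut of weight at most~$2$, so at most $2T$ units reach $t$), while the \textsc{Yes} case uses the canonical paths to deliver $H_N\cdot(T-O(Np))$ units. The only cosmetic difference is your horizon $T=N^c$ versus the paper's $T=M^2+Np+N+2$; either choice works once $T$ dominates the path lengths, and the parameter balancing $N$ against $p=|\mathcal{I}|$ to get $H_N=\Theta(\log n)$ is the same as in Theorem~\ref{thm-conlogappro}.
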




\section{Constant Bicriteria Approximation Hardness of Dynamic Flows}
\label{Sec:Constant_Bi_Approximation}

This section first proves the NP-hardness of  constant bicriteria approximations for the Unsplittable and Confluent Maximum Flow Over Time problems.

Our proof uses reductions from  the BO3DM problem.
Inspired by the reduction\footnote{Even though we are reducing to the same problem note that our goal differs from~\cite{G2003}, which
aims at finding a maximum number of length-bounded edge-disjoint paths.  For technical reasons, this requires us to develop a totally different bounding technique.}
presented in~\cite{G2003, G21}, given an instance of BO3DM, we construct the following corresponding hard instance for the Unsplittable/Confluent Maximum Flow Over Time problem in undirected graphs. Note that the directed case is similar, except that we enforce all edge directions to point  right.  Suppose we are given an instance $\mathcal{I}$ of Bounded Occurrence 3-Dimensional Matching problem. Denote the $\mu$-th triple $T_{\mu}$ as $(a_{p_{\mu}},b_{q_{\mu}},c_{r_{\mu}})$, where $p_{\mu},q_{\mu},r_{\mu}\in[n]$. We build an undirected graph $G=(V,E)$ (shown in Figure~\ref{fig:bicriteria}), where
\vspace{-0.2cm}
\begin{eqnarray}
 V&=&\{s,t\}\cup\{a_{il}:i\in[n],l\in[M-1]\}\cup\{s_i,b_i,c_i:i\in[n]\}\cup\{s'_{\mu},x_{\mu},y_{\mu}:\mu\in[m]\},\nonumber\\
 E&=&\{(s_i,s),(s,b_i),(c_i,t),(a_{il},t):i\in[n],l\in[M-1]\}\nonumber\\
 &&\cup\{(s'_{\mu},s),(s,x_{\mu}),(y_{\mu},a_{p_{\mu}l}):\mu\in[m],l\in[M-1]\}\nonumber\\
&&\cup\{(b_{q_{\mu}},x_{\mu}),(x_{\mu},y_{\mu}),(y_{\mu},c_{r_{\mu}}):\mu\in[m]\}.\nonumber
\end{eqnarray}\vspace{-0.66cm}

Hereby, $G$ contains a vertex representing each element in the sets $B$ and $C$, and $(M-1)$ copies of each element in $A$. Also, $G$ contains a sink $t$, and sources $s_i\ (i\in[n])$, $s'_{\mu}\ (\mu\in[m])$ as well as one more node $s$ ($s$ is removed when considering confluent flows). Meanwhile, for each triple $T_{\mu}$ in $T$, there are two vertices $x_{\mu}$, $y_{\mu}$ to represent it. We connect $s_i$ with $s$, and $s$ with $b_i$ for each $i\in[n]$; we also connect $s'_{\mu}$ with $s$, and $s$ with $x_{\mu}$ for each $\mu\in[m]$. Similarly, we connect $t$ with $a_{il}$, $c_i$ for each $i\in[n]$ and $l\in[M-1]$. For each tuple, $T_{\mu}=(a_{p_{\mu}},b_{q_{\mu}},c_{r_{\mu}})$, we connect  $x_{\mu}$ with $b_{q_{\mu}}$, and $y_{\mu}$ with $c_{r_{\mu}}$ as well as $(M-1)$ copies of $a_{p_{\mu}}$.

{\em Edge capacities and lengths.} All edge capacities are set as 1. See Figure~\ref{fig:bicriteria} for details. Let each $(s'_{\mu},x_{\mu})$ have length 5 (red edges), and each $(y_{\mu},c_{r_{\mu}})$ have length 4 (green edges), and each $(a_{il},t)$ have length 3 (blue edges), and all other edges have length  2 (black edges). Finally, we set the time horizon $T=14$ in the constructed graphs for the  Unsplittable/Confluent Maximum Flow Over Time problems.
Based on the constructed instance, we have (proof in Appendix \ref{App: Proofs of Constant Bicriteria Approximation Hardness})
\begin{theorem}\label{thm-biapproxhardQuick}
There exists a constant $\alpha>0$ such that, for any $\epsilon>0$, there is no polynomial-time  $(1+\alpha,\frac{15}{14}-\epsilon)$-approximation for the Unsplittable/Confluent Maximum Flow Over Time problem in both directed and undirected graphs, unless $P = NP$.
\end{theorem}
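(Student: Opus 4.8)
The plan is to reduce from the Bounded Occurrence 3-Dimensional Matching problem (BO3DM), using the graph $G$, the edge lengths (red/green/blue/black $=5/4/3/2$) and the time horizon $T=14$ exactly as constructed above. By~\cite{G14} there is a fixed constant $\epsilon_0>0$ for which it is NP-hard to decide whether a BO3DM instance $\mathcal{I}$ is \emph{satisfiable} (admits $n$ pairwise disjoint triples) or \emph{$\epsilon_0$-unsatisfied} (admits at most $(1-\epsilon_0)n$ pairwise disjoint triples). I would show that these two cases force a constant multiplicative gap in the amount of flow that can be delivered to $t$ within the (relaxed) deadline; a $(1+\alpha,\tfrac{15}{14}-\epsilon)$-approximation in the sense of Definition~\ref{def:bic} could then separate them, contradicting $P\neq NP$.

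The first step is to dispose of the time criterion through integrality. Because all edge lengths and capacities are integers, every unit of supply reaches $t$ at an integer time. The relaxed horizon $\bigl(\tfrac{15}{14}-\epsilon\bigr)T = 15-14\epsilon$ is strictly below $15$, so it admits \emph{exactly} the units arriving by time $14$ and rejects any unit arriving at time $15$. Hence the permitted relaxation of the deadline is vacuous, and it suffices to exhibit a constant gap in the flow deliverable by the hard deadline $14$. (It is enough to argue this for $\epsilon\le 1/14$: for larger $\epsilon$ the relaxed horizon only shrinks, the time constraint is only tighter, and the statement follows because any $(1+\alpha,\beta')$-approximation with $\beta'<\beta$ is in particular a $(1+\alpha,\beta)$-approximation.)

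Next I would exploit how node-confluence, the unit capacities, and the $(M-1)$ copies $a_{p_\mu l}$ of each $A$-element encode matchings. Confluence at $b_i$ forces all flow leaving $b_i$ onto a single gadget $x_\mu\to y_\mu$, i.e.\ commits each $B$-element to one triple, while confluence at $y_\mu$ forces the binary choice between exiting toward $c_{r_\mu}$ and exiting toward an $a$-copy. The lengths are calibrated so that each gadget route reaches $t$ by time $14$ when it is uncongested, whereas two units forced to share a capacity-$1$ edge along a length-$14$ route delay the later unit by one step (by the dynamic-flow mechanics of Section~\ref{Sec:Preliminaries}), making it arrive only at time $15$. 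The $M-1$ spare copies are dimensioned so that, at each element, $M-1$ of the $M$ incident gadgets can be absorbed collision-free precisely when the routed triples are pairwise disjoint. Thus in the satisfiable case I can fix a perfect matching of $n$ disjoint triples and route a congestion-free confluent (resp.\ unsplittable) flow delivering $F^{*}=\Theta(n)$ units by time $14$; deleting the node $s$ in the confluent variant (so that $s_i$ attaches directly to $b_i$ and $s'_\mu$ to $x_\mu$ through the length-$5$ red edge) is exactly what prevents confluence at $s$ from collapsing all source flow while preserving these arrival times, and the directed case is identical after orienting every edge toward $t$.

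The heart of the argument, and the step I expect to be hardest, is the $\epsilon_0$-unsatisfied direction: bounding the flow deliverable by time $14$ from above by $F^{*}-\Omega(\epsilon_0 n)$. Here I would establish a combinatorial lemma stating that the gadgets able to deliver their flow collision-free (hence by time $14$) must correspond to a \emph{disjoint} family of triples, so at most $(1-\epsilon_0)n$ of them qualify; the remaining $\Omega(\epsilon_0 n)$ units demanded by the sources are then forced to share a capacity-$1$ edge into $t$ and so arrive only at time $15$, missing the relaxed deadline. This is where the precise placement of the red/green/blue lengths and the count of the $a$-copies must be used to rule out ``clever'' reroutings---spreading overlapping triples across the spare $a$-copies, or mixing $s_i$-flow with $s'_\mu$-flow---that might otherwise recover the lost units within time $14$, and it is precisely this bounding that departs from the length-bounded edge-disjoint-paths analysis of~\cite{G2003}. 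Combining the two directions, for a suitable constant $\alpha$ with $1+\alpha< F^{*}/\bigl(F^{*}-\Omega(\epsilon_0 n)\bigr)$ a $(1+\alpha,\tfrac{15}{14}-\epsilon)$-approximation would deliver, on satisfiable instances, strictly more flow than is possible on any $\epsilon_0$-unsatisfied instance, thereby deciding BO3DM in polynomial time and forcing $P=NP$.
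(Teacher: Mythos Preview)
Your overall strategy is the paper's: reduce from BO3DM via the length-calibrated gadget graph, argue that the relaxed horizon $(\tfrac{15}{14}-\epsilon)T<15$ collapses to the hard deadline $14$ by integrality, and exhibit a constant multiplicative gap between the satisfiable and $\epsilon_0$-unsatisfied cases. The satisfiable direction and the handling of the time criterion are fine and match the paper.

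The place where your plan diverges from the paper is the hard direction, and your sketch there would not go through as stated. You propose to show that ``gadgets able to deliver their flow collision-free must correspond to a \emph{disjoint} family of triples, so at most $(1-\epsilon_0)n$ of them qualify.'' That framing undercounts: the flow in the satisfiable case is $Mn$, not $n$, because besides the $n$ matched-triple paths there are $(M-1)n$ spare-copy paths through the $a_{il}$'s, and in the unsatisfied case most of those spare paths survive too. The paper does \emph{not} map delivered units directly to disjoint triples. Instead it (i) invokes Lemma~\ref{lmm-biapproxpath} (the \cite{G2003}-style count you thought had to be abandoned) to bound the number of \emph{edge-disjoint} source--sink paths of length $\le 14$ by $(M-\epsilon_0/2)n$; (ii) explicitly enumerates \emph{all} length-$\le 14$ paths and shows there are exactly three combinatorial types, called $P_1$, $P_2$, $Q$, of lengths $14$, $13$, $14$ respectively; and (iii) for any extra unit routed on a path $P'$ sharing an edge with some $J$ in a maximum edge-disjoint family, checks the three pairings $\{P_1,P_2\}$, $\{P_i,Q\}$, $\{\text{same type}\}$ and shows in each that the shared edge is reached at conflicting times, forcing the extra unit to arrive at $\ge 15$. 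Step~(ii) is the structural ingredient your sketch is missing; without it you cannot rule out the ``clever reroutings'' you worry about, and in particular you cannot handle the length-$13$ $P_2$ paths, which have one unit of slack and are the reason the simple ``any collision delays by one'' argument is insufficient.

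For the confluent variant your instinct (remove or split $s$) is right; the paper additionally observes that after deleting $s$ and $t$, the edge-disjoint length-$\le 14$ paths are in fact node-disjoint, which lets the same bound transfer. So: keep your reduction and your integrality step, but replace the ``gadgets $\leftrightarrow$ disjoint triples'' lemma with the path-type enumeration plus the edge-disjoint-path bound of Lemma~\ref{lmm-biapproxpath} and the three-case timing analysis.
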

To show the hardness of the Unsplittable/Confluent Quickest Flow problem, we
construct an instance similar to Theorem~\ref{thm-biapproxhardQuick}, except that we let each source have supply 1, and have
\begin{theorem}\label{thm-biapproxhardTime}
There exists a constant $\alpha>0$ such that, for any $\epsilon>0$, there is no polynomial-time  $(\frac{15}{14}-\epsilon,1+\alpha)$-approximation for the Unsplittable/Confluent Quickest Flow problem in both directed and undirected graphs, unless $P = NP$.
\end{theorem}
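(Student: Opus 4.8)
The plan is to reduce from BO3DM, reusing the undirected graph $G$ of Theorem~\ref{thm-biapproxhardQuick} (Figure~\ref{fig:bicriteria}) with the single modification announced in the statement: every source $s_i$ and $s'_\mu$ now carries unit supply, and we drop the fixed time horizon in favour of minimizing completion time. Writing $D$ for the total supply, the goal is to exhibit a fixed threshold (namely $15$) together with a constant $\alpha>0$ so that a satisfiable instance admits a confluent flow delivering all supply by time $14$, while an $\epsilon_0$-unsatisfied instance forces any flow that delivers at least a $\frac{1}{1+\alpha}$ fraction of $D$ to run for time at least $15$. Since $\left(\frac{15}{14}-\epsilon\right)\cdot 14 = 15-14\epsilon < 15$, these two facts separate the cases: on a satisfiable input a $\left(\frac{15}{14}-\epsilon,1+\alpha\right)$-approximation must output a flow of completion time strictly below $15$, whereas on an $\epsilon_0$-unsatisfied input the demand guarantee $g(x)\ge D/(1+\alpha)$ forces completion time at least $15$. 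Observing the output time therefore decides BO3DM in polynomial time, which is impossible unless $P=NP$. The directed case follows by orienting every edge rightward, and the confluent case by deleting the hub $s$, exactly as in Theorem~\ref{thm-biapproxhardQuick}.

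For the completeness direction I would start from $n$ pairwise disjoint triples certifying satisfiability and route each supply along one of the ``short'' source--sink paths whose total length is at most $14$; the critical one is the $b$-path $s_i\to s\to b_i\to x_\mu\to y_\mu\to c_{r_\mu}\to t$ of length $2+2+2+2+4+2=14$, with the $s'_\mu$-paths through the $a$-copies being shorter. Because the chosen triples are disjoint, the node-capacity-$1$ edges incident to each gadget $(x_\mu,y_\mu)$ are each used by a single unit, so the flow is feasible, respects confluence at every $x_\mu$ and $y_\mu$ (each forwards along a unique edge), and incurs no congestion delay on any near-maximal-length path. The $(M-1)$ copies of each $a$-element supply the parallel capacity needed so that the $a$- and $c$-supplies also clear by time $14$; hence the quickest completion time of a satisfiable instance is at most $14$.

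The soundness direction is the main obstacle and is where the BO3DM gap must be spent. Here I would argue that the set of short paths (length $\le 14$) available in $G$ is in bijective correspondence with legal, pairwise-disjoint triple selections: confluence at $x_\mu$ forces all flow through the gadget of $T_\mu$ to leave along $x_\mu\to y_\mu$, and confluence at $y_\mu$ forces a single choice of successor among $c_{r_\mu}$ and the $a_{p_\mu l}$ copies, so a short routing that simultaneously clears the $b$-, $c$- and $a$-supplies of two triples sharing a common element of $A$, $B$ or $C$ is infeasible under the unit capacities. Consequently the number of supply units that can be delivered on short paths is bounded by a quantity controlled by the maximum disjoint sub-collection of $T$; in an $\epsilon_0$-unsatisfied instance this collection has size at most $(1-\epsilon_0)n$, leaving $\Omega(\epsilon_0 n)$ units that can only reach $t$ along a path of length at least $15$. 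I would then choose $\alpha$ small enough that $\frac{1}{1+\alpha} > 1-\Theta(\epsilon_0)$, so that delivering a $\frac{1}{1+\alpha}$ fraction of $D$ necessarily pushes at least one unit onto a length-$\ge 15$ path and thus raises the completion time to at least $15$. The delicate point, as in the companion reduction, is to verify that no clever combination of fractional-looking reroutes circumvents the length gap; this reuses the path-length accounting of Theorem~\ref{thm-biapproxhardQuick}, now read as a lower bound on \emph{time} rather than an upper bound on \emph{flow}, since with unit supplies and unit capacities a single-unit path of length $\ell$ delivers its unit exactly at time $\ell$.

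Finally I would assemble the two directions into the reduction described above, fixing the constant $\alpha$ from the soundness analysis and taking $\epsilon>0$ arbitrary. The same instance and the same short/long path dichotomy give the Unsplittable variant verbatim, because the argument only uses that each source commits its unit to a single path, which unsplittability already enforces. This yields the claimed non-existence of a polynomial-time $\left(\frac{15}{14}-\epsilon,1+\alpha\right)$-approximation for the single-sink Unsplittable/Confluent Quickest Flow problem in both directed and undirected graphs unless $P=NP$.
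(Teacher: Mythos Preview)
Your overall strategy matches what the paper intends (it declares the proof ``similar'' to Theorem~\ref{thm-biapproxhardQuick} and omits it): reuse the BO3DM gadget with unit supplies, show a satisfiable instance has $OPT\le 14$, and dualize the soundness of Theorem~\ref{thm-biapproxhardQuick} so that on an $\epsilon_0$-unsatisfied instance any flow delivering a $\tfrac{1}{1+\alpha}$ fraction of the supply must run to time $\ge 15$. Your soundness sketch is adequate and is indeed the direct translation of the max-flow argument.

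The gap is in completeness, which you treat as routine but is in fact the delicate direction here. You claim a satisfiable instance admits a flow completing by time $14$, yet the graph carries $(M{+}1)n$ unit supplies---the $n$ sources $s_i$ together with the $m=Mn$ sources $s'_\mu$---whereas Lemma~\ref{lmm-biapproxpath} only furnishes $Mn$ edge-disjoint length-$\le 14$ paths. The $n$ leftover units, precisely the $s'_{\mu_j}$ attached to the selected triples, are not handled by your routing. Your assertion that ``the $(M{-}1)$ copies of each $a$-element supply the parallel capacity needed'' does not survive scrutiny: each element of $A$ lies in $M$ triples but has only $M{-}1$ copies $a_{p,l}$, so pushing the extra unit through an $a$-copy forces two units onto the same edge $(a_{p,l},t)$ at entry time $11$, while time-shifting the competing $P_1$ unit instead delays its arrival to $15$. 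These are exactly the collisions that drive the soundness analysis of Theorem~\ref{thm-biapproxhardQuick}, and they now cut against you. To make the reduction go through you must either exhibit an explicit schedule delivering all $(M{+}1)n$ units by time $14$---which you have not done and which does not follow from Lemma~\ref{lmm-biapproxpath}---or tweak the instance so the total supply equals $Mn$ (for instance by placing demand at only $Mn$ of the sources), after which the $Mn$ versus $(M-\epsilon_0/2)n$ gap from Theorem~\ref{thm-biapproxhardQuick} transfers directly to the time axis as you intend.
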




\section{Polylogarithmic Approximation for Confluent Dynamic Flows}
\label{Sec:Polylog_Approx_Confluent}

\vspace{-0.3cm}
\subsection{Static Confluent Flows in Uncapacitated Networks with $\kappa$ Sources}\label{sec:uncapacitated_network}
We now  develop techniques for routing confluent flows in uncapacitated networks with $\kappa\leq n$ sources.
Through Section~\ref{sec:StaticConFlow}, unless otherwise specified, the flow discussed is {\em static.} 

\vspace{-0.2cm}\begin{definition}[$\beta$-Satisfiable]
For any $\beta\in [0,1]$, a supply $d_i$ is  {\em  $\beta$-satisfiable} in  flow $f$ if at least a $\beta$ faction of $d_i$ can be sent to the sink via $f$.
A flow $f$ is $\beta$-satisfiable  if all supplies are $\beta$-satisfiable in $f$.
\end{definition}\vspace{-0.2cm}

Again, suppose $G = (V,A)$ is  a static directed graph with supply $d(v)$ located at each $v\in V$. There exists a collection of sinks $\{t_1,...,t_k\}\subset V$. We let $\kappa$ be the number of non-zero supplies, and let all edge  and node capacities be 1. We present (proof in Appendix~\ref{apd:uncapacitated_network})
\vspace{-0.0cm}\begin{theorem}\label{thm-mergetree}
In the directed uncapacitated  network with $\kappa$ uniform non-zero supplies, given a (splittable) $1$-satisfiable flow $f$,  there exists a randomized algorithm for finding  a multi-sink confluent flow $f'$ with the node congestion bounded by $O((NC(f))^2\log^3 \kappa)$ whp\footnote{Throughout the paper, we use {\em whp} to  mean with high probability, or more precisely, with probability $1-n^{-c}$, where $n$ is the number of nodes in the network and $c$ is a constant.}.
\end{theorem}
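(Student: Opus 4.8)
The plan is to adapt the randomized ``meet-and-merge'' rounding of~\cite{merge03}, but to drive every concentration and union-bound step by the number of \emph{sources} $\kappa$ rather than the number of nodes $n$.

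First I would put $f$ into a convenient normal form. Decompose the splittable flow into source-sink paths and cancel flow along any cycle; this leaves the support a DAG and does not increase $NC(f)$. Fix a topological order on this DAG with the sinks $t_1,\dots,t_k$ as the maximal elements, and rescale so the uniform supply value is $1$ (so $NC(f) = \max_v f^{\mathrm{out}}(v)$ and each source contributes one unit). The crucial structural observation is that \emph{any} choice of a single out-edge per non-sink node, taken \emph{within} the support DAG, points strictly forward in the topological order, hence the resulting functional graph has no directed cycle and is automatically a forest of in-trees rooted at the sinks. Thus confluence reduces to choosing one out-edge per node, and the only thing left to control is congestion: the node congestion at $v$ in the confluent flow $f'$ equals the number of sources lying in the subtree rooted at $v$.

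Next I would round. For each node $v$ let $p_{v,w} = f(v,w)/f^{\mathrm{out}}(v)$ be the fractional split that $f$ induces on the out-edges of $v$. Following~\cite{merge03}, I would commit out-edges in $O(\log\kappa)$ phases: in each phase independently sample, for each still-uncommitted node, an out-edge according to $p_{v,\cdot}$, commit the samples, and recompute the residual fractional flow on the uncommitted part so that expectations are preserved across phases. Since the expected flow routed through any fixed $v$ is $f^{\mathrm{out}}(v)\le NC(f)$, a Chernoff/martingale bound shows that the congestion contributed \emph{per phase} at $v$ exceeds $O(NC(f)\log\kappa)$ only with tiny probability, and summing over the $O(\log\kappa)$ phases gives the polylogarithmic blow-up. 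The $(NC(f))^2$ dependence enters because the deviation radius that must be tolerated to succeed with high probability scales with the congestion level $NC(f)$ itself, so the expected bound of order $NC(f)$ is multiplied by a concentration factor of order $NC(f)\log\kappa$.

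The decisive new ingredient --- the step that replaces $\log n$ by $\log\kappa$ --- is how the union bound is organized. A naive argument would take a union bound over all $n$ nodes, forcing a $\log n$ in the deviation bound. Instead I would use that the confluent support is a forest with at most $\kappa$ leaves (one per source), and therefore at most $\kappa$ \emph{branching} nodes; along any maximal degree-$(1,1)$ chain the subtree-count, and hence the congestion, is constant, so the maximum congestion is always attained at one of these $O(\kappa)$ branch nodes (or at a source). Carrying out the per-phase concentration and the across-phase union bound only over these $O(\kappa)$ relevant nodes yields the $\log\kappa$ factors. Collecting the three logarithmic losses --- $O(\log\kappa)$ phases, an $O(\log\kappa)$ Chernoff deviation per phase, and an $O(\log\kappa)$ union bound over branch nodes --- together with the two factors of $NC(f)$ gives the claimed $O((NC(f))^2\log^3\kappa)$ bound whp. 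The main obstacle is the concentration analysis in the presence of correlated loads: committing one node's out-edge shifts the load distribution of all of its ancestors, so the subtree-counts are far from independent across nodes and across phases. The heart of the proof is the martingale bookkeeping that makes the per-phase congestion increments telescope while keeping the residual flow a valid fractional flow, and, simultaneously, the combinatorial argument that confines the whole union bound to the $O(\kappa)$ branch nodes of the eventual forest rather than all $n$ nodes; getting ``the nodes we must control'' to coincide with ``the nodes the Chernoff bound is applied to'' is where the real work lies.
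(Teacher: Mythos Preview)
Your proposal diverges from the paper both in the algorithm and in the analysis, and the analysis has a genuine gap.

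\textbf{Algorithm.} The paper (and~\cite{merge03}) does \emph{not} round in $O(\log\kappa)$ phases with residual-flow recomputation. It performs a single one-shot rounding: each non-sink node $v$ independently picks one out-edge with probability $f(e)/f^{\mathrm{out}}(v)$. All of the work is in the analysis of this simple rule. Your phased scheme with recomputed residual flow is a different algorithm, and the ``per-phase Chernoff plus martingale bookkeeping across phases'' outline is not enough to see that it produces the claimed bound; in particular, your explanation of why the factor $(NC(f))^2$ appears is only heuristic.

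\textbf{The branch-node union bound is circular.} Your key device for replacing $\log n$ by $\log\kappa$ is to take the union bound only over the at most $\kappa$ branching nodes of the final forest. But which nodes are branching is itself a random event determined by the rounding. You cannot fix a set of $O(\kappa)$ nodes in advance and then apply concentration only at those nodes; and if you instead union-bound over all $n$ nodes the argument reverts to $\log n$. Saying that ``the max congestion is attained at a branch node'' is correct \emph{after} the realization, but it does not let you restrict the probabilistic analysis to $O(\kappa)$ pre-specified events.

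\textbf{What the paper actually does.} The paper's route to $\log\kappa$ is a structural reduction rather than a clever union bound. It introduces the \emph{effective length} of a path (the number of sources it passes through) and proves that in the random forest the effective height is $O(NC(f)\log\kappa)$ whp. It then performs a \emph{contraction} that removes every non-source, non-sink, non-target node from the DAG while preserving the distribution of $\hat N_D(v)$; after contraction the DAG has only $O(\kappa)$ nodes, and effective hops coincide with ordinary hops. On this $\kappa$-sized object the moment bounds of~\cite{merge03} (their Lemma~4.2 and Corollary~4.5.1) apply directly and already carry $\log\kappa$ rather than $\log n$. A high-moment Markov inequality with moment order $i=O(\log\kappa)$, together with the effective-height bound $h_e=O(NC(f)\log\kappa)$, yields $\hat N_D(v)\le O\bigl(NC(f)\cdot h_e\cdot\log^2\kappa\bigr)=O\bigl((NC(f))^2\log^3\kappa\bigr)$ with failure probability $\kappa^{-r}$; repeating $O(\log n)$ times upgrades this to the required $n^{-c}$. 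So the $(NC(f))^2$ and the three $\log\kappa$ factors come from, respectively, (i) the expected congestion $C_D^*=NC(f)$, (ii) the effective height $O(NC(f)\log\kappa)$, and (iii) the $O(\log\kappa)\times O(\log\kappa)$ coming from the moment order and the tree-transformation overhead inside the cited corollary --- not from a phased rounding or a branch-node union bound.
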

Note that if $\kappa$ is bounded and $f$ is feasible, Theorem~\ref{thm-mergetree} can provide confluent flows with constant congestion.



Also, the support of the resulting flow is a collection of trees rooting at those sinks $t_1,...,t_k$. We guarantee  that the height of those trees can be bounded as below.
\vspace{-0.2cm}
\begin{lemma}\label{lmm-mergetree}
Whp, the height of any tree constructed in the randomized algorithm is at most $O(NC(f)\log n)$.
\end{lemma}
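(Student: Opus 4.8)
The plan is to bound the depth of a single, fixed node and then take a union bound over all $n$ nodes. Recall that the randomized construction attaches every node to exactly one outgoing edge, so its support is a forest of in-trees rooted at the sinks $t_1,\dots,t_k$, and the height of a tree is the maximum number of hops from a node down to its root. Fix a node $v$ and let $v=u_0,u_1,u_2,\dots$ be the (random) sequence of nodes obtained by repeatedly following the chosen parent edge until a sink is reached. It suffices to show that, for $h=c\cdot NC(f)\log n$ with $c$ a sufficiently large constant, this path reaches a sink within $h$ hops except with probability at most $n^{-(c'+1)}$; a union bound over the $n$ choices of $v$ then shows that \emph{every} root-to-leaf path, hence the height of every tree, is $O(NC(f)\log n)$ whp.

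For the single-node bound I would set up a progress (potential) argument along the path. First I would observe that, because the given flow $f$ is $1$-satisfiable and may be taken acyclic, the sequence $u_0,u_1,\dots$ is a genuine directed path that is guaranteed to terminate at a sink, so the only issue is the \emph{number} of hops. I would then track a potential $\Phi(u_i)$ that is monotone toward the sink and bounded by a polynomial in $n$ (after the standard scaling of the uniform supplies), for instance the flow mass that the merging process has already accumulated at $u_i$, which is nondecreasing as one moves toward the root and never exceeds the total routed supply. The heart of the argument is a per-step lower bound: conditioned on the history, each step increases this potential by a multiplicative factor $1+\Omega(1/NC(f))$ in expectation, since the node-congestion bound $NC(f)$ caps how much flow can leave any node, which is exactly what forces the merging to concentrate mass at a rate inversely proportional to $NC(f)$. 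Writing the log-potential as a submartingale with bounded increments, an Azuma/Chernoff estimate shows that within $\Theta(NC(f)\log n)$ steps the potential must reach its maximum possible value except with probability $n^{-\Omega(c)}$; since that maximum is attained only at a sink, the path terminates within that many hops.

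The step I expect to be the main obstacle is precisely this per-step progress estimate together with the concentration built on top of it. The difficulty is that the increment at $u_i$ depends on the flow contributed by \emph{other} predecessors that merge into the next node, which is not visible from $u_i$ alone; handling this requires relating the local transition probabilities (proportional to $f$, with every edge capacity equal to $1$) to the global accumulation of the uniform supplies, and arguing that the increments, although dependent, admit a martingale or stochastic-domination formulation to which a tail bound applies. Once the per-step factor $1+\Omega(1/NC(f))$ and the polynomial bound on $\Phi$ are in place, the remaining calculation is routine: $\Theta(NC(f)\log n)$ steps suffice, the per-node failure probability is at most $n^{-(c+1)}$ for $c$ large, and the union bound over the $n$ nodes gives failure at most $n^{-c}$, establishing the claimed whp height bound for all trees simultaneously.
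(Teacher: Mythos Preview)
Your high-level plan (bound the depth of a fixed starting node, then union bound over all $n$ nodes) matches the paper. The gap is in the single-node step: the per-hop multiplicative progress you assert is \emph{false} for the natural potentials, and you have not exhibited one for which it holds.

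Take the potential to be $\Phi(u)=f^{\mathrm{out}}(u)$ (equivalently $C_D(u)$, the expected number of sources below $u$; with unit supplies and unit capacities this equals $NC(u)$). Consider a node $u$ with $\Phi(u)=NC(f)=C_D^*$ that splits its outflow evenly between two successors $w_1,w_2$ that have no other in-edges. Then $f(u,w_j)=C_D^*/2$, so $\Phi(w_j)=1+C_D^*/2$, and
\[
\mathbb{E}\bigl[\Phi(u_{i+1})\mid u_i=u\bigr]=1+\tfrac{C_D^*}{2}<C_D^*=\Phi(u)\quad\text{for }C_D^*>2.
\]
So the expected potential \emph{decreases}; there is no $1+\Omega(1/NC(f))$ drift. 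If instead you intend the random accumulated mass $\hat N_D(u_i)$, you only get the additive bound $\hat N_D(u_{i+1})\ge \hat N_D(u_i)+1$, which is not multiplicative once $\hat N_D(u_i)\gg NC(f)$ and hence cannot terminate the walk in $O(NC(f)\log n)$ hops. The difficulty you flag---that the increment at $u_{i+1}$ depends on \emph{other} predecessors merging in---is exactly why a forward martingale on a local potential does not see the required gain.

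The paper avoids this by working in the opposite direction. It fixes the \emph{endpoint} $u$ and bounds $P(u;i)$, the probability that the walk is at $u$ after $i$ steps, via the backward recurrence $P(u;i)=\sum_{(w,u)}P(w;i-1)\,p(w,u)$. Since $C_D$ satisfies the matching predecessor recurrence $C_D(u)=1+\sum_{(w,u)}p(w,u)\,C_D(w)$, induction gives
\[
P(u;i)\ \le\ (1-1/C_D^*)^{i-1}\sum_{(w,u)}p(w,u)\,C_D(w)\ =\ (1-1/C_D^*)^{i-1}\bigl(C_D(u)-1\bigr)\ \le\ C_D(u)\,(1-1/C_D^*)^i,
\]
using $C_D(u)\le C_D^*$ in the last step. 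Summing over the at most $n$ non-sink endpoints yields that the walk is still unabsorbed after $i=\Theta(C_D^*\log n)$ steps with probability at most $nC_D^*(1-1/C_D^*)^i\le n^{-c}$, and a union bound over starting points finishes. In short, the right ``potential'' is $C_D(u)(1-1/C_D^*)^{-i}$ viewed as a bound on $P(u;i)$ over predecessors, not a quantity tracked along the forward walk; the proof is a two-line induction plus Markov, with no Azuma needed.
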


\vspace{-0.4cm}
\subsection{Static Length-Bounded Confluent Flows in Monotonic Networks}\label{sec:monotonic_network}

This section gives an algorithm for constructing a length-bounded confluent flow in {\em monotonic networks}, utilizing techniques developed in Section~\ref{sec:uncapacitated_network}.
A monotonic network is a special (static) directed graph with  vertex capacities and no edges pointing in the direction of decreasing capacity.  Formally,
\vspace{-0.2cm}
\begin{definition}[Monotonic Network]
A directed graph $G=(V,A)$ with node capacity $c(v)$ for each $v\in V$ is a  {\em monotonic network} iff $c(u)\leq c(v)$ for every arc $(u,v)$.
\end{definition}

\vspace{-0.2cm}
The network $G = (V,A)$ is the same as Section~\ref{sec:uncapacitated_network}
except that here each node has capacity $c(v)$ and  each edge has capacity 1.
Our first step is to prove (Appendix \ref{apd:monotonic_network})
\vspace{-0.2cm}
\begin{theorem}\label{thm-monotonicflow}
Let $G = (V,A)$ be a monotone network. Given a $1$-satisfiable flow $f$ with node congestion at most 1, one can, in polynomial time, construct a confluent $1$-satisfiable flow with node congestion $O(\log^8 n)$ and flow length $O\left( L(f)\log n \log c_{\max}/ \log\log n \right)$ whp, even without the no-bottleneck assumption.\vspace{-0.2cm}
\end{theorem}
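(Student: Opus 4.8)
The plan is to reduce the monotonic‑network problem to a sequence of the uncapacitated confluent routings supplied by Theorem~\ref{thm-mergetree}, one per capacity scale, and then to superimpose a length‑control mechanism. First I would partition $V$ into $\Lambda+1 = O(\log c_{\max})$ \emph{capacity levels} $V_0,\dots,V_\Lambda$, where $V_i=\{v:2^i\le c(v)<2^{i+1}\}$. Monotonicity guarantees that every arc $(u,v)$ has $c(u)\le c(v)$, so arcs only pass from lower to equal‑or‑higher levels; and within a single level all capacities agree up to a factor of $2$, so after rescaling by $2^i$ the induced subnetwork is effectively \emph{uncapacitated} with edge capacity $1$, which is exactly the setting of Theorem~\ref{thm-mergetree}.

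Next I would process the levels from lowest to highest. At level $i$ the supplies are the original $d(v)$ for $v\in V_i$ together with the flow delivered to the entry nodes of $V_i$ by the already‑computed confluent routings of levels $<i$; the ``sinks'' for level $i$ are the true sinks lying in $V_i$ together with the \emph{exit nodes} (tails of arcs leaving $V_i$ toward a higher level). Restricting $f$ to $V_i$ furnishes a $1$‑satisfiable intermediate flow routing these supplies to these sinks, and feeding it to Theorem~\ref{thm-mergetree} yields a confluent flow inside $V_i$. Concatenating the per‑level confluent flows along the exit/entry arcs produces a single globally confluent flow (each node still emits on exactly one arc), whose support is a forest of in‑trees rooted at the sinks. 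Because each node belongs to exactly one level and carries flow only as part of that level's routing, the global node congestion equals the \emph{worst per‑level} merge congestion rather than a sum accumulated across levels.

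The delicate point — and the step I expect to be the main obstacle — is proving that the congestion does \emph{not} compound across the $\Theta(\log c_{\max})$ levels. The danger is that the supplies fed into the level‑$i$ merge, namely the flow delivered by the confluent routings of levels $<i$, could be highly concentrated; then the $(NC)^2$ term in Theorem~\ref{thm-mergetree} would inflate the output and, level by level, blow up doubly exponentially. The remedy is to exploit the capacity doubling: before merging at level $i$ I would re‑establish, from the feasible flow $f$, a \emph{renormalised} intermediate splittable flow whose entry‑node supplies are charged against the level‑$i$ capacities $\ge 2^i$. Since $f$ has node congestion at most $1$ and capacities grow along the direction of flow, these induced supplies have congestion $O(1)$ with respect to $V_i$, so every invocation of Theorem~\ref{thm-mergetree} starts from $NC=O(1)$ and contributes only the base polylogarithmic factor. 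This is also exactly where the no‑bottleneck assumption is avoided: monotonicity lets the enlarged downstream capacities absorb the flow concentration created by confluent merging, so no single supply need be small relative to the capacities it traverses.

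Finally, to obtain the length bound I would not merge with arbitrary trees, since Lemma~\ref{lmm-mergetree} caps only the tree \emph{height} at $O(\log n)$ edges and says nothing about total length. Instead I would run the merging of Theorem~\ref{thm-mergetree} with branching factor $\Theta(\log n)$, so that each per‑level confluent tree has height $O(\log n/\log\log n)$; because every tree edge can be realised by a sub‑path of $f$ of length at most $L(f)$, each level lengthens a routed path by at most $O(L(f)\log n/\log\log n)$. Summing over the $O(\log c_{\max})$ levels gives total flow length $O\!\left(L(f)\,\log n\,\log c_{\max}/\log\log n\right)$, as required. The higher branching is precisely what trades length for congestion, pushing the exponent up from the $\log^3 n$ of a single plain merge to the stated node congestion $O(\log^8 n)$. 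All randomness enters only through Theorem~\ref{thm-mergetree}, so the bounds hold whp, and every step is polynomial time.
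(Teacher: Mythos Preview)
Your high-level plan (decompose by capacity scale, apply Theorem~\ref{thm-mergetree} inside each scale, concatenate) matches the paper, but two of your key mechanisms do not work as stated, and they are exactly where the paper spends its effort.

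\textbf{The compounding argument.} You partition by powers of $2$ and then assert that ``capacity doubling'' lets downstream capacities absorb the merge concentration, so every invocation of Theorem~\ref{thm-mergetree} starts from $NC=O(1)$. This is the gap. After merging inside level $i$, a single exit node of capacity $\Theta(2^i)$ can receive $\Theta(\log^3 n)\cdot 2^i$ units of confluent flow; at the next level its capacity is only $\Theta(2^{i+1})$, so the induced supply already has congestion $\Theta(\log^3 n)$, not $O(1)$. Your proposed ``renormalised intermediate splittable flow'' drawn from $f$ routes $f$'s supplies, not the concentrated supplies actually delivered by the confluent routing of level $i-1$; if you feed those $f$-supplies into Theorem~\ref{thm-mergetree} and then link with the real confluent inflow, the congestion multiplier per level is $\Theta(\log^3 n)$, and it compounds. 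The paper's fix is different: it groups by powers of $\log^4 n$, not $2$. Then a sink node in $G_i$ has capacity a $\log^4 n$ factor larger than the non-sink nodes, so the $O(\log^3 n)$ merge blowup contributes only $1+O(1/\log n)$ to its congestion (Lemma~\ref{lmm-sepcongestion}); over at most $O(\log n)$ relevant levels this telescopes to $O(1)$ at sinks and $O(\log^4 n)$ at non-sinks (Lemma~\ref{lmm-mergecongestion}). The final $O(\log^8 n)$ is this $O(\log^4 n)$ times another $O(\log^4 n)$ lost when rounding capacities up to powers of $\log^4 n$; it has nothing to do with a ``higher branching factor.''

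\textbf{The length bound.} You obtain the $1/\log\log n$ by claiming you can run Theorem~\ref{thm-mergetree} ``with branching factor $\Theta(\log n)$'' to force tree height $O(\log n/\log\log n)$. No such knob exists in Theorem~\ref{thm-mergetree} or Lemma~\ref{lmm-mergetree}; the latter simply gives height $O(NC(f)\log n)$. In the paper, each level contributes path length $O(L(f)\log n)$, and the $1/\log\log n$ comes instead from the \emph{number of levels}: grouping by powers of $\log^4 n$ yields $r=O(\log c_{\max}/\log\log n)$ levels, whence the total length is $O(r\cdot L(f)\log n)=O(L(f)\log n\log c_{\max}/\log\log n)$. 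With your powers-of-$2$ partition you have $\Theta(\log c_{\max})$ levels and no legitimate way to shave the per-level height, so your length bound is off by a $\log\log n$ factor even before the congestion issue.
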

The idea is to first decompose the monotonic network into several sub-networks, and
in each, construct  length-bounded confluent flows with small node congestion. Connecting all confluent flows in those sub-networks, we can construct a confluent flow in the original network as desired.
Our monotonic network technique incorporates a new parameter, namely the edge length, and, more importantly, our objective is to construct  a {\em bicriteria} confluent flow, namely bounding {\em both}  node congestion and length (note that  in \cite{shepherd2015polylogarithmic}, only  node congestion can be bounded). The main difference from~\cite{shepherd2015polylogarithmic} lies in that we embed our new algorithms for uncapacitated networks into the monotonic network routing.

Our technique can be further improved if we remove the length-bounded constraint. The key observation is that the sources in each sub-network are only induced by the given (splittable) flow that we would like to re-route into a confluent one.  We can guarantee that, if the given flow is unsplittable, at most $\kappa$ flow paths pass between two sequential sub-networks, inducing at most $O(\kappa)$ sources. This, combined with our new technique for uncapacitated networks, gives the improvement of the congestion from poly$(\log n)$ to poly$(\log \kappa)$.
\vspace{-0.2cm}
\begin{theorem}\label{thm-monotonicflow-relaxed}
Let $G = (V,A)$ be a monotone network with a single sink. If there is $1$-satisfiable flow $f$ with node congestion at most 1, one can, in polynomial time, construct a confluent $1$-satisfiable flow with node congestion $O(\log^8 \kappa)$ whp, under the NBA.
\end{theorem}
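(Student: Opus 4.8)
The plan is to run the layered monotonic decomposition already used for Theorem~\ref{thm-monotonicflow}, but to feed it an \emph{unsplittable} flow in place of the given splittable $f$, so that every uncapacitated sub-network it spawns has only $O(\kappa)$ induced sources rather than the $\Theta(n)$ that a splittable flow can force. First I would replace $f$ by an unsplittable $1$-satisfiable flow $\hat f$: since $f$ has node congestion at most $1$, a standard splittable-to-unsplittable rounding produces $\hat f$ whose node congestion exceeds that of $f$ by at most an additive $d_{\max}/c_{\min}$. Under the NBA we have $d_{\max}\le c_{\min}$, so $\hat f$ still has node congestion $O(1)$. Crucially, $\hat f$ now consists of at most $\kappa$ source-to-$t$ paths, one per non-zero supply; this is exactly where both the single-sink hypothesis (the rounding is single-commodity, and the paths share the sink $t$) and the NBA are consumed.

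Next I would reuse the capacity-scale partition of the proof of Theorem~\ref{thm-monotonicflow}, grouping vertices into layers $V_j=\{v: c(v)\in[2^j,2^{j+1})\}$. Monotonicity forces every arc to run from some $V_j$ into $V_{j'}$ with $j'\ge j$, so along any path of $\hat f$ the node capacities are non-decreasing; each such path therefore visits the layers in non-decreasing order and crosses each layer boundary \emph{at most once}. Inside one layer all capacities agree up to a factor of two, so after normalizing the layer behaves like an uncapacitated network in the sense of Section~\ref{sec:uncapacitated_network}, and its sources are precisely the nodes at which $\hat f$ enters the layer, together with any original supplies lying inside it. Because $\hat f$ has only $\kappa$ paths and each crosses the boundary once, this source count is $O(\kappa)$.

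I would then route confluently inside each layer via Theorem~\ref{thm-mergetree}: starting from the restriction of $\hat f$ (node congestion $O(1)$ and $O(\kappa)$ sources) it yields, whp, an in-forest with node congestion $O((NC)^2\log^3\kappa)=O(\log^3\kappa)$ whose roots are the exit nodes that feed the next layer up. Processing layers from lowest to highest capacity and gluing each layer's forest onto the entry nodes of the layer above assembles one globally confluent flow; since confluent routing only \emph{merges} strands and never splits them, the number of flow strands crossing any boundary never exceeds $\kappa$, so the $O(\kappa)$-source guarantee is maintained all the way up to $t$. Because a node lies in exactly one layer and participates in only a constant number of these routings, its congestion is the product of the per-layer $O(\log^3\kappa)$ bounds over the constantly many sub-network instances that determine it; carrying the identical accounting of Theorem~\ref{thm-monotonicflow} but with every $\log n$ replaced by $\log\kappa$ gives node congestion $O(\log^8\kappa)$ whp. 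Dropping the length-bounded requirement is exactly what frees us to skip the flow-length bookkeeping of Theorem~\ref{thm-monotonicflow} and to glue the layers together without constraint.

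The main obstacle is establishing the \emph{combinatorial invariant} that each uncapacitated sub-network still has only $O(\kappa)$ sources after the confluent re-routings of all lower layers. The difficulty is that consolidating flow within a layer alters the set of nodes at which flow crosses into the next one, so one must charge every induced source to one of the at-most-$\kappa$ original unsplittable paths and argue that confluent merging can only decrease, never increase, the number of distinct crossing strands. Proving this monotone non-increase of strand count across the entire layered hierarchy is the crux: it is what makes plugging $O(\kappa)$ into Theorem~\ref{thm-mergetree} legitimate at \emph{every} level, and thus what converts the $\log n$ of \cite{shepherd2015polylogarithmic} into $\log\kappa$ throughout.
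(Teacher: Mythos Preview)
Your high-level plan coincides with the paper's: (i) convert $f$ to an unsplittable single-sink flow of constant node congestion using the NBA (the paper invokes the Dinitz--Garg--Goemans rounding~\cite{dinitz1999single}, after the standard node-to-edge-capacity reduction), so that only $\kappa$ source-to-$t$ paths exist; (ii) rerun the layered monotonic decomposition of Theorem~\ref{thm-monotonicflow} and observe that, since at most $\kappa$ paths cross any layer boundary, every induced sub-network has only $O(\kappa)$ non-zero supplies; (iii) invoke Theorem~\ref{thm-mergetree} inside each sub-network with $\log\kappa$ in place of $\log n$; (iv) carry the congestion accounting of Lemmas~\ref{lmm-sepcongestion}--\ref{lmm-finalcongestion} through with $\kappa$ for $n$ to obtain $O(\log^8\kappa)$.

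Two points need correction. First, your layer definition $V_j=\{v:c(v)\in[2^j,2^{j+1})\}$ is \emph{not} the one used in Theorem~\ref{thm-monotonicflow} and will not support the accounting you then want to ``carry identically''. Lemma~\ref{lmm-sepcongestion} needs the sink capacity inside a sub-network to exceed the non-sink capacity by a factor $\log^4\kappa$, precisely so that the $O(\log^3\kappa)$ congestion produced by Theorem~\ref{thm-mergetree} is absorbed into a $(1+O(1)/\log\kappa)$ multiplicative blowup per layer. With only a factor-$2$ capacity ratio the per-layer blowup is $\Theta(\log^3\kappa)$ and the product across layers explodes. The paper accordingly rounds node capacities to powers of $\log^4\kappa$, not of~$2$; you should do the same.

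Second, the ``main obstacle'' you isolate---maintaining the $O(\kappa)$-source invariant \emph{after} the confluent re-routing of lower layers---does not arise in the paper's proof. The paper defines the supplies of each sub-network $G_i$ directly from the original unsplittable flow $\hat f$, namely $d_i(v)=d(v)+\sum_{u\in V_{i-1}\setminus S_{i-1}}\hat f(u,v)$, and then applies Theorem~\ref{thm-mergetree} to each $G_i$ \emph{independently}. Because $\hat f$ has only $\kappa$ paths, every $G_i$ has $O(\kappa)$ non-zero supplies by construction; there is no invariant to maintain across layers. The interaction between layers is handled purely on the congestion side by the multiplicative analysis of Lemma~\ref{lmm-mergecongestion}, which transfers verbatim with $\log\kappa$ replacing $\log n$. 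Your observation that confluent merging can never increase the strand count is correct, but it is not needed.
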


\vspace{-0.4cm}
\subsection{Static Length-Bounded Confluent Flows in General Networks}\label{sec:StaticConFlow}

Via the techniques developed above for monotonic networks, this section develops a polynomial-time algorithm for determining a length-bounded confluent static flow in general networks. 

Suppose we are given a directed/undirected {\em edge-capacitated} network $G(V,E)$ 
(Section ~\ref{sec:monotonic_network} dealt with {\em node} capicitated networks).
Each node $v\in V$ has a supply $d(v)$ to be sent to the unique sink $t$. Our goal is to find {\em a subset of supplies} of maximum total value that can be routed via a confluent flow, whose flow length and  edge congestion are both bounded.


To this end, we need to pre-process the network as follows.
First,
we ignore those demands of size at most $d_{\max}/2\kappa$, as they contribute at most half of the value of the optimal flow. Meanwhile, we round each supply up to the nearest power of 2, and group those with the same value together, producing $O(\log \kappa)$ groups of distinct supply sizes.
To compute an approximation, we will separately route each supply group in $G$, and output the flow of the maximum value among all groups. Note that, this will lose a $O(\log \kappa)$ factor in the approximation ratio.
Hence, we reduce the original problem to the uniform-supply case. Without loss of generality, by scaling, we can assume every supply is 1.

Second, we round each capacity up to the nearest power of 2, and assume all edges have capacity at most $\kappa d_{\max}$, i.e., $c_{\max}\leq \kappa d_{\max}$ as the extra capacity above this value is superfluous.
Furthermore, when considering the uniform-supply case, those edges with capacity less than the supply size would never be used, as the supply should be routed confluently. Accordingly, we can assume each edge capacity is in $[1,\kappa]$ as $d_{\max}=1$
in unit-supply case, and then there exist $O(\log \kappa)$ distinct capacity sizes.

Given a directed/undirected edge-capacitated network $G(V,A)$ with a single sink $t$, letting $k:=\lfloor \log c_{\max}\rfloor + 1$, we construct the directed {\em $k$-layer (monotonic) network}  $H$ (see Figure~\ref{fig:3layer}):
\begin{enumerate}

\vspace{-0.cm}

\item[$\bullet$] {\bf \em  $k$ layers.} Create $k$ layers and $k$ node sets $V(H_{0}),V(H_{1}),...,V(H_{k-1})$, where $V(H_{i}):=V(G)\setminus \{t\}$ and the $i$-th layer contains $V(H_{i})$.

\vspace{-0.cm}

\item[$\bullet$] { \bf\em Induced node capacities.} For the $i$-th node set $V(H_{i})$ $(i=0,...,k-1)$, denote by $u^{i}$ the $i$-th copy of node $u$, and let $u^i$ have capacity $2^i$.

    \vspace{-0.cm}

\item[$\bullet$] {\bf \em Vertical arcs.} For each edge $(u,v)\in A(G)$, connect two vertical arcs $(u^i,v^i)$ (and $(v^i,u^i)$ if $G$ is undirected) with capacity of $2^i$ in $H$, iff the capacity of $(u,v)$ is at least $2^i$ $(i=0,...,k-1)$.

    \vspace{-0.cm}

\item[$\bullet$] {\bf \em Horizontal arcs.} For $0\leq i\leq k-2$, $\forall u\in V$, connect a horizontal arc $(u^i,u^{i+1})$ with capacity $2^i$.

    \vspace{-0.cm}


    \vspace{-0.cm}

\item[$\bullet$] {\bf \em Arc lengths.} Let vertical arcs have the same length as arcs in $G$, and horizontal arcs have length 0.

    \vspace{-0.cm}

\item[$\bullet$] {\bf  $H:=(V(H),A(H))$.} Set $V(H)$ as the union of $V(H_{0}),V(H_{1}),..., V(H_{k-1}), \{t\}$ plus those dummy sinks, and set $A(H)$ as the collection of those vertical and horizontal arcs.

    \vspace{-0.cm}

\item[$\bullet$] {\bf \em Supplies.} Place the supply of $v$ at its copy $v^0$ in Layer 0.
\vspace{-0.cm}

\item[$\bullet$] {\bf \em Dummy sinks.} If there exists an edge $(u,t)$ with capacity of $2^i$, then create a copy $t_u^{j}$ of $t$ in  Layer $j$ and let the capacity of $t_u^{j}$ be $2^j$, for each $j=i,...,k-1$. Connect the vertical arc $(u,t_u^{i})$ with capacity of $2^i$, and the horizontal arc $(t_u^{j},t_u^{j+1})$ with capacity of $2^j$, for each $j=i,...,k-2$. Finally, connect the arc $(t_u^{k-1},t)$ with capacity of $2^{k-1}$.

\end{enumerate}

Our multi-layer network can be viewed as a new construction enabling our length-bounded routing technique to work in edge-capacitated networks.
Applying Theorem~\ref{thm-monotonicflow} yields:

\vspace{-0.2cm}\begin{theorem}\label{thm-klayerflow}
In the layered network $H$, given a (splittable) flow $f$ for routing all unit supplies with node congestion at most 1, there exists a  polynomial-time algorithm for constructing  a $1$-satisfiable confluent flow with node congestion $O(\log^8 n)$ and flow length $O\left( L\log^2 n/\log\log n \right)$ whp.
\end{theorem}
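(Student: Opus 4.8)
The plan is to recognize that the layered network $H$ is itself a monotonic network, so that the statement follows by a single application of Theorem~\ref{thm-monotonicflow}, with the improvement in the bounds coming entirely from the favourable size parameters of $H$. First I would verify monotonicity: each copy $u^{i}$ carries node capacity $2^{i}$, a vertical arc $(u^{i},v^{i})$ joins two nodes of equal capacity $2^{i}$, and a horizontal arc $(u^{i},u^{i+1})$ runs from capacity $2^{i}$ to capacity $2^{i+1}$. Hence $c(\text{tail})\le c(\text{head})$ on every arc, so $H$ meets the definition of a monotonic network, and the hypothesised splittable flow $f$ routing all unit supplies is $1$-satisfiable with node congestion at most $1$ --- precisely the input required by Theorem~\ref{thm-monotonicflow}.

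Next I would estimate the three quantities that govern the conclusion of Theorem~\ref{thm-monotonicflow} when it is instantiated on $H$: the node count, the maximum capacity, and the flow length. Since $H$ consists of $k=\lfloor\log c_{\max}\rfloor+1=O(\log\kappa)$ layers, each a copy of $V(G)\setminus\{t\}$, together with the dummy sinks, we have $|V(H)|=O(n\log n)$, so $\log|V(H)|=\Theta(\log n)$ and $\log\log|V(H)|=\Theta(\log\log n)$. The largest node capacity in $H$ is $2^{k-1}\le c_{\max}\le\kappa\le n$ (using the preprocessing that caps edge capacities at $\kappa$ in the unit-supply case), so $\log c_{\max}^{H}=O(\log\kappa)=O(\log n)$. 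Finally, because horizontal arcs are assigned length $0$ while vertical arcs inherit the lengths of $G$, the length of $f$ measured inside $H$ equals its length measured by the original arc lengths; calling this quantity $L$ gives $L(f)=L$.

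Applying Theorem~\ref{thm-monotonicflow} to $H$ and $f$ then yields, in polynomial time and whp, a confluent $1$-satisfiable flow with node congestion $O(\log^{8}|V(H)|)=O(\log^{8}n)$ and flow length
\[
O\!\left(\frac{L(f)\,\log|V(H)|\,\log c_{\max}^{H}}{\log\log|V(H)|}\right)
=O\!\left(\frac{L\cdot\log n\cdot\log n}{\log\log n}\right)
=O\!\left(\frac{L\log^{2}n}{\log\log n}\right),
\]
which is exactly the claimed bound. As a sanity check, every arc leaving $u^{i}$ has capacity equal to $c(u^{i})=2^{i}$, so in a confluent flow the node-congestion bound also controls the edge congestion of the returned flow and no separate edge-capacity argument is needed. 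The only substantive work, and hence the main obstacle, lies in this last bookkeeping step: one must confirm that the multi-layer construction inflates the node count by only an $O(\log n)$ factor and keeps $c_{\max}^{H}$ at most $\kappa$, so that the generic bound $O\!\left(L(f)\log N\log c_{\max}/\log\log N\right)$ of Theorem~\ref{thm-monotonicflow} collapses to $O(L\log^{2}n/\log\log n)$ rather than picking up extra logarithmic factors from a larger or higher-capacity host graph.
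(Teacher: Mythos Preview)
Your proposal is correct and follows essentially the same route as the paper, which simply states ``Applying Theorem~\ref{thm-monotonicflow} yields'' before Theorem~\ref{thm-klayerflow}. Your write-up is more explicit than the paper's one-line justification: you verify monotonicity of $H$, track that $|V(H)|=O(n\log n)$ so $\log|V(H)|=\Theta(\log n)$, and use the preprocessing bound $c_{\max}\le\kappa$ to collapse the generic $O(L(f)\log N\log c_{\max}/\log\log N)$ bound from Theorem~\ref{thm-monotonicflow} to the claimed $O(L\log^{2}n/\log\log n)$, which is exactly the intended derivation.
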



Thus, via Theorem~\ref{thm-klayerflow},  we can obtain a confluent flow $h$ in the $k$-layer network $H$ with both node congestion and length being  bounded. Nevertheless, since $H$ is constructed from logarithmic copies of nodes in $G$, the constructed  confluent flow $h$ in $H$ may induce a non-confluent flow in $G$, because some vertices $v$ might contain logarithmic out-flow edges. We show by Lemma~\ref{lmm-rerouteflow} that there is a polynomial-time scheme for re-routing $h$ into a confluent flow in the original network $G$.
Also, although  we bound  {\em node} congestion in $H$, the original network $G$ is in fact {\em edge}-capacitated and we are actually interested in the edge congestion. Fortunately, our construction of multi-layer networks can be patched. With the help of the monotonic structure and dummy sinks, we can bound the edge congestion by Lemma~\ref{lmm-edgecongestion}.

Combining everything,  we conclude that (see also Appendix~\ref{Apd:StaticConFlow})
\vspace{-0.2cm}\begin{theorem}\label{thm-generallengthboundflow}
Suppose $G$ is a directed/undirected edge-capacitated network with one sink. If there is an $L$-length-bounded confluent flow for routing all supplies with edge congestion at most 1 in $G$, then, there exists a polynomial-time algorithm for finding a confluent flow for routing {\em a subset of supplies} with value
at least $\sum_{i\in[\kappa]}d_i/O(\log^{2} \kappa)$, with edge congestion $O(\log^{8} n)$ and  flow length  $O(L \cdot \log^3 n/\log\log n)$ whp.
\end{theorem}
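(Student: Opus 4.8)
The plan is to chain together the preprocessing reduction, the $k$-layer monotonic construction, Theorem~\ref{thm-klayerflow}, and the re-routing lemmas (Lemmas~\ref{lmm-rerouteflow} and~\ref{lmm-edgecongestion}) into one pipeline. First I would apply the preprocessing described above to collapse the general edge-capacitated instance to the uniform unit-supply case: discard every demand of size at most $d_{\max}/2\kappa$, which costs only a constant factor of the optimal value; round the surviving demands up to the nearest power of two and partition them into the $O(\log\kappa)$ resulting size-classes; then route each class separately and return the most valuable outcome. Handling one class in isolation and taking the best class loses a single $O(\log\kappa)$ factor, after which, by scaling, every supply equals $1$, and (rounding capacities and deleting edges of capacity below the supply size) every edge capacity is a power of two in $[1,\kappa]$, so $c_{\max}\le\kappa$ and the layered network $H$ has only $k=O(\log\kappa)$ layers.

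Next I would transport the hypothesized optimal flow from $G$ into $H$. The assumption supplies an $L$-length-bounded confluent---hence splittable---flow of edge congestion at most $1$ routing all unit supplies in $G$. I would lift it by sending the flow crossing each edge $(u,v)$ through the layer whose node capacity matches that edge's power-of-two capacity, injecting each supply at its layer-$0$ copy and moving it up the zero-length horizontal arcs as its accumulated magnitude grows. The purpose of the monotone layering---capacities never decrease along an arc, and each horizontal capacity $2^j$ tracks the layer index---is that a flow of edge congestion at most $1$ in $G$ can be spread across the copies $v^0,\dots,v^{k-1}$ so that no copy $v^i$ carries more than its capacity $2^i$. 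Establishing that this lift has node congestion at most $1$, still routes all unit supplies, and keeps length $\le L$ (horizontal arcs having length $0$) is exactly the step that justifies invoking Theorem~\ref{thm-klayerflow}; that theorem then returns, in polynomial time and whp, a $1$-satisfiable confluent flow $h$ in $H$ of node congestion $O(\log^8 n)$ and length $O(L\log^2 n/\log\log n)$.

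Finally I would pull $h$ back to a confluent flow in $G$. Because $H$ holds $k=O(\log\kappa)$ copies of each vertex, a vertex $v$ of $G$ may inherit up to $k$ distinct out-edges from the copies $v^0,\dots,v^{k-1}$, so $h$ need not be confluent in $G$; Lemma~\ref{lmm-rerouteflow} gives a polynomial-time scheme that redirects $h$ into a flow that is genuinely confluent in $G$ while retaining supplies of total value at least an $\Omega(1/\log\kappa)$ fraction, at the cost of inflating path lengths by at most an $O(\log n)$ factor, which upgrades the length bound to $O(L\log^3 n/\log\log n)$. Combining this $\Omega(1/\log\kappa)$ retention with the $O(\log\kappa)$ loss already charged to the size-class split yields the claimed value $\sum_{i\in[\kappa]}d_i/O(\log^2\kappa)$. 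Since we ultimately care about \emph{edge} congestion in the edge-capacitated $G$ rather than node congestion in $H$, I would then invoke Lemma~\ref{lmm-edgecongestion}, whose applicability is precisely what the monotone layering and dummy-sink gadgets are engineered to guarantee, to convert the $O(\log^8 n)$ node congestion into an $O(\log^8 n)$ edge congistion in $G$.

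I expect the main obstacle to be this final pull-back. The re-routing must simultaneously keep the flow confluent in $G$, preserve an $\Omega(1/\log\kappa)$ fraction of the supply value, and keep lengths within an $O(\log n)$ factor, all while the node-to-edge congestion conversion of Lemma~\ref{lmm-edgecongestion} must not disturb confluence; reconciling these four requirements, rather than meeting any single bound, is where the real care lies, and it is the reason the construction of $H$ (its layers, its length-zero horizontal arcs, and its dummy sinks) must be exactly as specified.
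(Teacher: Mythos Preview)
Your pipeline is the same as the paper's, and the accounting of the two $O(\log\kappa)$ losses, the $O(\log^8 n)$ congestion from Theorem~\ref{thm-klayerflow}, and the extra length factor $k=O(\log\kappa)\le O(\log n)$ from Lemma~\ref{lmm-rerouteflow} is exactly right.

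There is, however, one genuine gap. You write that you ``lift'' the hypothesized $L$-length-bounded confluent flow from $G$ into $H$ and then feed this lift to Theorem~\ref{thm-klayerflow}. But the theorem you are proving promises a \emph{polynomial-time algorithm}, and the hypothesized optimal confluent flow is only assumed to \emph{exist}; you do not have it (indeed, finding it is the hard problem). Theorem~\ref{thm-klayerflow} needs an actual splittable flow $f$ in $H$ with node congestion $\le 1$ as input. The paper uses your lifting argument only as an \emph{existence} proof: the lift of the (unknown) optimal shows that $H$ admits an $L$-length-bounded flow of node congestion $\le 1$ routing all unit supplies. Having established existence, the paper then \emph{computes} a feasible $O(1)L$-length-bounded splittable flow $\tilde f$ in $H$ by standard LP-based techniques for length-bounded flows (it cites Baier's thesis), and it is $\tilde f$ that is fed to Theorem~\ref{thm-klayerflow}. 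You should insert this step; without it your proposal is not yet an algorithm.

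A smaller point on the lift itself: your description (``spread across the copies $v^0,\dots,v^{k-1}$'') is vaguer than necessary. The paper exploits confluence directly: since in a confluent flow each $v$ has a single out-edge, the total flow through $v$ equals the flow on that edge, and in the layer whose capacity matches that edge's capacity the node and edge congestions coincide. That is why node congestion $\le 1$ follows immediately from edge congestion $\le 1$, with no spreading argument needed.
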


\vspace{-0.4cm}
\subsection{Polylogarithmic Approximation for the Confluent Dynamic Flows}
\vspace{-0.2cm}

With the techniques developed and  transformations
(Lemmas~\ref{lmm-trans1} and~\ref{lmm-trans2}),  the polylogarithmic approximation for the confluent dynamic problem  immediately follows. Note that our algorithms  
do not  use any storage at intermediate nodes.

\vspace{-0.2cm}\begin{theorem}\label{thm-PolylogConQuickFlow}
In directed/undirected, edge-capacitated dynamic networks, there is a polynomial-time algorithm that constructs an $(O(\log^{8} n), O(\log^{2}\kappa))$-approximation  for the single-sink Confluent Quickest Flow problem whp.
\end{theorem}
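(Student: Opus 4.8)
The plan is to assemble Theorem~\ref{thm-PolylogConQuickFlow} directly from the static machinery of Theorem~\ref{thm-generallengthboundflow} together with the two static-to-dynamic transformations in Lemmas~\ref{lmm-trans1} and~\ref{lmm-trans2}. Since the heavy lifting (routing a confluent, length-bounded static flow with controlled congestion) is already done, the remaining task is to show that the three static guarantees of Theorem~\ref{thm-generallengthboundflow} --- value loss, edge congestion, and flow length --- translate into the two dynamic guarantees demanded by Definition~\ref{def:bic}(2).

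First, let $T^*$ denote the optimal quickest-flow time and binary-search over the polynomially many candidate integer horizons $T$ (between $0$ and the trivial upper bound $\sum_e \ell(e) + D/\min_e c(e)$, where $D=\sum_i d_i$). For each candidate $T\ge T^*$, Lemma~\ref{lmm-trans1} supplies the hypothesis of the static theorem: averaging the (existing) optimal dynamic confluent flow over $[0,T]$ yields a static confluent flow of length at most $T$ and edge congestion at most $1$ that routes supply at rate $d_i/T$ from each source. The congestion bound holds because each edge $e$ admits at most $c(e)$ units per step, so its total volume over $[0,T]$ is at most $c(e)\,T$; the length bound holds because every used path must deliver its flow by time $T$. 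Thus a $T$-length-bounded (splittable) flow of congestion $1$ routing all the scaled supplies exists, which is exactly what Theorem~\ref{thm-generallengthboundflow} requires, and such a fractional length-bounded flow is itself computable on the polynomial-size layered network underlying Theorem~\ref{thm-klayerflow}.

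Next I would feed this into Theorem~\ref{thm-generallengthboundflow}, obtaining in polynomial time a confluent static flow that routes a subset $S$ of the (scaled) supplies of total value at least $(D/T)/O(\log^2\kappa)$, with edge congestion $O(\log^8 n)$ and flow length $O\!\left(T\log^3 n/\log\log n\right)$. Finally, Lemma~\ref{lmm-trans2} converts this back into a dynamic flow along the \emph{same} confluent support: because each tree edge $e$ carries total dynamic volume at most $(\text{congestion})\cdot c(e)\cdot T$ while admitting $c(e)$ units per step, a temporally-repeated schedule (equivalently, scaling the static flow down to congestion $1$ and then repeating it) completes in time $O(\text{congestion}\cdot T + \text{length}) = O(T\log^8 n) + O(T\log^3 n/\log\log n) = O(T\log^8 n)$, while delivering the full original supply $d_i$ for each $i\in S$, so that $\sum_{i\in S} d_i \ge D/O(\log^2\kappa)$. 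Taking the smallest successful $T$ (which is at most $T^*$ by the averaging argument) yields a dynamic confluent flow routing at least a $1/O(\log^2\kappa)$ fraction of the total supply within time $O(T^*\log^8 n)$; matching this against Definition~\ref{def:bic}(2) with objective $f=$ completion time and constraint $g=$ routed supply identifies the time factor $\alpha=O(\log^8 n)$ and the supply factor $\beta=O(\log^2\kappa)$, as claimed.

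The step I expect to be the real obstacle is the clean composition inside Lemma~\ref{lmm-trans2}: one must argue that replacing the congestion-$\rho$ static flow by a dynamic flow on the same confluent tree incurs completion time only \emph{additive} in the length and the bottleneck term $\rho T$, rather than letting waiting delays accumulate edge-by-edge down the tree. Confluence and the single sink are what make this work --- flow only merges as it descends, so the binding quantity is the maximum volume-to-capacity ratio on any single edge --- but making this precise, and checking that the $O(\log^3 n/\log\log n)$ length factor is genuinely dominated by the $O(\log^8 n)$ congestion factor so that only one polylog survives in the final time bound, is where the care lies. Everything else is bookkeeping: the supply-grouping preprocessing that already cost an $O(\log\kappa)$ factor is folded into the $O(\log^2\kappa)$ guarantee, the binary search adds only a logarithmic overhead, and running the same pipeline in the opposite coordinate order yields the companion Max-Flow-Over-Time bound.
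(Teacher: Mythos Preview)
Your proposal is correct and follows essentially the same route as the paper's proof: apply Lemma~\ref{lmm-trans1} to the optimal dynamic flow to get a length-bounded feasible static confluent flow with scaled supplies, feed this into Theorem~\ref{thm-generallengthboundflow}, scale the resulting high-congestion confluent flow down by $O(\log^8 n)$ to make it feasible, and then invoke Lemma~\ref{lmm-trans2} with $z=O(\log^8 n)\cdot T$ to obtain the dynamic flow, with a binary search over $T$ to handle the unknown optimum. The one place you over-worry is the ``real obstacle'' in your last paragraph: once the static flow has been scaled to edge congestion at most~$1$, Lemma~\ref{lmm-trans2} is just the temporally-repeated flow construction, so no waiting ever occurs and the time is exactly $L+z$ with no edge-by-edge accumulation to control; confluence plays no special role there beyond being preserved.
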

\vspace{-0.2cm}
\begin{theorem}\label{thm-PolylogConFlowOverTime}
In directed/undirected, edge-capacitated dynamic networks, there is a polynomial-time algorithm that constructs an $(O(\log^{2} \kappa), O(\log^{8}n))$-approximation for the single-sink Confluent Maximum Flow Over Time problem whp.
\end{theorem}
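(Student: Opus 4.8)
The plan is to mirror the argument already carried out for the Confluent Quickest Flow problem (Theorem~\ref{thm-PolylogConQuickFlow}), but with the two objectives of Definition~\ref{def:bic} exchanged: for the Maximum Flow Over Time problem the quantity to be \emph{maximized} is the amount of supply delivered by the horizon $T$, while the time horizon now plays the role of the bounded resource. So for a maximization problem of type~1 in Definition~\ref{def:bic}, the factor $\alpha=O(\log^2\kappa)$ measures how much of the optimum flow we deliver and the factor $\beta=O(\log^8 n)$ measures by how much we stretch the horizon. Everything is routed through the static length-bounded confluent machinery of Theorem~\ref{thm-generallengthboundflow}, connected to the dynamic setting by the transformation Lemmas~\ref{lmm-trans1} and~\ref{lmm-trans2}, which translate between confluent dynamic flows over time and $L$-length-bounded confluent static flows via \emph{temporal repetition}.

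First I would apply Lemma~\ref{lmm-trans1} to the optimal confluent dynamic flow with horizon $T$. This extracts a static $L$-length-bounded confluent flow with $L$ of order $T$, edge congestion at most $1$, whose static value is proportional to $\mathit{OPT}/T$, where $\mathit{OPT}$ is the supply the optimal dynamic flow delivers by time $T$. The underlying quantitative fact is that temporally repeating an $L$-length-bounded static flow of rate $|x|$ over a horizon $T$ delivers at least $(T-L)\,|x|$ units over time, so up to the $(T-L)$ factor the static value certifies the dynamic value, and conversely the optimal dynamic value is captured by such a static flow.

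Next I would feed this static instance into Theorem~\ref{thm-generallengthboundflow}. Since an $L$-length-bounded confluent flow with edge congestion $1$ routing the full supply exists, the theorem produces, in polynomial time and whp, a confluent static flow routing a subset of supplies of value at least $\sum_{i\in[\kappa]} d_i/O(\log^2\kappa)$, with edge congestion $O(\log^8 n)$ and flow length $O(L\log^3 n/\log\log n)$. The subset-routing loss is exactly what becomes the flow-delivery factor $\alpha=O(\log^2\kappa)$. To return to the dynamic network I would invoke Lemma~\ref{lmm-trans2}: scale the static flow rate down by its edge congestion $O(\log^8 n)$ so it becomes feasible, and temporally repeat it. Because the support of a confluent static flow is an in-tree, temporal repetition preserves confluence and needs no storage at intermediate nodes; scaling the rate down by $O(\log^8 n)$ means the same total supply takes $O(\log^8 n)$ times as long to push through, while the length blowup $O(\log^3 n/\log\log n)$ contributes a dominated term, stretching the horizon to at most $O(\log^8 n)\,T$. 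This is the time factor $\beta=O(\log^8 n)$, and combining the two factors gives the claimed $(O(\log^2\kappa),O(\log^8 n))$-approximation.

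I expect the main obstacle to be the bookkeeping in the two conversion steps rather than any single hard idea. The delicate point is verifying that temporal repetition of the \emph{approximate} static flow still delivers the claimed $1/O(\log^2\kappa)$ fraction of $\mathit{OPT}$ within horizon $O(\log^8 n)\,T$, i.e.\ that the $(T-L)$ loss in the Ford--Fulkerson-style temporal repetition does not erode the guarantee once $L$ has itself been inflated to $O(L\log^3 n/\log\log n)$, and that the congestion-to-time rescaling composes cleanly with that length bound. The structural subtlety underlying all of this is that rescaling rates and repeating over time must never force flow out of a node along two distinct edges; since the flow support is a tree terminating at the sink, confluence is preserved automatically, which is precisely what makes the reduction go through.
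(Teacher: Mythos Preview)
Your high-level plan (Lemma~\ref{lmm-trans1} $\to$ Theorem~\ref{thm-generallengthboundflow} $\to$ scale down by congestion $\to$ Lemma~\ref{lmm-trans2}) matches the paper's, and your accounting of the two factors $O(\log^2\kappa)$ and $O(\log^8 n)$ is correct. However, there is one genuine gap that distinguishes the Max Flow Over Time case from the Quickest Flow case and that your proposal does not address.

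In the Quickest Flow problem the per-source supplies $d_i$ are part of the input, so the hypothesis of Theorem~\ref{thm-generallengthboundflow} (``all supplies'') is well defined and the algorithm of that theorem can be run directly. In the Maximum Flow Over Time problem the sources have \emph{unlimited} supply; there are no $d_i$ given. When you write ``feed this static instance into Theorem~\ref{thm-generallengthboundflow}'' and then invoke the conclusion ``routing a subset of supplies of value at least $\sum_{i\in[\kappa]} d_i / O(\log^2\kappa)$'', you are silently assuming values $d_i$ that you have no algorithmic way to obtain: the static flow you extract via Lemma~\ref{lmm-trans1} comes from the \emph{optimal} dynamic flow, which is unknown. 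The paper closes this gap with an extra step: after noting (existentially) that some $T$-length-bounded confluent static flow of total value $\ge \mathit{OPT}/T$ exists, it \emph{computes} a splittable $O(1)\cdot T$-length-bounded maximum flow using standard LP techniques, reads off the per-source values $d_i$ that this splittable flow sends (with $\sum_i d_i \ge \mathit{OPT}/T$), and only then invokes Theorem~\ref{thm-generallengthboundflow} on those supplies. This is precisely the point where ``mirroring the Quickest Flow proof'' breaks down and an additional idea is required. Once you insert this step, the rest of your argument (scaling by $O(\log^8 n)$, temporal repetition, and the time-horizon bookkeeping $O(\log^3 n/\log\log n)\cdot T + O(\log^8 n)\cdot T = O(\log^8 n)\cdot T$) goes through exactly as in the paper.
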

Our technique can be  restricted to static flows, yielding 
\vspace{-0.2cm}\begin{theorem}\label{thm-StaticConfluent}
In directed/undirected, edge-/node-capacitated static networks that satisfy the no-bottleneck assumption, there is a polynomial-time algorithm that constructs an $O(\log^{10}\kappa)$-approximation for the single-sink Demand Maximization Confluent Flow problem whp.
\end{theorem}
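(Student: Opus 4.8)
The plan is to mirror the pipeline behind Theorem~\ref{thm-generallengthboundflow}, but to replace the length-bounded monotonic routing of Theorem~\ref{thm-monotonicflow} (which pays $O(\log^8 n)$) with the length-free, NBA version of Theorem~\ref{thm-monotonicflow-relaxed} (which pays only $O(\log^8 \kappa)$), since the Demand Maximization objective imposes no constraint on flow length. First I would reduce to the uniform-supply, bounded-capacity case exactly as in Section~\ref{sec:StaticConFlow}: discard every demand smaller than $d_{\max}/2\kappa$ (losing at most a factor $2$ in value), round the surviving demands up to powers of two and partition them into the resulting $O(\log\kappa)$ size classes, then route each class separately and return the best outcome (losing a further $O(\log\kappa)$ factor). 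After rescaling, all supplies equal $1$; rounding capacities to powers of two and capping them at $\kappa d_{\max}$ leaves $O(\log\kappa)$ distinct capacity values, so the associated $k$-layer monotonic network $H$ has only $k=O(\log\kappa)$ layers.

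Next I would compute a maximum (splittable) flow in $H$ with node congestion at most $1$; since any confluent solution in $G$ embeds into $H$ as such a splittable flow, its value is at least the confluent optimum for the current size class. Feeding this into Theorem~\ref{thm-monotonicflow-relaxed} yields, under the NBA, a confluent $1$-satisfiable flow in $H$ of node congestion $O(\log^8\kappa)$ whose value is within $O(\log^2\kappa)$ of the class optimum, the extra $O(\log^2\kappa)$ absorbing the demand- and capacity-grouping losses as in Theorem~\ref{thm-generallengthboundflow}. I would then re-route this flow from $H$ back to $G$ using the analogues of Lemmas~\ref{lmm-rerouteflow} and~\ref{lmm-edgecongestion}: because each node of $G$ has $O(\log\kappa)$ copies in $H$, and the monotone structure together with the dummy-sink gadgets controls how these copies merge, the induced flow in $G$ is confluent with edge congestion still $O(\log^8\kappa)$. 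Finally, to obtain a \emph{feasible} (congestion-$\le 1$) confluent flow I would keep a sub-collection of the support trees: the NBA guarantees that each individual demand fits within every capacity, so a standard bucketing/greedy selection retains value at least a $1/O(\log^8\kappa)$ fraction while driving congestion down to $1$. Multiplying the value losses, namely the factor $2$, the $O(\log^2\kappa)$ grouping loss, and the $O(\log^8\kappa)$ feasibility scaling, gives the claimed $O(\log^{10}\kappa)$ approximation.

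The main obstacle is carrying the congestion bound at $O(\log^8\kappa)$ rather than $O(\log^8 n)$ through the whole construction. As the introduction warns, the uncapacitated monotonic sub-networks produced inside the routing can have $\Theta(n)$ induced sources and sinks, so naively invoking the uncapacitated router of Theorem~\ref{thm-mergetree} would reintroduce a $\log n$ dependence. The crux is the combinatorial argument underlying Theorem~\ref{thm-monotonicflow-relaxed} that only $O(\kappa)$ flow paths can cross between two consecutive sub-networks, so that each sub-network sees only $O(\kappa)$ effective sources; I must verify that the layered construction and the $H$-to-$G$ re-routing preserve this property and the no-bottleneck assumption, so that every application of Theorem~\ref{thm-mergetree} and Theorem~\ref{thm-monotonicflow-relaxed} stays in the $\mathrm{poly}(\log\kappa)$ regime.
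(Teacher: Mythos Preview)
Your proposal follows essentially the same pipeline as the paper: pre-process into $O(\log\kappa)$ uniform-supply classes, build the $k$-layer monotonic network $H$ with $k=O(\log\kappa)$, compute a maximum (splittable/integral) flow in $H$ that upper-bounds the confluent optimum, invoke Theorem~\ref{thm-monotonicflow-relaxed} to make it confluent with node congestion $O(\log^8\kappa)$, and push it back to $G$ via Lemmas~\ref{lmm-rerouteflow} and~\ref{lmm-edgecongestion}. Your identification of the key obstacle (keeping the bound $\mathrm{poly}(\log\kappa)$ through the sub-network decomposition) is exactly what Theorem~\ref{thm-monotonicflow-relaxed} was designed to handle, and the NBA indeed survives the layering because, after discarding edges of capacity below the unit supply size, every remaining capacity is at least $1$.

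The one place your write-up is too loose is the final feasibility step. You say that ``a standard bucketing/greedy selection'' on the support tree recovers a $1/O(\log^8\kappa)$ fraction of the value at congestion~$1$. The paper does not do this by greedy selection; it first scales the confluent flow down by the congestion factor to obtain a \emph{feasible fractional} routing on the tree (each demand routed to extent $1/O(\log^8\kappa)$), and then invokes the tree-rounding theorem of Chekuri, Mydlarz and Shepherd (\cite{Chekuri2007}): on an edge-capacitated tree with the NBA, any fractional routing with profile $(\gamma_i)$ can be rounded, in polynomial time, to an integral feasible routing of total demand $\Omega(\sum_i\gamma_i d_i)$. That result is what actually buys the last $O(\log^8\kappa)$ factor; a naive greedy or bucketing on a tree with arbitrary edge capacities and shared paths does not obviously give a constant-factor guarantee. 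So your outline is correct, but in the write-up you should replace the hand-waved selection step with the explicit scale-then-round argument using \cite{Chekuri2007}.
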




{{\bf Acknowledgement:} 

We would like to thank the authors of \cite{shepherd2015polylogarithmic} for providing us with  a pre-print of the full version of their paper}

\vspace*{-.2 in}

{
\bibliography{Evacuation_problems}
}

\newpage

\appendix

\section{An Example of a Dynamic Flow Problem}\label{Apd:Preliminaries}

Examples of quickest flows in both the unsplittable and confluent cases are given in Figure \ref{fig:Types of Flow} so as to provide intuition.

\begin{figure}
\begin{center}
\includegraphics[scale=0.7]{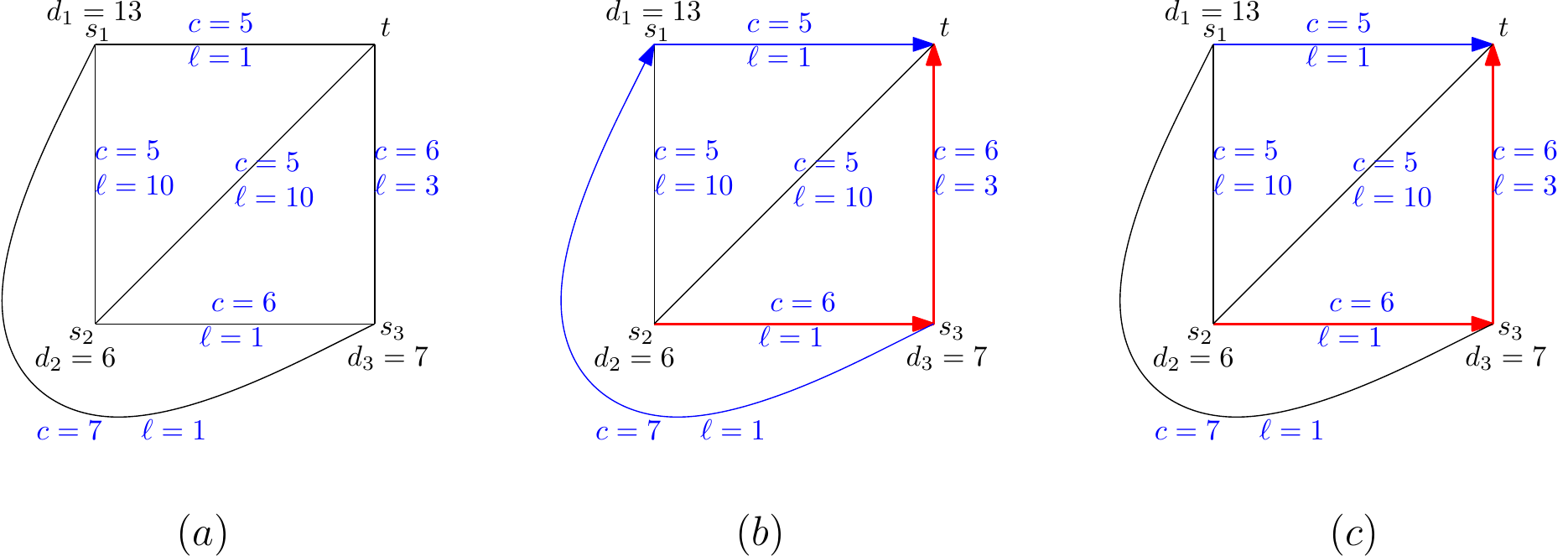}
\caption{(a) gives a simple example of a dynamic flow graph.
(b) and (c) illustrate quickest flows for this input that move the supplies $d_i$   from $s_1,s_2,s_3$ to $t.$ (b) is an optimal quickest for the unsplittable flow condition.
It routes $s_1 \rightarrow t,$  $s_2\rightarrow s_3 \rightarrow t$ and $s_3 \rightarrow s_1 \rightarrow t$.  In this solution, the flow from $s_3$ orignally needs  to wait at $s_1$ to get onto the edge $(s_1,t)$ but all flow from $s_3$ finishes arriving at $t$ at 4 time units, the same time as the flow from $s_2$ fully arrives at $t$.  Note that (b) is {\em not} a confluent flow since some flow at $s_3$  (the original supplies there),  leave via $(s_3,s_1)$ while other flow from $s_3$ (the flow starting at $s_2$), leaves via $(s_3,t)$.
(c) is an optimal quickest flow for the confluent flow condition.  It routes   $s_1 \rightarrow t$  and  $s_2\rightarrow s_3 \rightarrow t$.
Note that, because of confluence,  $s_2\rightarrow s_3 \rightarrow t$ implies $s_3 \rightarrow t$, i.e., all of the flow {\em starting} at $s_3$ must also leave through the edge $(s_3,t).$ All flow reaches $t$ after $5$ units of time.
}\vspace{-0.8cm}
\label{fig:Types of Flow}
\end{center}
\end{figure}


\section{Proofs of Constant Approximation Hardness of Single-Sink Unsplittable and Confluent Quickest Flows}\label{apd:Proof_of_Apd_2_Approx}


We start by constructing the  reduction from  Two-Disjoint Paths (Capacitated) to Quickest Flow.
We consider both directed and undirected graphs but only show the  details for undirected graphs (since directed graphs are similar).
Let $\mathcal{I}$ be an instance of the undirected version of the Two-Disjoint Paths (Capacitated) problem.   
We construct a network  based on $\mathcal{I}$, by connecting the unique sink $t$ with $y_1$ and $y_2$, and connecting  two sources $s_1$ and $s_2$ with $x_1$ and $x_2$, respectively. Let all edge lengths be 1. Choose the parameter $\beta=2\alpha$ in $\mathcal{I}$, and set the capacities of $(s_1,x_1)$ and $(y_1,t)$ as $\alpha$, and those of $(s_2,x_2)$ and $(y_2,t)$ as $\beta$ (see Figure~\ref{fig:3/2instance}). Finally, place supplies $M\alpha$ and $M\beta$ on the sources $s_1$ and $s_2$, respectively, where $M$ is a sufficiently large constant, $p$ is the number of vertices in $\mathcal{I}$ and  $M\gg p$.


Based on the instance constructed, we prove Theorem~\ref{thm-2approfixed} as follows.
\begin{proof}[Proof of Theorem~\ref{thm-2approfixed}]
We first consider the proof for unsplittable flows.
In the undirected graphs,
if $\mathcal{I}$ is a YES-instance (see Figure~\ref{fig:3/2instance}(a)), then there exist two edge-disjoint paths connecting $t$ with $s_1$ and $s_2$, respectively such that the path connecting $s_2$ with $t$ uses only edges with capacity $\beta.$  Recall that by construction the first edge on the path from $s_1$ has capacity $\alpha.$
Thus,  the optimal plan is to send $\alpha$ (and $\beta$) units of supplies along the edge-disjoint path from $s_1$ (and $s_2$) to $t$. The  time for routing all supplies  is then at most $M+p+2$. Note that the flow constructed is unsplittable.


However, if $\mathcal{I}$ is a NO-instance, then either there don't exist two edge-disjoint paths connecting $t$ with $s_1$ and $s_2$ (see Figure~\ref{fig:3/2instance}(b)), or there exist two edge-disjoint paths but the path from $s_2$ to $t$ must use  at least one edge with capacity $\alpha$ (see Figure~\ref{fig:3/2instance}(c)). In the first case at most $\beta$ units of flow can reach $t$ in any time unit and in the second case at most $2 \alpha$ units of flow can reach $t$ in any time unit.
Thus, since  a total of  $M(\alpha+\beta)$ units of flow need to be routed, the time for routing all supplies to $t$ via an unsplittable flow is at least $M(\alpha+\beta)/(2\alpha)=3M/2$ (this is a rough lower bound since it doesn't take into account the time that it takes for the first unit of flow to arrive at $t$).

Let $\epsilon:=\frac{3(p+2)}{2(M+p+2)}>0$. It immediately follows that, if the single-sink Unsplittable Quickest Flow problem in undirected graphs
can be approximated to within a factor $3/2-\epsilon$, then one can determine whether the instance $\mathcal{I}$ is a YES- or NO-instance, which is NP-hard in undirected graphs. This immediately implies   the $3/2$-approximation hardness in the Fixed-Sink setting for unsplittable flow.

For the Confluent Quickest Flow in undirected graphs, the instance $\mathcal{I}$ to be used is the node-disjoint version of Two-Disjoint Paths (Capacitated) problem. The hardness result follows the same analysis.

The analysis of directed graphs is similar, except that we use the directed version of Two-Disjoint Paths (Capacitated) problem, and connect $s_i$ with $x_i$ by an arc $(s_i, x_i)$, and connect $y_i$ with $t$ by an arc $(y_i, t)$ ($i=1,2$). This completes the proof.
\end{proof}

\begin{figure}[t]
\begin{center}
\vspace{-.0cm}\hspace{-0.3cm}{\includegraphics[scale=0.4]{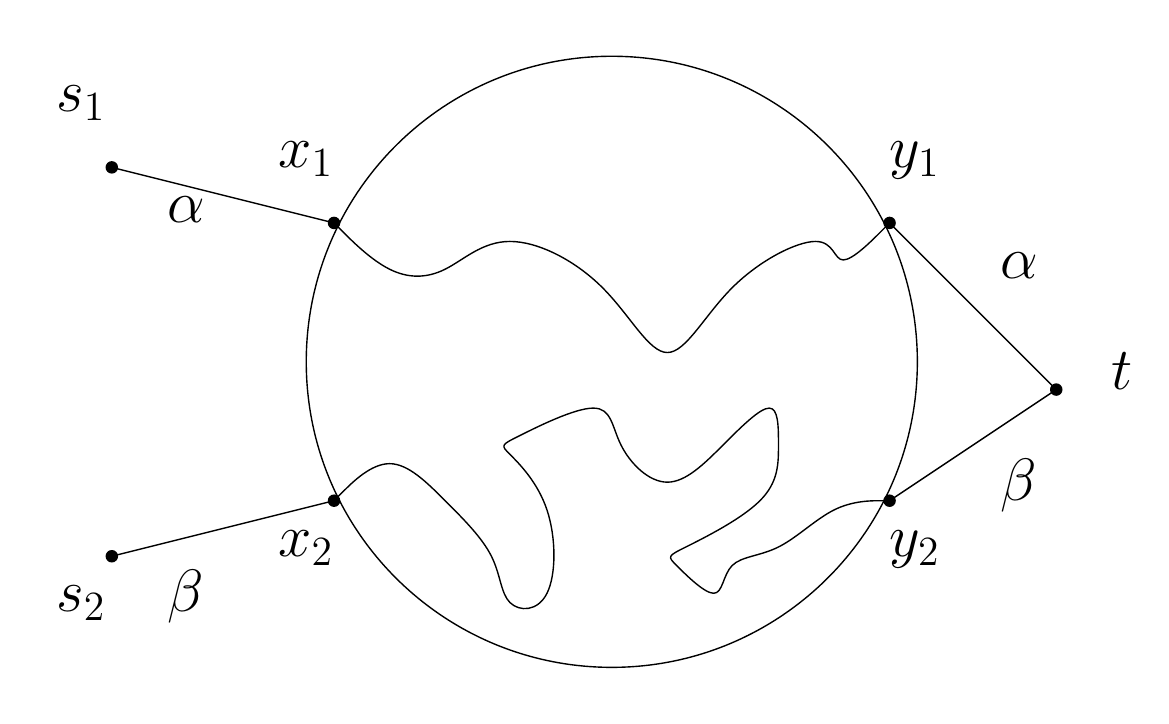}}\hspace{-0.3cm}
{\includegraphics[scale=0.4]{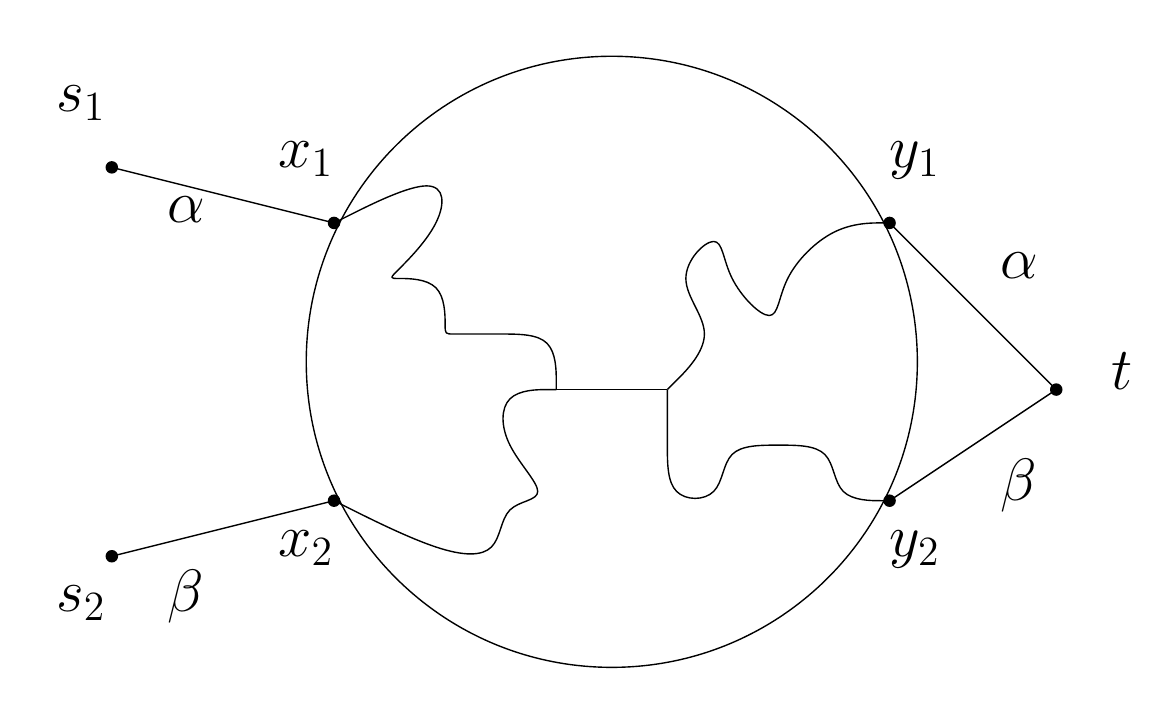}}\hspace{0.2cm}\hspace{-0.3cm}
{\includegraphics[scale=0.4]{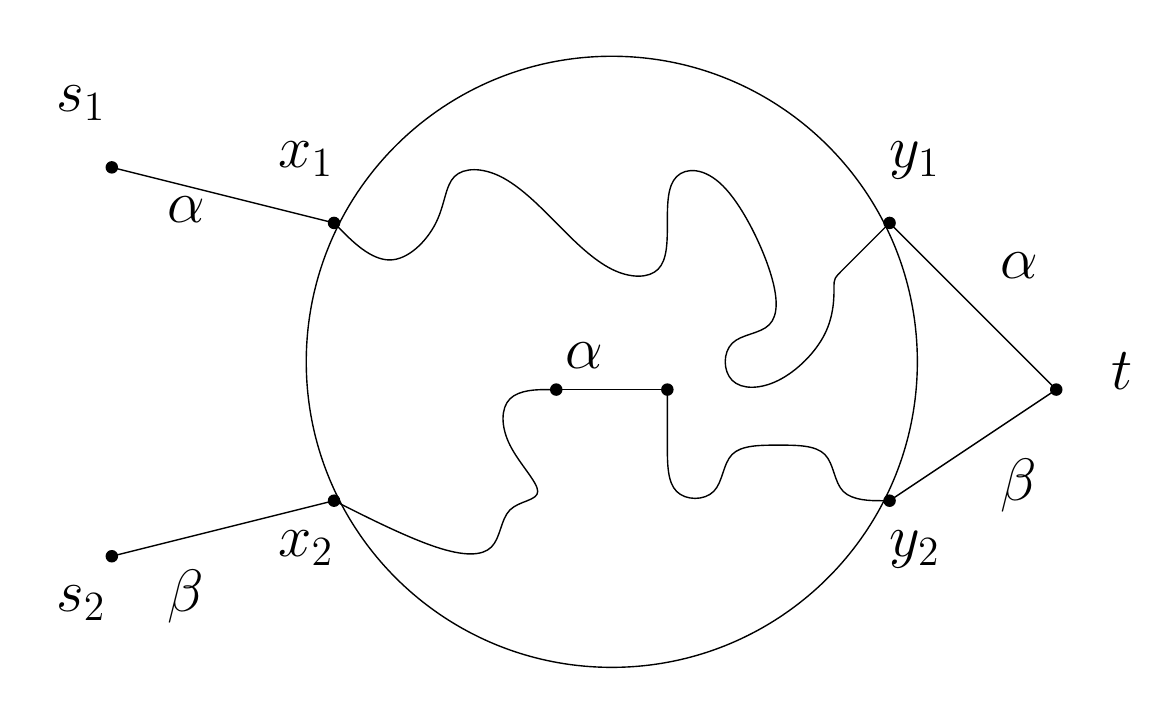}}\hspace{0.2cm}$ $
\\ \vspace{0cm}{\footnotesize\hspace{0cm}(a) Yes-Instance\hspace{2cm}(b) No-Instance case (i) \hspace{2cm} (c) No-Instance case (ii)}
\\ \caption{
Hard instances for the  single-sink Unsplittable Quickest Flow 
  and Maximum  Flow Over Time 
  problems.
}\vspace{-0.8cm}
\label{fig:3/2instance}
\end{center}
\end{figure}



\section{Proofs of Logarithmic Approximation Hardness of Single-Sink Confluent Quickest Flows}\label{App:ApdLogHardness}



\begin{figure}
\begin{center}
\vspace{-0cm}\hspace{-0.4cm}{\includegraphics[scale=0.4]{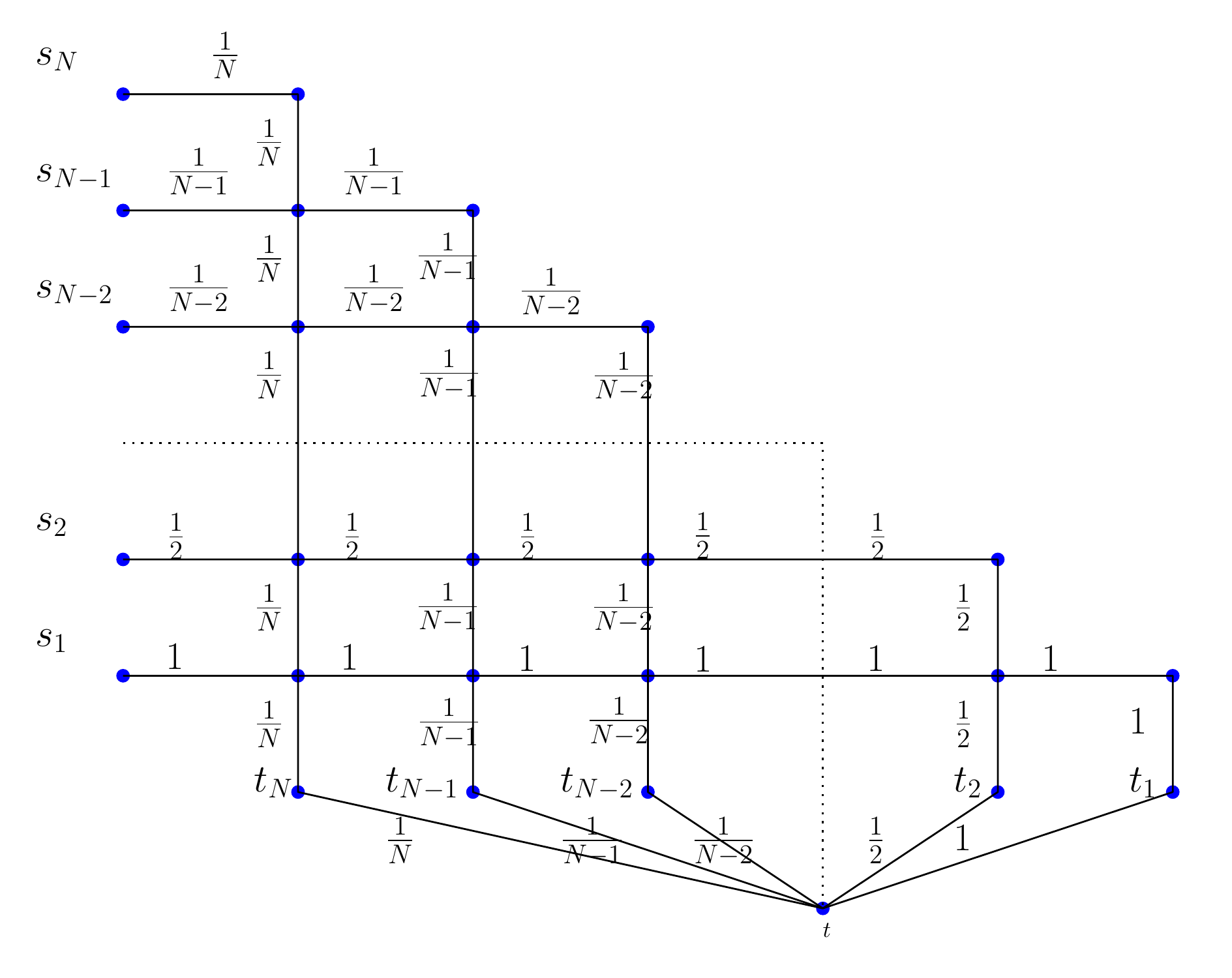}}
\\ \caption{Hard instance for the  single-sink
confluent dynamic flow problem.}\vspace{-0.8cm}
\label{fig:half-grid}
\end{center}
\end{figure}

Based on the hard instance in Figure~\ref{fig:half-grid}, we have the following lemmas.
\begin{lemma}\label{lmm-singlenode1}
If $\mathcal{I}$ is a YES-instance, then it takes at most $M^2+Np+N+2$ time for all supplies to be sent to the sink $t$ via a dynamic confluent flow.
\end{lemma}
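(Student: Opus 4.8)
The plan is to use the YES-instance to exhibit an explicit confluent routing in which the $N$ sources travel to $t$ along mutually edge-disjoint paths, so that the completion time splits into a \emph{congestion} term (from the ratio of supply to capacity) and a \emph{length} term. Because $\mathcal{I}$ is a YES-instance, each copy of $\mathcal{I}$ sitting at a crossing contains node-disjoint paths $P_1$ (joining the two lower-capacity row edges, incident to $x_1,y_1$) and $P_2$ (joining the two higher-capacity column edges, incident to $x_2,y_2$). These are precisely the two ``straight-through'' connectors needed to send row-traffic and column-traffic across a crossing without letting them merge.

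Concretely, I would send the supply $M^2/i$ of each source $s_i$ along row $i$ and then down column $i$, through $t_i$ and across the edge $(t_i,t)$ to the sink; at every crossing the row flow is carried by $P_1$ and the column flow by $P_2$. Two points have to be checked. First, \emph{confluence}: since $P_1$ and $P_2$ are node-disjoint, every internal vertex of a copy lies on at most one of them, so all flow leaving that vertex leaves on a single edge; together with the fact that each row/column edge and each sink edge $(t_i,t)$ is dedicated to a single source, the support is an in-forest terminating at $t$, i.e., the flow is confluent and no edge is shared by two sources. Second, \emph{no waiting}: along the whole $s_i$-route the effective capacity is the constant $1/i$ --- the row-$i$ edges, the column-$i$ edges and $(t_i,t)$ all have capacity $1/i$, and the internal edges of every copy were set to $1/i \ge 1/j$, so they never throttle the through-flow --- hence no unit of $s_i$ ever has to wait behind another.

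With no waiting, the single-source uniform-capacity formula from the preliminaries applies: moving $M^2/i$ units at rate $1/i$ along a route of length $L_i$ finishes at time $\lceil (M^2/i)/(1/i)\rceil - 1 + L_i = M^2 - 1 + L_i$. It remains to bound $L_i$: the route passes through at most $N$ copies of $\mathcal{I}$, using inside each a simple path of length at most $p-1$, and it uses at most $N$ unit-length grid edges linking consecutive copies together with the source and sink edges, so $L_i \le Np + N + 2$. Taking the maximum over $i$, every supply reaches $t$ within $M^2 - 1 + L_i \le M^2 + Np + N + 2$ time, as claimed. I expect the delicate step to be the confluence/edge-disjointness verification, since it is exactly what collapses the analysis to an independent, per-source, uniform-capacity computation and thereby eliminates all congestion; the length count is then a routine traversal estimate.
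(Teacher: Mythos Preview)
Your proposal is correct and follows essentially the same approach as the paper: route each $s_i$ along the $i$-th canonical path (row $i$, then column $i$, then $(t_i,t)$), use the node-disjoint $P_1,P_2$ inside each copy of $\mathcal{I}$ to keep the $N$ routes pairwise disjoint and confluent, and then bound the completion time by the congestion term $M^2$ plus the path-length bound $Np+N+2$. Your write-up is in fact more explicit than the paper's, which simply asserts that $s_i$ can push $1/i$ units per time step ``without affecting others'' and that the canonical path has length at most $Np+N+2$; your verification of confluence and of the uniform bottleneck capacity $1/i$ fills in exactly the details the paper leaves implicit.
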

\begin{proof}
If $\mathcal{I}$ is a YES-instance, there exist two node-disjoint paths inside $\mathcal{I}$: One is from $x_1$ to $y_1$, and the other is from $x_2$ to $y_2$. Hence,  the  source $s_i$ can send $1/i$ supplies per unit time, without affecting others, along the $i$-th row and then down along the $i$-th column to $t_i$ (i.e., the $i$-th {\em canonical path}).
It then takes at most $M^2+Np+N+2$ time to route all the supplies at $s_i$ to $t$, since the length of each canonical path inside $\mathcal{G}$ is at most $Np+N+2$, where $p$ is the number of vertices inside $\mathcal{I}$. The proof is complete.
\end{proof}

Now we consider the case when $\mathcal{I}$ is a NO-instance. For a confluent dynamic flow $f$, its support in $G_N$ is a collection of trees rooting at certain $t_i$'s. Denote the set of those trees as $\mathcal{T}=\{T_{i_1},...,T_{i_k}\}$, where $T_{i_j}\ (j\in[k])$  stands for the union of all the paths carrying the flow and terminating at $t_{i_j}$. Without loss of generality, suppose $i_1>i_2>...>i_k$.

Clearly, those trees in $\mathcal{T}$ are {\em edge-confluent} in $G_N$. That is, when two flow paths share an edge, they must  follow the same path to the same terminal $t_i$. Indeed, if two flow paths share an edge in $G_N$, they must, because the flow is a confluent flow in $\mathcal{G}$,  merge and go to the same terminal.  However, because each degree-4 node in $G_N$ is not a usual node -- it just signifies the  embedding of an   instance $\mathcal{I}$ there --   two flow paths sharing a node in $G_N$ might   go along two node-disjoint paths {\em inside} $\mathcal{I}$ and then continue on  to different terminals. Recall that a degree-4 node in the $i$-th column and $j$-th row of $G_n$ has two  incident vertical edges and two incident horizontal edges.  If two flow paths go along two node-disjoint paths inside  $\mathcal{I}$ and one of the flow paths both enters and leaves  along the vertical  edges and the other both enters and leaves along the horizontal  edges we say that  the two flows are ``crossing'' in $G_N$. (If one enters via a horizontal and leaves via a vertical and the other enters via a vertical and leaves via a horizontal this is not considered ``crossing''.)  Finally,   we say that two {\em trees}  in $\mathcal{T}$ are crossing in $G_N$ if
there is a pair of flows, one from each tree, that are crossing in $G_N$.

The crucial observation is that,  because the instance $\mathcal{I}$ is a NO instance of the directed version of the Two-Disjoint Paths problem,
the  trees in $\mathcal{T}$  must be  non-crossing in $G_N$. That is, for any $i, j\in[k]$ satisfying $i\neq j$, $T_i$ doesn't cross $T_j$. This is because the fact that it is a NO instance means that  there don't exist two node-disjoint paths inside $\mathcal{I}$ such that one is from to $x_1$ to $y_1$ and the other is from $x_2$ to $y_2$.
Under this circumstance, the crossing of $T_i$ and $T_j$ in $G_N$ must induce the merging of two flows in $T_i$ and $T_j$ inside $\mathcal{I}$ to use the same edge, resulting in those flows continuing on to the same terminal, contradicting  the fact that $t_i\neq t_j$.

To bound the maximum flow that can be sent to $t$ via $f$ per unit time, we would like to find  a cut in $\mathcal{T}$ with bounded  (capacity) weight.
Indeed, for those non-crossing, edge-confluent trees in $\mathcal{T}$, we have the following lemma.
\begin{lemma}\label{lmm-treecut}
For a set of non-crossing, edge-confluent trees in $G_N$, i.e., $\mathcal{T}=\{\mathcal{T}_{i_1},\mathcal{T}_{i_2},...,\mathcal{T}_{i_{k^*}}\}$, there exists a cut separating all sources in $\mathcal{T}$ from $t$, whose weight is at most $2$.
\end{lemma}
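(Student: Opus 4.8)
The plan is to turn the combinatorial non-crossing hypothesis into a \emph{single monotone staircase cut} whose weight telescopes to a constant, rather than cutting the trees' root edges one by one. To fix the picture, recall that each tree $T_{i_j}\in\mathcal{T}$ is a monotone subgraph of $G_N$ that funnels flow rightward and downward into its root $t_{i_j}$, that the trees are edge-confluent, and that — because each interior $4$-degree node is a NO-instance of the (directed) Two-Disjoint Paths problem — they are pairwise non-crossing. Order them by root index $i_1>i_2>\cdots>i_{k^*}$. The first thing I would stress is \emph{why the obvious cut fails}: cutting the $k^*$ edges $(t_{i_j},t)$ has weight $\sum_j 1/i_j$, which is $\Theta(\log N)$ when the roots are $1,2,3,\dots$, so any weight-$2$ cut must be placed inside the grid and must genuinely exploit non-crossing.

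First I would establish a structural claim: monotonicity together with non-crossing forces the trees to sit side by side (or nested) without interleaving, so that every pair of consecutive trees $T_{i_j},T_{i_{j+1}}$ is separated by a monotone (down-and-right) lattice path. Gluing these separators along the \emph{source side} of $\bigcup_j T_{i_j}$ produces one monotone staircase $\gamma$ that runs from the source leg to the bottom leg of the half-grid with every source of $\mathcal{T}$ strictly on one side and $t$ on the other. The required cut is then the set of grid edges crossed by $\gamma$. Non-crossing is exactly what makes $\gamma$ well defined and monotone: a crossing would force $\gamma$ to double back (breaking monotonicity) and, via edge-confluence, would merge two trees' flows onto a common edge and hence a common terminal, contradicting the distinctness of the roots $t_{i_j}$.

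The heart of the argument is the weight bound, and here I would aim to \emph{charge the crossings so that the harmonic capacities cancel in pairs}. Since $\gamma$ is monotone, its downward runs cross column edges (capacity $1/c$) and its rightward runs cross row edges (capacity $1/r$), each column and each row being met at most once and in strictly decreasing index order as $\gamma$ descends past the successive roots. The plan is to charge each rightward run to the \emph{difference} of two consecutive capacities encountered along $\gamma$, and likewise for the downward runs, so that the horizontal contribution collapses to a telescoping sum bounded by its largest term $\le 1$, and the vertical contribution likewise to $\le 1$; adding the two gives total weight at most $2$. A weight-$2$ cut then bounds the flow rate into $t$ by $2$, which is precisely what the NO-instance analysis behind Theorem~\ref{thm-conlogappro} needs.

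I expect the main obstacle to be this last step: converting the purely qualitative ``no crossing'' property into the \emph{quantitative} telescoping that produces the universal constant $2$, independent of both $k^*$ and $N$. The delicate point is showing that $\gamma$ can always be routed so that it meets the row/column capacities $1/r,1/c$ in monotone order and so that consecutive capacities can be paired off into differences — this is exactly the feature that is destroyed for YES-instances, where flows may cross at the gadgets and the minimum separating cut genuinely grows to $\Theta(\log N)$. Once the monotone frontier and its telescoping weight bound are in place, the lemma follows.
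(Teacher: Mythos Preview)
Your plan diverges from the paper's argument, and the critical telescoping step has a genuine gap. The paper does \emph{not} build a single monotone staircase cut; instead it constructs $C$ as a sparse set of at most $2k^*$ tree edges, selected via a recursive subgrid decomposition. At stage $j$ one works inside a subgrid $G(r_j,l_j)$, identifies the leftmost tree $\mathcal{T}_{n_j}$ passing through it, and adds to $C$ the root edge of $\mathcal{T}_{n_j}$ (capacity $1/n_j$). Because no other tree can cross the highest path $P_{n_j}$ of $\mathcal{T}_{n_j}$, any tree $\mathcal{T}_{n_j'}$ whose paths do not all enter the next, strictly smaller subgrid $G(r_{j+1},l_{j+1})$ is severed by one further edge lying in column $n_j$ (also capacity $\le 1/n_j$). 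The recursion yields $l_1>\cdots>l_{k^*}\ge r_{k^*}>\cdots>r_1$ with $r_j\ge j$, so every $l_j\ge k^*$ and $r_{k^*}\ge k^*$; hence
\[
w(C)\ \le\ \sum_{j=1}^{k^*}\frac{2}{n_j}\ \le\ 2\sum_{j=2}^{k^*}\frac{1}{l_j}+\frac{2}{r_{k^*}}\ \le\ 2k^*\cdot\frac{1}{k^*}\ =\ 2.
\]
No telescoping is involved --- the mechanism is that the subgrid recursion forces every chosen capacity into the tail $1/k^*,1/(k^*{+}1),\ldots$ of the harmonic sequence, and there are only $2k^*$ of them.

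Your staircase idea runs into trouble exactly where you anticipated. A monotone lattice path $\gamma$ separating all sources from $t$ will in general cross $\Theta(N)$ grid edges, and nothing in the non-crossing hypothesis forces their capacities to pair off into differences. The sentence ``charge each rightward run to the \emph{difference} of two consecutive capacities'' is not justified: the cut weight is the \emph{sum} of the capacities of the crossed edges, not a sum of successive differences, and there is no evident cancellation between the $1/c$ column terms and the $1/r$ row terms. Concretely, if a single wide tree forces $\gamma$ down near the bottom row over a long horizontal stretch, that rightward run crosses column edges of capacities $1,1/2,1/3,\ldots$ with total weight $\Theta(\log N)$, not $O(1)$. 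The paper sidesteps this entirely by never building a contiguous geometric cut: it picks isolated edges whose row/column indices are all at least $k^*$, which is what makes the constant bound fall out.
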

\begin{proof}
In the set of non-crossing, edge-confluent trees $\mathcal{T}=\{\mathcal{T}_{i_1},\mathcal{T}_{i_2},...,\mathcal{T}_{i_{k^*}}\}$,
we will recursively construct a set of  tree edges of $\mathcal{T}$ that  form a cut of $\mathcal{T}$,  separating all sources in $\mathcal{T}$ from $t$, with the cut weight at most $2$. Note that our technique here is an adaptation and generalization of  a  similar method in~\cite{NSV10,SV15}, which was used to prove the approximation hardness of the maximum throughput of the static confluent flow under the no-bottleneck assumption.

First, we give some notations. For $i \leq j$,  we define a subgrid $G(i, j)$ of $G_N$ induced by columns and rows whose indices lie in the range $[i, j]$. Let $r_1=1$, $l_1=N$ and $n_1=i_1$. Then, it is clear that all $\mathcal{T}_{i_j}$'s are located inside the grid $G(r_1,l_1)=G(1,N)$. Consider all paths of $\mathcal{T}_{i_1}$ from sources to $t_{i_1}$. Let $P_{i_1}$ be the highest path in $\mathcal{T}_{i_1}$, and $r_1'$ be the highest row number where the path $P_{i_1}$ intersects the $n_1$-th column, where the bottom node $t_{i_1}$ is located. Then, we define $r_2=r_1'+1$ and $l_2=n_1-1$ (recall that the row number increases from the bottom to the top, while the column number increases from the right to the left). Recursively, we can define $r_j'$, and then $r_j:=r_j'+1$ and $l_j:=n_{j-1}-1$ for $j=1,...,k$. Also, let $\mathcal{T}_{n_j}$ (the tree rooted at the bottom node $t_{n_j}$) be the leftmost tree passing through the subgrid $G(r_j,l_j)$.

Let the edge set $C:=\emptyset$ initially. For the paths of $\mathcal{T}_{i_1}$ in $G(r_1,l_1)$, since those paths finally goes to the bottom node $t_{i_1}$, we choose into $C$ the edge $e_{i_1}$ connecting $t_{i_1}$ with the bottom row. Note that $e_{i_1}$ separate the source nodes in $T_{i_1}$ from the bottom node $t_{i_1}$, and its capacity is $\frac{1}{n_1}$.
Now consider the paths of $\mathcal{T}_{i_2}$ in $G(r_1,l_1)$.  There exists three cases:
\begin{itemize}

\item [i.] All paths completely avoid routing through the subgrid $G(r_2,l_2)$ (see Figure~\ref{fig:three path cases}(a));

\item [ii.] Some paths go through $G(r_2,l_2)$, and some avoid $G(r_2,l_2)$ (see Figure~\ref{fig:three path cases}(b));

\item [iii.] All paths completely go through the subgrid $G(r_2,l_2)$ (see Figure~\ref{fig:three path cases}(c)).

\end{itemize}

\begin{figure}
\begin{center}
{\includegraphics[scale=0.4]{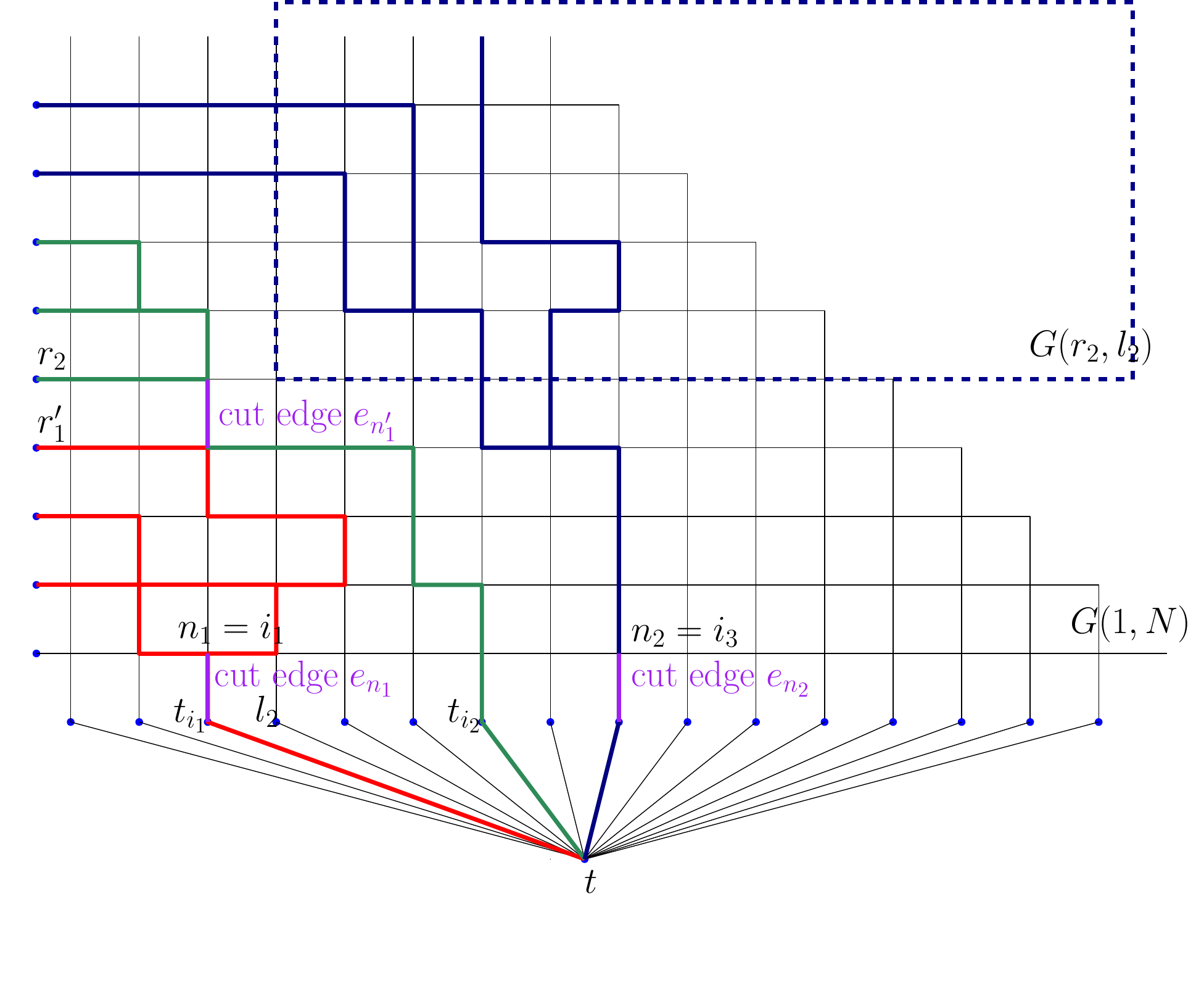}}\hspace{-0.1cm}
$ $
\\ \vspace{-0.5cm}{\footnotesize\hspace{0cm}(a) Case~i}
\end{center}
\begin{center}
{\includegraphics[scale=0.4]{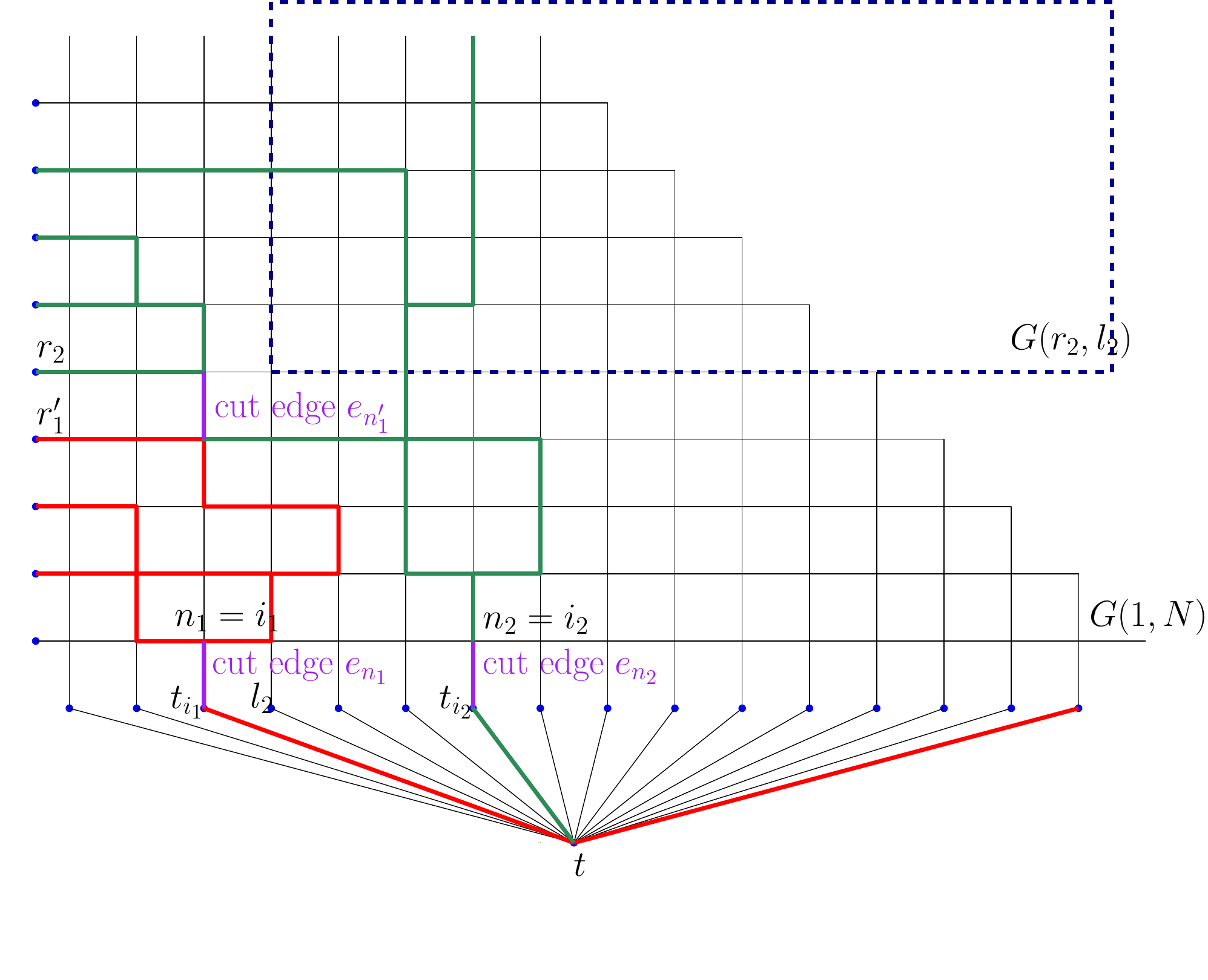}}\hspace{-0.1cm}$ $
\\ \vspace{-0.5cm}{\footnotesize\hspace{0cm}(b) Case~ii}
\end{center}
\begin{center}
{\includegraphics[scale=0.4]{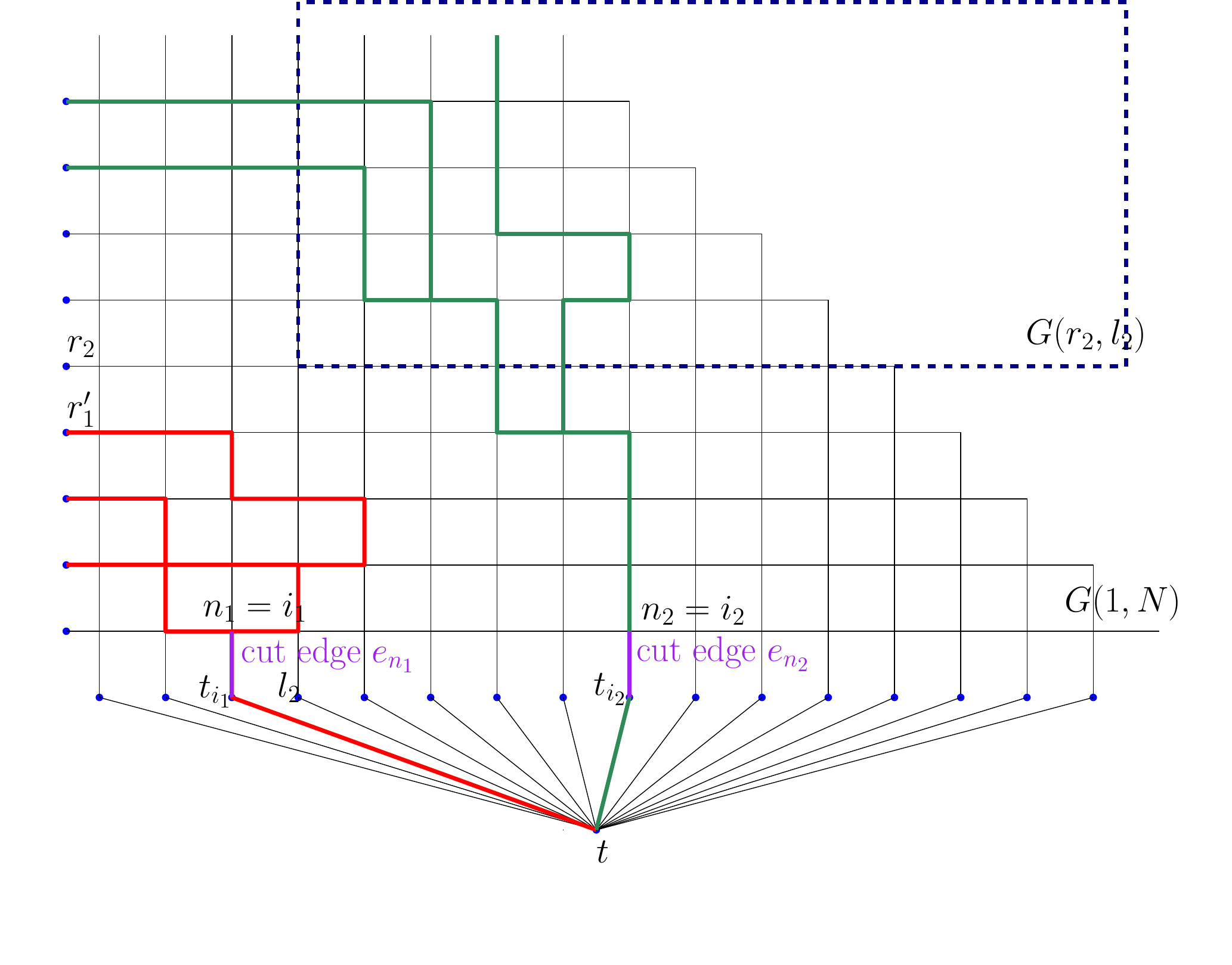}}\hspace{-0.1cm}$ $
\\ \vspace{-0.5cm}{\footnotesize(c) Case~iii}
\\ \caption{
Three cases of the paths of $T_{i_2}$ passing through $G(r_2,l_2)$. Red paths, green paths and blue paths represent $T_{i_1}$, $T_{i_2}$ and $T_{i_3}$, respectively. Purple edges denote the chosen cut edges in $C$.
}
\label{fig:three path cases}
\end{center}
\end{figure}

In Case~i, because $P_{i_1}$ is the highest path of $\mathcal{T}_{i_1}$ and the paths of $\mathcal{T}_{i_2}$ cannot cross the $P_{i_1}$, those paths avoiding $G(r_2,l_2)$ must go through the $i_1$-th column, and then there must be an edge in the $i_1$-th column carrying all those paths avoiding $G(r_2,l_2)$. We denote such an edge as $e_{i_2}$, and add it into $C$. In Case~ii, by a similar observation, it can be seen  that there must be an edge in the $i_1$-th column carrying all those paths avoiding $G(r_2,l_2)$. Again, we denote such an edge as $e_{i_2}$ and add it into $C$. Note that $e_{i_2}$ separates the terminal $t_{i_2}$ from the sources that are connected with those paths in $G(r_1,l_1)$, and its capacity is at most $\frac{1}{n_1}$. For those paths going through $G(r_2,l_2)$ in Case~ii or Case~iii, we would consider them in subgrid $G(r_2,l_2)$. Thus, it can be seen that, no matter which case happens, the cut edge we choose, i.e., $e_{i_1},e_{i_2}$, have total capacity at most $\frac{2}{n_1}$ (here we denote $n_1'=i_2$).

We repeat the above process to determine the cut edge in the subgrid $G(r_j,l_j)$ until $j=k$ or $r_j>l_j$. Consider those paths of $\mathcal{T}_{n_j}$ passing through $G(r_j,l_j)$. They must go through the edge connecting $t_{n_j}$ with the bottom row. We denote this edge as $e_{n_j}$, and add it into $C$. Note that this edge separates the sources from $t_{n_j}$, and its capacity is at most $\frac{1}{n_j}$. Let $n_j'$ be the index after $n_j$ in the set $\{i_1,i_2,...,i_k\}$. Then, similar to the analysis above, we know those paths of $\mathcal{T}_{n_j'}$ can completely or partly avoid routing through the subgrid $G(r_{j+1},l_{j+1})$. No matter which case happens, we choose the edge $e_{n_j'}$ such that $e_{n_j'}$ separates the bottom node $t_{n_j'}$ from the sources that are connected with the  paths of $\mathcal{T}_{n_j'}$ in $G(r_j,l_j)$, and its capacity is at most $\frac{1}{n_j}$. When the algorithm terminates, we have a set of cut edges $C$.

Since each subgrid contains at least one less tree in $\mathcal{T}$ than the subgrid before it, the number of iterations $k^*$ is less than $k$. Note that, for $j<k^*$, $\frac{1}{n_j}\leq\frac{1}{l_{j+1}}$, since $l_{j+1}=n_{j}-1$; for $j=k^*$, we have $\frac{1}{n_j}\leq \frac{1}{r_{k^*}}$, since $r_{k^*}\leq n_{k^*} \leq l_{k^*}$.
It can be seen that $l_1>...>l_{k^*}\geq r_{k^*}>...>r_1$, and $r_j\geq j$ for all $j$. This means that $r_{k^*}>k^*$ and $l_j\geq k^*$ for all $j$.
Now, we can bound the weight $w(C)$ of the induced cut as follows:
\begin{eqnarray}\label{ineq:treecut}
w(C) &\leq& \sum_{1\leq j\leq k^*} c(e_{n_j})+c(e_{n_j'})\leq \sum_{1\leq j\leq k^*} \frac{2}{n_j}\leq 2\left(\sum_{2\leq j\leq k^*} \frac{1}{l_j}\right)+\frac{2}{r_{k^*}} \leq 2  \sum_{1\leq j\leq k^*} \frac{1}{k^*}=2\nonumber\\
&&
\end{eqnarray}
This  completes the proof.
\end{proof}

\begin{lemma}\label{lmm-singlenode2}
If $\mathcal{I}$ is a NO-instance, then it takes at least $M^2H_N/2$ time to confluently route all supplies to $t$,
where $H_N:=1+\frac{1}{2}+...+\frac{1}{N}$.
\end{lemma}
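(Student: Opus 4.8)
The plan is to combine the purely combinatorial cut bound of Lemma~\ref{lmm-treecut} with the elementary observation that, in a dynamic flow, the amount of supply that can cross a cut in one time unit is at most the total capacity of its edges. First I would note that, since every source must route all of its supply to $t$ and the flow is confluent, each source $s_i$ belongs to one of the support trees $\mathcal{T}=\{T_{i_1},\dots,T_{i_k}\}$. Consequently the total supply carried across $\mathcal{T}$ is $\sum_{i=1}^{N} M^2/i = M^2 H_N$.

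Next, I would invoke the fact that $\mathcal{I}$ is a NO-instance through the crossing argument developed immediately before Lemma~\ref{lmm-treecut}: because there are no two node-disjoint paths inside $\mathcal{I}$ simultaneously realizing $x_1\to y_1$ and $x_2\to y_2$, any two flow paths that would ``cross'' at a degree-$4$ node are forced to merge and reach the same terminal, so the trees of $\mathcal{T}$ are pairwise non-crossing and edge-confluent in $G_N$. This is exactly the hypothesis of Lemma~\ref{lmm-treecut}, which I would then apply to obtain a set $C$ of edges that separates all sources in $\mathcal{T}$ from $t$ and satisfies $w(C)\le 2$.

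The final step converts this static cut into the claimed lower bound on time. Since the flow lives entirely on $\mathcal{T}$ and $C$ separates every source from $t$, all $M^2 H_N$ units of supply must traverse the edges of $C$ on their way to $t$. By the mechanics of dynamic flow, at most $c(e)$ units may enter a cut edge $e$ per time unit, so at most $w(C)\le 2$ units of supply can cross $C$ per time unit. Routing the entire supply across $C$ therefore requires at least $M^2 H_N/2$ time units, giving the bound.

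I expect the main obstacle to be the careful justification that the combinatorial weight $w(C)$ genuinely caps the instantaneous crossing rate in the dynamic model over the full network $\mathcal{G}$, rather than in $G_N$ alone. In particular, one must argue that flow cannot circumvent the chosen edges through the internal structure of an embedded copy of $\mathcal{I}$ at a degree-$4$ node, so that ``crossing a cut edge of $G_N$'' really does bound the aggregate supply rate reaching $t$; this rests on the non-crossing, edge-confluent structure already secured by Lemma~\ref{lmm-treecut}. Once the reduction of the dynamic rate to $w(C)$ is in place, the arithmetic yielding $M^2 H_N/2$ is immediate.
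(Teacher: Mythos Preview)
Your proposal is correct and follows essentially the same line as the paper's proof: invoke the non-crossing, edge-confluent structure of the support trees (available because $\mathcal{I}$ is a NO-instance), apply Lemma~\ref{lmm-treecut} to extract a cut of weight at most~$2$, and divide the total supply $M^2 H_N$ by this rate bound. Your closing remark about why the cut in $G_N$ genuinely bounds the crossing rate in $\mathcal{G}$ is a welcome clarification that the paper leaves implicit.
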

\begin{proof}
Consider a confluent flow in $\mathcal{G}$. Then, the support of the flows in $G_N$ is a set of non-crossing, edge-confluent trees as stated before. Following the notation used before, denote the set of trees as $\mathcal{T}=\{T_{i_1},...,T_{i_k}\}$, where each $T_{i_j}$ is a tree rooted at the bottom node $t_{i_j}$. By Lemma~\ref{lmm-treecut}, we know there exists a cut that separates all sources in $\mathcal{T}$ from $t$, whose cut weight is at most $2$. Then, the amount of flow passing  through this cut is at most $2$ per unit time. Since the total weight of supplies is $\sum_{1\leq i\leq N}M^2\frac{1}{i}=M^2H_N$, it immediately means that the time for routing all supplies is at least $M^2H_N/2$.
\end{proof}

Utilizing Lemmas~\ref{lmm-singlenode1} and \ref{lmm-singlenode2}, we prove the logarithmic approximation hardness  for the Confluent Quickest Flow problem.
\begin{proof}[Proof of Theorem~\ref{thm-conlogappro}]
By Lemma~\ref{lmm-singlenode1}, we know if $\mathcal{I}$ is a YES-instance, then the time for routing all supplies to $t$ is at most $M^2+Np+N+2$. Here, we take a sufficiently large constant $M$, ensuring that the first term dominates the routing time. By Lemma~\ref{lmm-singlenode2}, we know if $\mathcal{I}$ is a NO-instance, then the time for routing all supplies to $t$ is at least $M^2H_N/2$.
It follows that if we could approximate the the routing time of the Confluent Quickest Flow problem in $\mathcal{G}$ to  a factor within  $H_N/2$, we could determine whether $\mathcal{I}$ is a YES- or NO-instance, which is NP-hard.

Note that $\mathcal{G}$ has $n = \Theta(pN^2)$ vertices, where $p$ is the number of vertices in $\mathcal{I}$. If we take $N = \Theta(p^{\frac{1}{2}(\frac{1}{\epsilon}-1)})$,
where $0<\epsilon < \frac{1}{2}$ is small, then $H_N = \Theta( \frac{1}{2}(\frac{1}{\epsilon}-1)\log p)$. Also, since $n=\Theta(p^{\frac{1}{\epsilon}})$, we have $H_N=\Theta( \frac{\epsilon}{2}(\frac{1}{\epsilon}-1)\log n)=\Theta( \log n )$.
%
Thus, it yields the bound as desired.
\end{proof}


\section{Proofs of Approximation Hardness of Single-Sink Maximum  Flow Over Time Problem}
\label{App: Apd MFT}

\subsection{Maximum Unsplittable  Flow Over Time}

This section gives the proof of the hardness of Unsplittable Maximum  Flow Over Time problem, i.e., Theorem~\ref{thm-unspovertime}.
\begin{proof}[Proof of Theorem~\ref{thm-unspovertime}]
We only need to consider the undirected case, since the directed case is similar except the embedded instance $\mathcal{I}$ is the directed version of Two-Disjoint Paths (Capacitated) problem, and edges are assigned with directions.  Set $T=M+p$, where $p$ is the number of vertices in $\mathcal{I}$ and $M$ is a large constant such that $M\gg p$.

If $\mathcal{I}$ is a YES-instance (see Figure~\ref{fig:3/2instance}(a)), then there exist two edge-disjoint paths connecting $t$ with $s_1$ and $s_2$, respectively, and hence the maximum value sent to $t$ within the time horizon $T$ is at least $(\alpha+\beta)(T-p)=3\alpha M$. However, if $\mathcal{I}$ is a NO-instance, then either there don't exist two edge-disjoint paths connecting $t$ with $s_1$ and $s_2$ (see Figure~\ref{fig:3/2instance}(b)), or there exist two edge-disjoint paths but the paths from $s_2$ to $t$ must use the edge with capacity $\alpha$ (see Figure~\ref{fig:3/2instance}(c)). Then, in either case, the  maximum value sent to $t$ within the time horizon $T$ is at most $2\alpha T=2\alpha (M+p)$. Thus, letting $\epsilon:=\frac{3p}{2(M+p)}>0$, if one can approximate the single-sink Unsplittable Maximum Flow Over Time problem to a factor within $3/2-\epsilon$, then one can distinguish whether $\mathcal{I}$ is YES- or NO-instance, which is NP-hard in the undirected graph.
Thus, we complete the proof.
\end{proof}

\subsection{Maximum Confluent Flow Over Time}

This section gives the proof of the hardness of Confluent Maximum  Flow Over Time problem, i.e., Theorem~\ref{thm-confluentovertime}.

\begin{proof}[Proof of Theorem~\ref{thm-confluentovertime}]
We construct a directed network the same as Figure~\ref{fig:half-grid}(a), and let $T=M^2+NP+N+2$. If $\mathcal{I}$ embedded  in the network $\mathcal{G}$ is a YES-instance, then the maximum value of supplies sent to $t$ within time horizon $T$ is  at least $H_N (T-NP-N-2)=H_NM^2$. However, if $\mathcal{I}$ is a NO-instance, Lemma~\ref{lmm-treecut} shows there exists a cut whose weight is at most $2$, and hence the maximum value of supplies sent to $t$ within time horizon $T$ is at most $2T\approx 2 M^2$. Thus, if the single-sink Confluent Maximum  Flow Over Time problem in the directed graphs can be approximated to a factor within  $H_N/2$, one can determine whether $\mathcal{I}$ is a YES- or NO-instance, which is NP-hard in directed graphs.
By  setting those parameters the same as Theorem~\ref{thm-conlogappro}, we obtain the lower bound as desired.
\end{proof}




\section{Proofs of Constant Bicriteria Approximation Hardness of Dynamic Flows}
\label{App: Proofs of Constant Bicriteria Approximation Hardness}

\begin{figure}
\begin{center}
\vspace{-.0cm}\hspace{-0cm}{\includegraphics[scale=0.45]{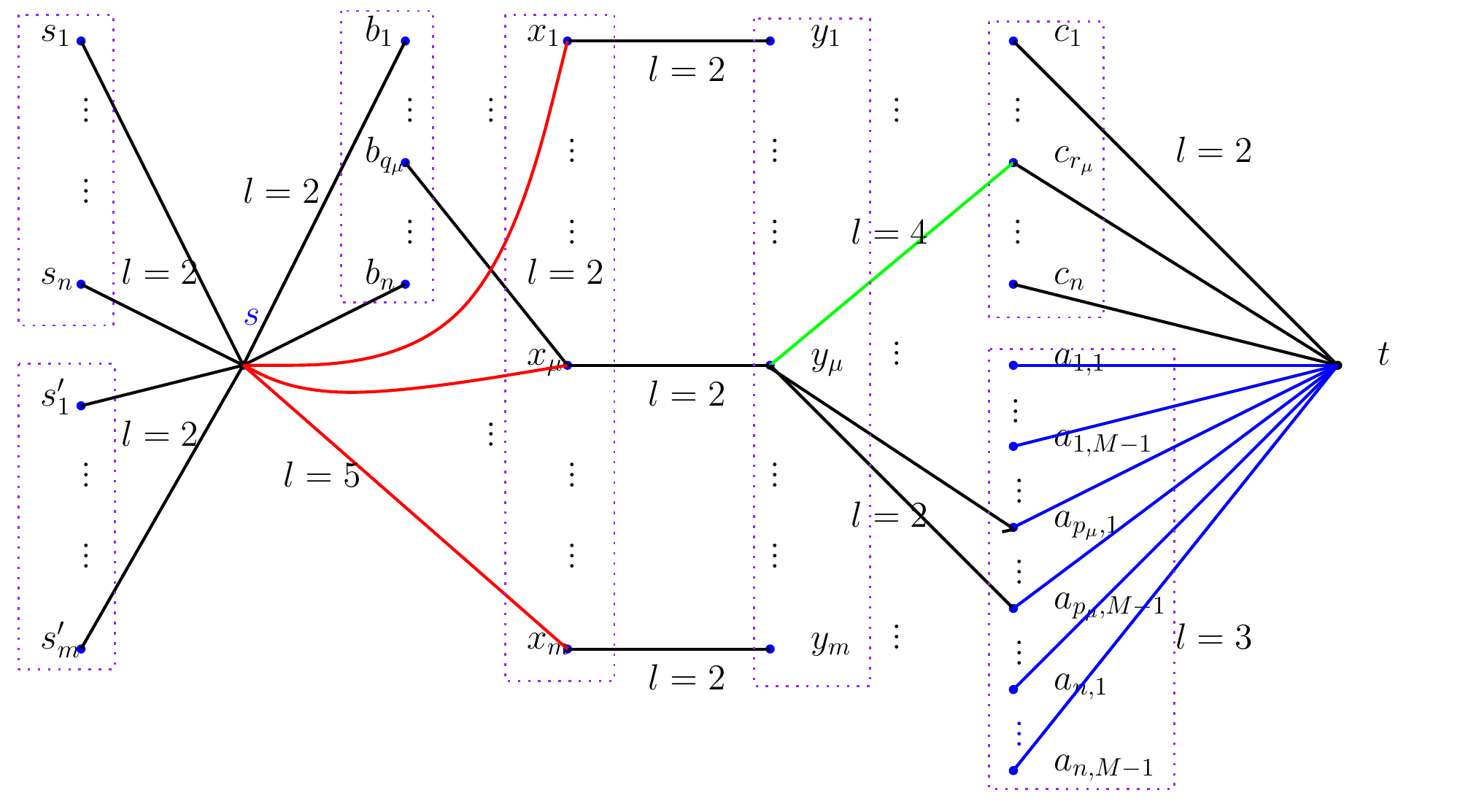}}\hspace{-0.8cm}
\vspace{-0.2cm}\caption{Hard instance for constant bicriteria approximation.}
\label{fig:bicriteria}\vspace{-0.8cm}
\end{center}
\end{figure}

First, we have the following lemma about the constructed instance.
\begin{lemma}\label{lmm-biapproxpath}
If the instance $\mathcal{I}$ is satisfiable, then $G$ contains $Mn$ edge-disjoint paths from sources $s_i$, $s'_{\mu}$ to $t$, whose length  are at most 14; if  $\mathcal{I}$ is $\epsilon_0$-unsatisfied,  then there are at most $(M-\epsilon_0/2)n$ such source-sink paths in $G$.
\end{lemma}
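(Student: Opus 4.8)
The plan is to read $G$ as a collection of $m=Mn$ \emph{triple gadgets}, one per triple $T_\mu=(a_{p_\mu},b_{q_\mu},c_{r_\mu})$ (recall $m=Mn$ since each of the $n$ elements of $A$ occurs in exactly $M$ triples), and to show that a system of short edge-disjoint source--$t$ paths corresponds exactly to a (partial) $3$-dimensional matching. The first observations are structural and hold for \emph{any} set of edge-disjoint paths of length at most $14$. Since every edge into $t$ is one of the $n$ edges $(c_i,t)$ or the $n(M-1)$ edges $(a_{il},t)$, there are at most $nM=m$ such paths, and each path, to reach $t$, must traverse the capacity-$1$ edge $(x_\mu,y_\mu)$ of a \emph{single} gadget and then leave $y_\mu$ either toward $c_{r_\mu}$ (a \emph{$C$-delivery}) or toward a copy $a_{p_\mu l}$ (an \emph{$A$-delivery}). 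The crucial length computation is that a $C$-delivering path of length $\le 14$ is forced to enter its gadget through the vertex $b_{q_\mu}$: the only alternative entrance into $x_\mu$ uses a length-$5$ edge, which, combined with the length-$4$ green edge $(y_\mu,c_{r_\mu})$ and the remaining length-$2$ edges, exceeds $14$; an $A$-delivering path, using the length-$3$ blue edge instead of the green one, may enter $x_\mu$ directly from $s$. I would verify this by enumerating the handful of possible path shapes. Consequently the $C$-delivering gadgets use pairwise distinct edges $(s,b_{q_\mu})$ and pairwise distinct edges $(c_{r_\mu},t)$, i.e. they have distinct $B$-coordinates and distinct $C$-coordinates.

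For the satisfiable case I would start from a perfect matching $T'$ of $n$ pairwise-disjoint triples covering $A,B,C$. For each $\mu\in T'$ route the $C$-delivering path $s'_\mu\to s\to b_{q_\mu}\to x_\mu\to y_\mu\to c_{r_\mu}\to t$; these are edge-disjoint because the $b_{q_\mu}$, the $c_{r_\mu}$ and the gadgets are distinct, and each has length $14$. For each of the remaining $n(M-1)$ gadgets $\nu\notin T'$ route the $A$-delivering path $s'_\nu\to s\to x_\nu\to y_\nu\to a_{p_\nu l}\to t$. Because exactly one gadget containing a given $a_i$ lies in $T'$, the other $M-1$ gadgets containing $a_i$ are assigned bijectively to the $M-1$ copies $a_{i1},\dots,a_{i,M-1}$, so all $A$-deliveries are edge-disjoint and of length at most $14$. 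The two families are mutually edge-disjoint (the $C$-paths use the $(s,b_\cdot)$ edges and the matched gadgets, the $A$-paths use the $(s,x_\cdot)$ edges and the unmatched gadgets), giving $n+n(M-1)=Mn$ edge-disjoint short paths.

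For the $\epsilon_0$-unsatisfied case I would bound any family $P$ of short edge-disjoint paths from above. Let $S$ be its set of $C$-delivering gadgets; by the above, the gadgets of $S$ have distinct $B$- and distinct $C$-coordinates. Selecting one gadget of $S$ per distinct $A$-coordinate yields a sub-collection with distinct $A$-, $B$- and $C$-coordinates, i.e. a partial $3$-dimensional matching, whose size $\alpha$ is therefore at most $(1-\epsilon_0)n$. Writing $z_i$ for the number of gadgets of $S$ with $A$-coordinate $i$ and $y_i$ for the number of $A$-deliveries into copies of $a_i$, I have $z_i+y_i\le M$ (only $M$ gadgets contain $a_i$) and $y_i\le M-1$, so every index $i$ with $z_i\ge 1$ contributes at most $M$ to $|P|=\sum_i(z_i+y_i)$, while each index with $z_i=0$ contributes at most $M-1$. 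Since exactly $\alpha$ indices have $z_i\ge 1$, this gives $|P|\le \alpha M+(n-\alpha)(M-1)=Mn-(n-\alpha)\le Mn-\epsilon_0 n\le (M-\epsilon_0/2)n$.

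The main obstacle is the length-forcing step: pinning down that, under the horizon $14$ and the chosen edge lengths, a length-bounded path delivering to a $C$-vertex is \emph{compelled} to pass through the $b$-vertex of its triple (so that $C$-deliveries genuinely consume the scarce $B$-resources), while $A$-deliveries are not, and checking that this survives the undirected setting, where one must rule out alternative ways of reaching $x_\mu$, $y_\mu$ or $c_i$. This asymmetry is exactly what couples the $A$-, $B$- and $C$-constraints into a single $3$-dimensional matching; the remaining counting is a routine double count, and I would keep the slack between the derived bound $Mn-\epsilon_0 n$ and the stated $(M-\epsilon_0/2)n$ as a safety margin.
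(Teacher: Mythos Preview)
Your argument is correct and, in fact, yields the stronger bound $(M-\epsilon_0)n$ rather than the stated $(M-\epsilon_0/2)n$; the slack you note is genuine. The paper itself does not prove this lemma but simply invokes the analogous construction of Guruswami et al.\ (the reference \texttt{[G2003]}), remarking that the added sources and modified lengths do not affect the disjoint-path count. Your proof is precisely the natural unpacking of that citation: the classification of the three short path types ($P_1$-, $P_2$-, $Q$-paths) you carry out is exactly what the paper later spells out in its proof of Theorem~\ref{thm-biapproxhardQuick}, and the counting via the $z_i,y_i$ double count is the standard argument. So there is no genuine divergence in approach---you have supplied the details the paper omits.

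One small point worth making explicit for the undirected case (you flag it as the main obstacle, and rightly so): you should verify not only that a $C$-delivering path must enter $x_\mu$ through $b_{q_\mu}$, but also that it must enter $b_{q_\mu}$ through the edge $(s,b_{q_\mu})$---this is what forces distinct $B$-coordinates. The only other edges at $b_j$ are the edges $(b_j,x_{\mu'})$, and routing through any of them before reaching $x_\mu$ already costs at least $2+5+2+2+2+4+2=19$. Similarly one checks that $y_\mu$ can only be entered via $(x_\mu,y_\mu)$ within the length budget. Once these are pinned down, your enumeration is complete.
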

\begin{proof}
The proof is similar to that in~\cite{G2003}. The difference is that we add  $(n+m)$ sources and change all lengths. This does not change the existence or non-existence of edge-disjoint paths.
\end{proof}

Now, we prove Theorem~\ref{thm-biapproxhardQuick}.
\begin{proof}[Proof of Theorem~\ref{thm-biapproxhardQuick}]
We first consider the {\em Unsplittable} Maximum Flow Over Time problem.
Clearly, by Lemma~\ref{lmm-biapproxpath}, if the instance $\mathcal{I}$ is satisfiable, then $G$ contains $Mn$ edge-disjoint paths from sources $s_i$, $s'_{\mu}$ to $t$,  whose lengths  are at most 14. It implies that we can unsplittably send at least $Mn$ units of supplies to $t$ within the time horizon $T=14$.

However, if  $\mathcal{I}$ is $\epsilon_0$-unsatisfied,  then there are at most $(M-\epsilon_0/2)n$ source-sink paths whose lengths are bounded by 14 in $G$.
There are three kinds of paths with length at most 14:
\begin{enumerate}

\item {\bf $P_1$-path:} For any $z\in\{s_i:i\in[n]\}\cup\{s'_{\mu}:\mu\in[m]\}$ and any $\mu\in[m]$, $P_1=[z,s,b_{q_{\mu}},x_{\mu},y_{\mu},c_{r_{\mu}},t]$ has length 14;

\item {\bf $P_2$-path:} For any $z\in\{s_i:i\in[n]\}\cup\{s'_{\mu}:\mu\in[m]\}$, any $\mu\in[m]$ and any $l\in[M-1]$, $P_2=[z,s,b_{q_{\mu}},x_{\mu},y_{\mu},a_{p_{\mu}l},t] $ has length 13;

\item {\bf $Q$-path:} For any $z\in\{s_i:i\in[n]\}\cup\{s'_{\mu}:\mu\in[m]\}$, any $\mu\in[m]$ and any $l\in[M-1]$, $Q=[z,s,x_{\mu},y_{\mu},a_{p_{\mu}l},t] $ has length 14.

\end{enumerate}
Note also that any path with length larger than 14 must have length at least 15.

Suppose we are given the maximum set $S$ of edge-disjoint paths such that $|S|\leq (M-\epsilon_0/2)n$. Then, adding into $S$ any path $P'$ with length at most 14 would make $P'$ share at least one edge with some paths in $S$.
Note that $P'$  must be $P_1$-path, $P_2$-path or $Q$-path. Since the time horizon $T=14$ and the minimum length of any edge is 2, if one want to send the supply  to $t$ along $P'$ within the time horizon $T$, then the supply in $P'$  must be sent at time 0 (or time 1, for $P_2$-path), and cannot encounter other supplies in any edge at the same time (otherwise, the supply in $P'$ would be delayed by at least time 2).

We show in the following that  $P'$ must encounter other supplies carried by some paths in $S$.
Suppose $P'$ shares some edges with a path $J\in S$. Let $e=(u,v)$ be the first edge they share in $G$. Note that $J$ is $P_1$-path, $P_2$-path or $Q$-path. There are several cases:
\begin{enumerate}

\item [] {\bf Case~i:} If both  $J$ and $P'$ are the same kind of path, since the supplies carried by them must be sent at time 0, the supplies must reach  $e$ at the same time. Hence, at least one unit of supply cannot be sent to $t$ on time.

\item [] {\bf Case~ii:} If  $J$ is $P_1$-path and $P'$ is $P_2$-path (or, $P'$ is $P_1$-path and $J$ is $P_2$-path), then $e$ must be some edge no after $(x_{\mu},y_{\mu})$. Since $P_1$- and $P_2$-path have same sub-path pattern $[z,s,b_{q_{\mu}},x_{\mu},y_{\mu}]$, the supplies on $P'$ and $J$ must reach $e$ at the same time, resulting in at least one unit of supply cannot be sent to $t$ on time.

\item [] {\bf Case~iii:} If  $J$ is $P_i$-path and $P'$ is $Q$-path (or, $P'$ is $P_i$-path and $J$ is $Q$-path), where $i\in\{1,2\}$, then $e$ must be some edge no before $(x_{\mu},y_{\mu})$. Suppose $J$ is from source $z_1$ while $P'$ is from source $z_2$, where $z_1, z_2\in\{s_i:i\in[n]\}\cup\{s'_{\mu}:\mu\in[m]\}$. Note that the difference of length between the path from $z_1$ to $u$ and the path from $z_2$ to $u$ is 1. Also, the supply at $Q$-path must reach $u$ no earlier than $P_i$-path, and must wait for at least time 1 (since the edge $(u,v)$ will be occupied for time 2). Hence, the supply at $Q$-path cannot arrive at $t$ on time.

\end{enumerate}
Thus, one cannot send one more unit of supply  to $t$ along the path $P'$ within the given time horizon. This immediately implies that, if  $\mathcal{I}$ is $\epsilon_0$-unsatisfied, then one cannot send more than $(M-\epsilon_0/2)n$ units of supplies to $t$, within the time horizon $(\frac{15}{14}-\epsilon)T$, for any $\epsilon>0$. This result implies that it is NP-hard to obtain the $(1+\frac{\epsilon_0}{2M-\epsilon_0},\frac{15}{14}-\epsilon)$-approximation for the {\em Unsplittable} Maximum Flow Over Time problem in both directed and undirected graphs. (The analysis for directed graphs is similar.)

Now consider the {\em Confluent} Maximum Flow Over Time problem. To simplify the analysis, we remove all sources and $s,t$. We claim that after the node removal, those edge-disjoint paths with  length at most 14 in $G$ become node-disjoint. We now show this by contradiction. Suppose there are two edge-disjoint paths $Q_1$ and $Q_2$ with length at most 14 in $G$, and they share one node $v$.
\begin{enumerate}

\item If $v=b_{i}$ for some $i\in[n]$, then $Q_1$ and $Q_2$ share the edge $(s,b_i)$ in $G$ (contradiction).

\item If $v=x_{\mu}$ for some $\mu\in[m]$, then $Q_1$ and $Q_2$ share $(s,x_{\mu})$ (contradiction).

\item If $v=c_{i}$ for some $i\in[n]$, then $Q_1$ and $Q_2$ share $(c_i,t)$ (contradiction).

\item If $v=a_{p_{\mu} l}$ for some $\mu\in[m]$ and some $l\in[M-1]$, then $Q_1$ and $Q_2$ share $(a_{p_{\mu} l},t)$ (contradiction).

\item If $v=y_{\mu}$ for some $\mu\in[m]$, then $Q_1$ and $Q_2$ must share $(x_{\mu},y_{\mu})$, because otherwise one path of them must pass through some $a_{p_{\mu} l}$ in order to reach $y_{\mu}$, resulting in that the length exceeds 14 (contradiction).

\end{enumerate}

To show the lower bound of the confluent version, we need to slightly modify the graph $G$. We split the node $s$ into $m+n$ copies, and connect each copy to $s_i$ and $b_i$ (or, connect it to $s'_{\mu}$ and $x_{\mu}$) with edges of capacity 1 and length 2. Then, similar to the {\em edge-disjoint} paths, we can bound the number of the {\em node-disjoint} paths in both $\epsilon_0$-unsatisfied and satisfiable instance. Thus, applying the similar analysis for the unsplittable flow, we finally gives the desired approximation hardness for the Confluent Maximum Flow Over Time problem.
\end{proof}

The proof of Theorem~\ref{thm-biapproxhardTime} is similar, which we omit here.




\section{Proofs of Polylogarithmic Approximation for Single-Sink Confluent Dynamic Flows}
\label{App: Proofs of Polylogarithmic Approximation}

\subsection{Static Confluent Flows in Uncapacitated Networks with $\kappa$ Sources}\label{apd:uncapacitated_network}

In this section, we present an algorithm for finding a confluent flow in the uncapacitated network with $\kappa$ sources. In this problem, we are given a static directed $G = (V,A)$, where there are $\kappa$ sources $\{s_1,...,s_{\kappa}\}$ with the non-zero supply $d_i$ located at each $s_i (i\in [\kappa])$. Since the case of $\kappa=1 $  is trivial, we assume $\kappa\geq 2$. Here we consider the case that all non-zero supplies are uniform. Without loss of generality, we assume all supplies are unit, i.e., $d_1=...=d_\kappa=1$.
Also,
there exists a collection of sinks $\{t_1,...,t_k\}\subset V$. In addition, $G$ is an uncapacitated network, i.e., all edge  and node capacities are 1.

Before stating our algorithm, we give certain definitions regarding our $\kappa$-source setting.
\begin{definition}[Effective Length]
Given a network $G$  and a path $P$ in $G$, we define the effective
length of $P$, or simply $l_e(P)$, to be the number of sources in the path $P$ excluding the starting node.
\end{definition}
\begin{definition}[Effective Height]
Given a network $G$ and a subtree $T$ inside $G$, we define the effective
height of the tree, or simply $h_e(T)$, to be the maximum effective length over all leaf-to-root paths $P$ of $T$.
\end{definition}

Given a splittable flow $f$ that satisfies all supplies in $G$, we give a simple randomized rounding algorithm to get a confluent flow $f'$ from $f$:
\begin{enumerate}
\item []
For each
node $v\in V\setminus \{t_1,...,t_k\}$, select exactly one of its outgoing edges with probability of $f(e)/f^{out}(v)$, and let $e$ carry all flows out of $v$.
\end{enumerate}

Clearly, the resulting flow $f'$ is confluent. The selected edges together with the nodes in $V$ form a forest, where each tree is an arborescence directed toward a distinct sink $t_i$. Hence, the remaining work is to bound the node congestion of $f'$. Suppose we denote by $T_i$ the tree rooted at $t_i$. Then, the node congestion of $f'$ equals to the maximum number of {\em sources} in $T_i$ over all $i\in [k]$.

To analyze the congestion, we define the following random process $\mathscr{P}$.
\begin{enumerate}
\item []
Suppose  there is a directed acyclic graph (DAG) $D=(V, A)$ with a probability function $p(u,v)$ for each edge $(u,v)\in A$, satisfying that, for each node $u\in V$,  $\sum_{(u,v)\in A} p(u,v)\leq 1$. Let $\mathscr{P}(D)$ be the random process that each node $u$ selects at most one of its outgoing edges $(u,v)$ with probability $p(u,v)$.
\end{enumerate}
Note that the edges selected by $\mathscr{P}(D)$ form a forest. We denote by $\hat{N}_D(v)$  the number of {\em sources} in the subtree
rooted at $v$ in the forest, under the random process $\mathscr{P}(D)$.  Let $C_D(v)$ be the expectation of $\hat{N}_D(v)$, i.e., $C_D(v) = E[\hat{N}_D(v)]$. We have
\begin{eqnarray}
C_D(v)=\left\{
\begin{aligned}
&0 &   \text{$v$ is not a source and has no incoming edge} \\
&1 &   \text{$v$ is a source and has  no incoming edge} \\
&  1+\sum_{(u,v)\in D}p(u,v)C_D(u) &  \text{$v$ is a source and has incoming edges} \\
&  \sum_{(u,v)\in D}p(u,v)C_D(u) &  \text{otherwise}
 \end{aligned}
\right.
\end{eqnarray}

Now, let us induce the DAG from the given splittable flow $f$, as well as the probability function. Let $D=(V,A)$ be the DAG with $A$ as the set of edges carrying the flow of $f$, and let $p(u,v)=f(u,v)/f^{out}(u)$, for each $(u,v)\in A$.
It is easy to see that our randomized algorithm is equivalent to the random process $\mathscr{P}(D)$.  Then, the congestion of the resulting flow $f'$ on each node $v$ equals to $\hat{N}_D(v)$, and the congestion of $f'$ equals to $NC(f')=\hat{N}_D^*(v):=\max_{v} \hat{N}_D(v)$. Also, the congestion of $f$ equals to $NC(f)=C_D^*(v):=\max_{v} C_D(v)$.

To bound the congestion of $f'$, we only need to bound the random variable $\max_{v} \hat{N}_D(v)$ for the given DAG $D$. Thus, we turn to analyze the random process. The idea is simple: We upper bound all the moments of the random variable $\hat{N}_D(v)$ for each $v\in V$,  and utilize the Markov's inequality to guarantee that, with high probability, $\hat{N}_D(v)$ slightly deviates from $C_D^*(v)$.

\subsubsection{Bounding Effective Height}

We start from bounding the effective height of the tree formed by the resulting flow $f'$.

Given a source node $v$, the {\em effective} distance from $v$ to the root of the subtree containing $v$ in $\mathscr{P}(D)$ is the effective length of the random walk starting from $v$ on $D$, according to the probability function $p$.
Suppose a random walk will reach a node $u$ in its next step (or hop). If $u$ is a source in $D$, we call the step (or hop) as {\em effective}. Clearly, in a path from a node $v$ to another node $u$, the number of effective steps is equal to the effective length of the path.

For any source $u$, let $P(u;i)$ be the probability that the random walk start from $v$ and reaches the non-sink node $u$ after $i$ {\em effective steps}. Then, we have this recurrence relation:
\begin{eqnarray}
P(u;i)=\left\{
\begin{aligned}
&0                                 & \hspace{1cm}  \text{$i=0$ and $u\neq v$} \\
&1                                 & \hspace{1cm}  \text{$i=0$ and $u= v$} \\
& \sum_{(w,u)\in D}P(w;i-1)p(w,u)  & \hspace{1cm} \text{$i>0$ and $u$ is a source}\\
& \sum_{(w,u)\in D}P(w;i)p(w,u)    & \hspace{1cm} \text{$i>0$ and $u$ is not a source}
 \end{aligned}
\right.
\end{eqnarray}

\begin{lemma}\label{lmm-NonSinkProb}
For any non-sink node $u$ and $i\geq 0$, $P(u;i)\leq \max(C_D(u),1)(1-1/C_D^*)^i$.
\end{lemma}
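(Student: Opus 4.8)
The plan is to establish the bound on $P(u;i)$ by induction on $i$, the number of effective steps. The statement claims $P(u;i) \leq \max(C_D(u),1)(1-1/C_D^*)^i$ where $C_D^* = \max_v C_D(v)$, so I would first interpret $P(u;i)$ as the probability that a random walk from the fixed source $v$ reaches the non-sink node $u$ having passed through exactly $i$ sources (not counting the start). The base case $i=0$ is immediate from the recurrence: $P(u;0)$ is $1$ if $u=v$ and $0$ otherwise, and in either case the bound $\max(C_D(u),1)\geq 1$ dominates.

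For the inductive step I would treat the two sub-cases of the recurrence separately, corresponding to whether $u$ is itself a source. When $u$ is a source, the count increments, so $P(u;i) = \sum_{(w,u)\in D} P(w;i-1)\,p(w,u)$; applying the induction hypothesis to each predecessor $w$ at level $i-1$ and then folding $\sum_w p(w,u)C_D(w)$ back into the defining recurrence for $C_D(u)$ should produce the factor $(1-1/C_D^*)^i$. When $u$ is not a source, $P(u;i) = \sum_{(w,u)\in D} P(w;i)\,p(w,u)$ stays at the same effective level, so I would need a slightly different bookkeeping argument — here the subtle point is that the recursion on non-source nodes does not decrease $i$, so the induction must be organized carefully (for instance, by inducting on topological order within a fixed level $i$, using that $D$ is a DAG, and only decreasing $i$ when crossing a source). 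The key algebraic mechanism is that the quantity $1 + \sum_{(w,u)} p(w,u) C_D(w) = C_D(u)$ at sources lets one extract the geometric decay: intuitively, each additional source visited multiplies the survival probability by a factor bounded by $(1 - 1/C_D^*)$, because $C_D(u) \leq C_D^*$ forces $1/C_D(u) \geq 1/C_D^*$.

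More concretely, at a source $u$ the right normalization is to compare $\sum_w p(w,u)\max(C_D(w),1)$ against $\max(C_D(u),1)$. Since $C_D(u) = 1 + \sum_w p(w,u)C_D(w)$ at a source with incoming edges, one has $\sum_w p(w,u)C_D(w) = C_D(u) - 1 = C_D(u)(1 - 1/C_D(u)) \leq C_D(u)(1 - 1/C_D^*)$, which is exactly where the decay factor is born. I would handle the $\max(\cdot,1)$ wrapper by checking that the same chain of inequalities survives when some $C_D(w)$ values are below $1$; this is a minor case analysis rather than a genuine obstacle. The main obstacle I anticipate is the non-source case, where $i$ is not decremented: one must argue that summing over predecessors at the same effective level does not inflate the bound. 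The cleanest route is to prove the statement by a double induction (outer on $i$, inner on the topological depth of $u$ restricted to non-source predecessors), observing that any maximal chain of non-source nodes must eventually reach either a source or a node with no incoming edges, at which point the level-$i$ bound is already controlled either by the base case or by the source-induction just completed.
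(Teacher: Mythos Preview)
Your proposal is correct and follows essentially the same approach as the paper: induction on $i$, with the source case using $\sum_w p(w,u)C_D(w)=C_D(u)-1\le C_D(u)(1-1/C_D^*)$ to extract the geometric decay, and the non-source case propagating the level-$i$ bound through predecessors. You are in fact more careful than the paper on two points the paper leaves implicit: the paper applies the ``inductive hypothesis'' at the same level $i$ in the non-source case without comment (your observation that this really needs an inner induction on topological order in the DAG is exactly right), and the paper silently replaces $\max(C_D(w),1)$ by $C_D(w)$ when summing, whereas you flag the $\max(\cdot,1)$ wrapper as a separate small case check.
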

\begin{proof}
We prove by induction on $i$. For $i = 0$ the claim is trivially true.
To prove the induction step, consider two cases:
\begin{enumerate}

\item [$\bullet$]
If $u$ is a source, we have
\begin{eqnarray}
P(u;i) &=& \sum_{(w,u)\in D}P(w;i-1)p(w,u)\nonumber\\
&\leq&  (1-1/C_D^*)^{i-1} \sum_{(w,u)\in D}C_D(w)p(w,u)\nonumber\\
&\leq&  (1-1/C_D^*)^{i-1} (C_D(u)-1)\nonumber\\
&\leq&  (1-1/C_D^*)^{i} C_D(u)\hspace{3cm} \text{(since $C_D(u)\leq C_D^*$)}\nonumber\\
&\leq& \max(C_D(u),1)(1-1/C_D^*)^i\nonumber
\end{eqnarray}

\item [$\bullet$]
If $u$ is not a source, we have
\begin{eqnarray}
P(u;i) &=& \sum_{(w,u)\in D}P(w;i)p(w,u)\nonumber\\
&\leq&  (1-1/C_D^*)^{i} \sum_{(w,u)\in D}C_D(w)p(w,u)\nonumber\\
&\leq&  (1-1/C_D^*)^{i} (C_D(u)-1)\nonumber\\
&\leq&  (1-1/C_D^*)^{i} C_D(u)\nonumber \\
&\leq& \max(C_D(u),1)(1-1/C_D^*)^i\nonumber
\end{eqnarray}

\end{enumerate}
Thus, we complete the proof.
\end{proof}

\begin{lemma}\label{lmm-height}
The effective height of any tree in $\mathscr{P}(D)$ is at most $O(C_D^* \log \kappa)$ with
probability of at least $1-\kappa^{-c}$ for a large positive constant $c$.
\end{lemma}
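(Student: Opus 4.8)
The plan is to reduce the effective height of the random forest to the effective length of the single root-directed walk that $\mathscr{P}(D)$ induces from each source, and then finish with a union bound over the $\kappa$ sources. First I would observe that any leaf-to-root path of a tree in $\mathscr{P}(D)$ is, up to the starting vertex, a suffix of the walk emanating from the topmost source on it; hence the effective height equals (up to an additive $1$) the maximum over sources $v$ of the effective length of the walk from $v$ to its root. It therefore suffices to control this quantity for one fixed source and then pay a factor $\kappa$ in the failure probability.

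For a fixed source $v$, the key observation is that the walk from $v$ has effective length at least $i$ only if, after exactly $i$ effective steps, it is located at some source node; and such a node, being a source, is not a sink. Decomposing over the possible landing vertex $u$ gives
\begin{equation}
\Pr[\,l_e(\text{walk from }v)\ge i\,]\ \le\ \sum_{u\ \mathrm{source}} P(u;i),
\end{equation}
where $P(u;i)$ is exactly the quantity bounded in Lemma~\ref{lmm-NonSinkProb}. Since every source is a non-sink, that lemma applies to each term, and using $C_D(u)\le C_D^*$ together with $C_D^*\ge 1$ (there is at least one source, so some $C_D(v)\ge 1$) we get $P(u;i)\le C_D^*(1-1/C_D^*)^i\le C_D^*\,e^{-i/C_D^*}$.

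There are at most $\kappa$ sources, so $\Pr[\,l_e(\text{walk from }v)\ge i\,]\le \kappa\, C_D^*\, e^{-i/C_D^*}$. Now I would use the crucial crude bound $C_D^*\le \kappa$, which holds because any subtree of the forest contains at most $\kappa$ sources. Choosing $i=(c+3)\,C_D^*\ln\kappa$ then makes the right-hand side at most $\kappa\cdot\kappa\cdot\kappa^{-(c+3)}=\kappa^{-(c+1)}$. A final union bound over the $\kappa$ choices of starting source yields that, with probability at least $1-\kappa\cdot\kappa^{-(c+1)}=1-\kappa^{-c}$, every source induces a walk of effective length $O(C_D^*\log\kappa)$, giving the claimed bound on the effective height of every tree simultaneously.

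The step I expect to be the real obstacle is not the tail estimate itself but controlling the polynomial prefactors: the sum over sources and the per-term factor $C_D^*$ together contribute a $\kappa\cdot C_D^*$ blow-up, and the subsequent union bound adds another factor $\kappa$. Making the high-probability guarantee survive all three factors hinges precisely on the inequality $C_D^*\le\kappa$, which lets a $\Theta(C_D^*\log\kappa)$ choice of $i$ absorb the polynomial overhead into the exponent; I would make sure this counting bound on $C_D^*$ and the identification of the $i$-th effective-step vertex as a (non-sink) source are stated cleanly, since everything else is a direct invocation of Lemma~\ref{lmm-NonSinkProb}.
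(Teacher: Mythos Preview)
Your proposal is correct and follows essentially the same route as the paper: bound the probability that the walk from a fixed source has effective length at least $i$ by summing the per-vertex bound of Lemma~\ref{lmm-NonSinkProb} over the $\kappa$ sources, then choose $i=\Theta(C_D^*\log\kappa)$ and union bound. The only cosmetic differences are that the paper picks $i=\alpha C_D^*\ln(\kappa C_D^*)$ (and then implicitly uses $C_D^*\le\kappa$ to rewrite this as $O(C_D^*\log\kappa)$) and leaves the union bound over starting sources implicit, whereas you invoke $C_D^*\le\kappa$ up front to kill the prefactor and spell out the final union bound explicitly.
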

\begin{proof}
Suppose the random walk from $v$ reaches a non-sink node $u$ after $\alpha C_D^*\ln (\kappa C_D^*)$ effective steps. If $u$ is not a source,  we can back-track to the last source, denote as $u'$, along the path in the random walk. Note that the random walk from $v$ reaches $u'$ after $\alpha C_D^*\ln (\kappa C_D^*)$ effective steps.

Since $\kappa>1$, we have $C_D^*>1$. By Lemma~\ref{lmm-NonSinkProb}, the probability that the random walk from $v$ reaches a source $u$ (or $u'$ if $u$ is not a source) after $\alpha C_D^*\ln (\kappa C_D^*)$ effective steps is at most $\kappa C_D^*(1-1/C_D^*)^{\alpha C_D^* \ln (\kappa C_D^*)}\leq \kappa^{1-\alpha}$. Thus, the random walk terminates at a sink in $\alpha C_D^*\ln (\kappa C_D^*)$ effective steps with the probability at least $1-\kappa^{1-\alpha}$. This means that the effective height of any tree in $\mathscr{P}(D)$ is at most $O(C_D^* \log \kappa)$ with
probability of at least $1-\kappa^{-c}$ for a large positive constant $c$.
\end{proof}

Suppose each node is a source. Then, the effective height of a tree equals to its height. Thus, by setting $\kappa=n$, we bound the {\em height} of any tree in $D$, and conclude as Lemma~\ref{lmm-mergetree}.

\subsubsection{Bounding the Moment of $\hat{N}_D(v)$}

Instead of directly bounding the moment of $\hat{N}_D(v)$ in $D$, we would like to transform $D$ into the random process on some simpler tree graphs.

We transform $D$ into a collection of trees, denote as $\mathcal{T}$, through a sequence of steps. Let $D_j$ denote the DAG obtained after $j$ step $(j\geq 0)$, where $D_0=D$. For a given DAG $D$, let $D(v)$ denote the subgraph of $D$ induced by all of the nodes that
can reach $v$ in $D$. Step $j + 1$ proceeds as follows:
\begin{enumerate}

\item [1.]
Find a node $v\in D_j$ such that the subgraph $D_j(v)$ is a tree and $v$ has more than one outgoing  edges. If no such node is found, then $D_j$ is a tree and the transformation is completed.

\item [2.]
Let $(v,u_1),..., (v,u_k)$ denote the $k \geq 2$ edges going out of $v$. We transform $D_j$ into $D_{j+1}$ as follows.
Replace $v$ and the subtree $D_j(v)$ rooted at $v$ by $k$ copies of each, and replace the edge $(v,u_i)$ by
the edge $(v_i,u_i)$, where $v_i$ is the $i$-th copy of $v$. Each new edge inherits the probability of the edge it replaces or copies.

\end{enumerate}

Now we introduce some notations. For a given node $u$ and a nonnegative integer $h_e$, let the random variable $\hat{N}_{D_j}(u, h_e)$ denote the number of sources within $h_e$ effective hops of $u$, in the
subtree rooted at $u$ under the random process $\mathscr{P}(D_j)$. We note that $E[\hat{N}_{D_j}(u, h_e)]=E[\hat{N}_{D_{j+1}}(u, h_e)]$.
This implies the following equality $C_D^*=C_{\mathcal{T}}^*$.

{\em Contraction.} Now, we introduce a new process---contraction:
\begin{enumerate}

\item [1.]
For a given node $u$ in the DAG $D_j\ (j\geq 0)$, let $S:= \{u\}\cup \{s_1,...,s_\kappa\}\cup \{t_1,...,t_k\}$, i.e., the node set that includes all sources, all sinks and the node $u$. Here, we view a non-source node $u$ as a sink in our analysis.

\item [2.]
For any node $v\in V\setminus S$, let $(u_1,v),...,(u_p,v)$ be all incoming edges of $v$, and $(v,w_1),...,(v,w_q)$ be all outgoing edges of $v$.

\item [3.]
Remove $v$. For each $u_i$ and any $i\in [p]$, add new edges $(u_i,w_1),...,(u_i,w_q)$, and set the probability $p(u_i,w_j):=p(u_i,v)p(v,w_j)$ for $j\in [q]$. This is illustrated in Figure~\ref{fig:contract}.

\item [4.]
Repeat the above process until all nodes in $V\setminus S$ are contracted.

\end{enumerate}
Denote the DAG $D_j$ after  contraction by $D_j^c$. Clearly, the random process $\mathscr{P}(D_j)$ is equivalent to $\mathscr{P}(D_j^c)$, and
we have $E[(\hat{N}_{D_j^c}(u,h_e))^r]=E[(\hat{N}_{D_j}(u,h_e))^r]$ for any integer $r$. Also, it implies that $C_{\mathcal{T}}^*\geq C_{\mathcal{T}^{c}}^*$, since we contract some node in $\mathcal{T}$.

\begin{figure}
\begin{center}
\vspace{-0cm}\hspace{-0cm}{\includegraphics[scale=0.6]{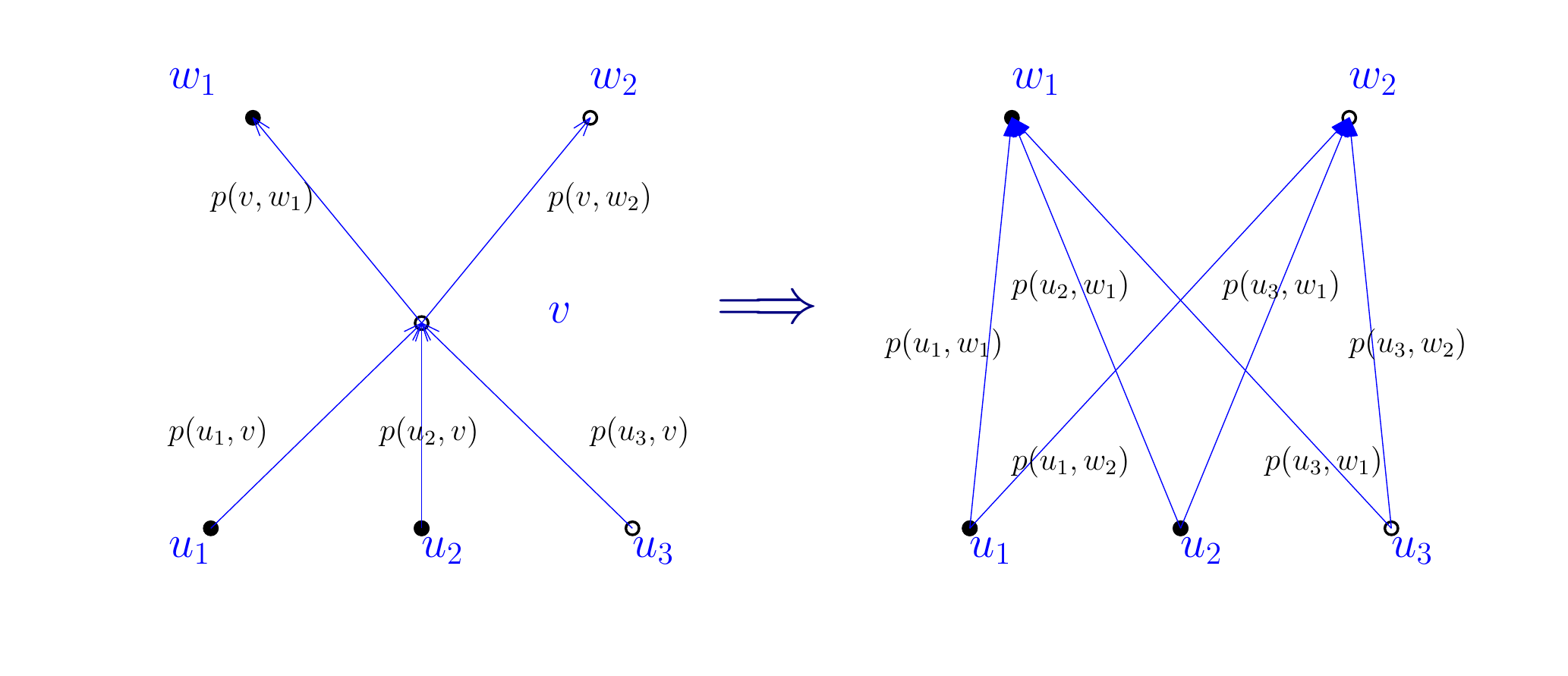}}
\\ \caption{An example of contraction on the node $v$, where $p(u_i,w_j)=p(u_i,v)p(v,w_j)$ with $i=1,2,3$ and $j=1,2$. Black nodes represent nodes in $S$, while white nodes represent nodes in $V\setminus S$.}\vspace{-0.8cm}
\label{fig:contract}
\end{center}
\end{figure}

Note that after contraction, all nodes except sinks have unit supplies, and each hops in random walk is an effective hops. For a node $u$ in a given DAG $D$ and any nonnegative integer $h$, let   $N_{D}(u,h)$ denote the number of {\em nodes}  within $h$ hops of $u$, in the
subtree rooted at $u$ under the random process $\mathscr{P}(D)$. Then, we have $\hat{N}_{D_j^c}(u,h_e)=N_{D_j^c}(u,h_e)$.

Thus, we can
bound the moment of $\hat{N}_{D_j}(u,h_e)$.
\begin{lemma}\label{lmm-HigherMoment}
For integers $h, r\geq 0$, and node $u$, we have $E[(\hat{N}_{D_j}(u,h_e))^r]\leq E[(\hat{N}_{D_{j+1}}(u,h_e))^r]$.
\end{lemma}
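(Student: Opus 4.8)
The plan is to prove the step by coupling the two processes $\mathscr{P}(D_j)$ and $\mathscr{P}(D_{j+1})$ everywhere except on the portion that the $(j{+}1)$-st transformation actually modifies, and to isolate a single scalar comparison that convexity settles. Recall that the step picks a node $v$ whose upstream set $T:=D_j(v)$ is a tree, with out-edges $(v,u_1),\dots,(v,u_k)$, and replaces $v$ together with $T$ by $k$ independent copies, the $i$-th copy feeding only $u_i$. First I would dispose of the easy positions of $u$: if $u$ is not strictly downstream of $v$ (no directed $v$-to-$u$ path), then the subtree rooted at $u$ under $\mathscr{P}$ never contains a node of $T$, since every node of the tree $T$ routes only toward $v$; hence $\hat N_{D_{j+1}}(u,h_e)$ has the same distribution as $\hat N_{D_j}(u,h_e)$ (for $u\in T\cup\{v\}$ one reads the statement on any one of the copies), and the inequality holds with equality. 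The entire content therefore lies in the case that $u$ is strictly downstream of $v$.

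For that case I would condition on all randomness other than (a) the internal choices inside $T$ (and inside its copies $T^{(1)},\dots,T^{(k)}$ in $D_{j+1}$) and (b) the choice at $v$ (resp.\ at $v_1,\dots,v_k$). Under this conditioning the downstream realization is frozen, so each branch $i$ has a fixed reachability status and a fixed effective-hop cost $\delta_i$ from $v$ to $u$; write $S$ for the set of branches whose path reaches $u$ and $h_i:=h_e-\delta_i\ge 0$ for the residual hop budget, and let $z\ge 0$ be the (now constant) number of counted sources that do not route through $v$. Then I would express the $T$-contribution in $D_j$ as $X=R(h_I)\,\mathbf 1[I\in S]$, where $I$ is the branch selected at $v$ (with $\Pr[I=i]=p(v,u_i)$) and $R(h)$ counts the sources of $T$ whose internal path reaches $v$ within $h$ effective hops; crucially, since $v$ selects a single out-edge, exactly one branch ever carries the whole of $T$. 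In $D_{j+1}$ the same quantity becomes $X'=\sum_{i\in S}B_i\,R^{(i)}(h_i)$ with independent copies $R^{(i)}$ and independent $B_i\sim\mathrm{Bern}(p(v,u_i))$. Thus $\hat N_{D_j}(u,h_e)=z+X$ and $\hat N_{D_{j+1}}(u,h_e)=z+X'$, and it suffices to prove $E[(z+X)^r]\le E[(z+X')^r]$.

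The final step is this scalar inequality, which I would settle by an increasing-convex-order argument. Because $X$ uses only its selected coordinate, its marginal coincides with the mixture outputting $R^{(i)}(h_i)$ with probability $p(v,u_i)$, so I may couple $X$ and $X'$ to share the same independent realizations $y_i:=R^{(i)}(h_i)\ge 0$. Conditioning further on $(y_i)_{i\in S}$ reduces the goal, for $\phi(x)=x^r$, to
\begin{equation}
 E_B\Big[\phi\big(z+\textstyle\sum_{i\in S}B_i y_i\big)\Big]\ \ge\ \sum_{i\in S}p(v,u_i)\,\phi(z+y_i)+\Big(1-\sum_{i\in S}p(v,u_i)\Big)\phi(z).
\end{equation}
Setting $g(x):=\phi(z+x)-\phi(z)$, which is increasing, convex, and satisfies $g(0)=0$, the right-hand side equals $\phi(z)+\sum_i p(v,u_i)g(y_i)$ and the left-hand side equals $\phi(z)+E_B[g(\sum_{i\in S}B_iy_i)]$. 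Since any increasing convex $g$ with $g(0)=0$ is superadditive on $[0,\infty)$, one has $g(\sum_{i\in\mathcal B}y_i)\ge\sum_{i\in\mathcal B}g(y_i)$ for $\mathcal B=\{i:B_i=1\}$, and taking expectations gives $E_B[g(\sum_i B_iy_i)]\ge\sum_i\Pr[i\in\mathcal B]\,g(y_i)=\sum_i p(v,u_i)g(y_i)$, which is the claim; undoing the conditioning yields the lemma.

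The step I expect to be the main obstacle is not the convexity argument (short once set up) but the bookkeeping of the middle paragraph: verifying that freezing the downstream randomness genuinely decouples the count into the constant $z$ plus a term depending only on $T$ and the choice at $v$, and in particular that in $D_j$ the whole of $T$ is funnelled through the \emph{single} selected branch whereas in $D_{j+1}$ the branches become truly independent. Matching the effective-hop budgets $h_i$ and the reachable set $S$ on both sides, so that the coupling of the $y_i$'s is legitimate, is the delicate point; everything else collapses to the one-line superadditivity inequality.
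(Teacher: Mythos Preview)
Your approach is correct in substance but differs markedly from the paper's. The paper does \emph{not} argue directly: it first applies the ``contraction'' operation (removing all nodes except sources, sinks, and $u$) to pass from $D_j$ to $D_j^c$, observes that after contraction every hop is an effective hop so that $\hat N_{D_j^c}(u,h_e)=N_{D_j^c}(u,h_e)$, and then simply invokes Lemma~4.2 of~\cite{merge03} (stated for ordinary hop counts on DAGs where every node is a source) to obtain $E[(N_{D_j^c})^r]\le E[(N_{D_{j+1}^c})^r]$; the chain of equalities $E[(\hat N_{D_j})^r]=E[(\hat N_{D_j^c})^r]=E[(N_{D_j^c})^r]$ closes the loop. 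Your argument, by contrast, is self-contained: you couple the two processes, isolate the contribution of the replicated subtree, and finish with a clean superadditivity step for increasing convex functions vanishing at the origin. What you gain is an explicit proof that does not outsource the core inequality to~\cite{merge03}; what the paper gains is brevity, since the contraction trick collapses ``effective hops'' to ``hops'' and lets it quote a known lemma verbatim.

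One caveat worth making explicit: your decomposition $\hat N=z+X$ (and the claim ``every node of the tree $T$ routes only toward $v$'') silently uses that every $w\in T\setminus\{v\}$ has out-degree exactly~$1$ in $D_j$, not merely in the induced subgraph $D_j(v)$. If some $w\in T$ had an additional out-edge to a node $w'\notin T$, a source in $T$ could reach $u$ bypassing $v$ entirely, and in $D_{j+1}$ the $k$ copies of $w$ would all feed $w'$, breaking both the ``constant $z$'' and the ``single selected branch'' parts of your bookkeeping. This stronger out-degree property is exactly what the transformation intends (and is guaranteed if one always picks the topologically earliest node of out-degree~$>1$), so your proof goes through in the intended setting; just flag the assumption rather than leaving it implicit.
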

\begin{proof}
%
Due to Lemma~4.2 in \cite{merge03}, which assumes all nodes have unit supplies, we have
$$
E[(N_{D_j^c}(u,h_e))^r]\leq E[(N_{D_{j+1}^c}(u,h_e))^r],
$$
since in the DAG $D_j^c$ all nodes except sinks are sources.

It immediately means that
\begin{eqnarray}
&&E[(\hat{N}_{D_j}(u,h_e))^r]= E[(\hat{N}_{D_j^c}(u,h_e))^r]
=E[(N_{D_j^c}(u,h_e))^r]\nonumber\\
&\leq& E[(N_{D_{j+1}^c}(u,h_e))^r]
=E[(\hat{N}_{D_{j+1}^c}(u,h_e))^r]\nonumber\\
&=&E[(\hat{N}_{D_{j+1}}(u,h_e))^r].\nonumber
\end{eqnarray}
\end{proof}

\begin{lemma}\label{lmm-FinalCong}
For any $i\geq 0$, $E[(\hat{N}_{\mathcal{T}^c}(u,h_e))^i]$ is at most $\frac{i!(C_{\mathcal{T}^c}^* )^i(h_e O(\log \kappa))^{i-1}}{2^{i-1}}$.
\end{lemma}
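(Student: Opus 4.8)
The plan is to prove the bound by induction on the moment order $i$, working entirely in the contracted tree $\mathcal{T}^c$, since Lemma~\ref{lmm-HigherMoment} already reduces the problem to bounding moments after full tree-splitting and contraction. Recall that in $\mathcal{T}^c$ every non-sink node is a unit-supply source and every hop is effective, so $\hat N_{\mathcal{T}^c}(u,h_e)$ simply counts the sources lying within $h_e$ hops below $u$ that remain attached to $u$ under the random process $\mathscr{P}(\mathcal{T}^c)$. Writing $\hat N_{\mathcal{T}^c}(u,h_e)=\sum_w I_w$, where $I_w$ is the indicator that $w$ lies within $h_e$ hops of $u$ and the entire path from $w$ up to $u$ is selected, I would expand $E[\hat N_{\mathcal{T}^c}(u,h_e)^i]=\sum_w E[I_w\,\hat N_{\mathcal{T}^c}(u,h_e)^{i-1}]$ and then bound each conditional $(i-1)$-st moment using the inductive hypothesis.

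First I would establish the single-step recurrence $E[\hat N_{\mathcal{T}^c}(u,h_e)^i]\le \tfrac{i}{2}\,C_{\mathcal{T}^c}^*\,\bigl(h_e\,O(\log\kappa)\bigr)\,\max_v E[\hat N_{\mathcal{T}^c}(v,h_e)^{i-1}]$. To obtain it, I would condition on the event $I_w=1$. Since $\mathscr{P}$ keeps at most one outgoing edge per node, conditioning on the $w$-to-$u$ path being selected forces the path vertices to have chosen their on-path edge, which can only shrink the rest of the tree; hence the conditioned subtree of $u$ is contained in that path together with the subtrees hanging off the path vertices, and its $(i-1)$-st moment is controlled by $\max_v E[\hat N_{\mathcal{T}^c}(v,h_e)^{i-1}]$ (using monotonicity in the hop budget). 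The total weight $\sum_w P(I_w=1)$ equals $C_{\mathcal{T}^c}(u)\le C_{\mathcal{T}^c}^*$, but the moment weighting spreads this mass across the $O(h_e)$ effective levels; summing the reach probabilities of Lemma~\ref{lmm-NonSinkProb}, which decay geometrically at rate $(1-1/C_{\mathcal{T}^c}^*)$ and are cut off at the effective height $O(C_{\mathcal{T}^c}^*\log\kappa)$ of Lemma~\ref{lmm-height}, is what produces the extra $O(\log\kappa)$ factor per step, while the convergence of the geometric sum yields the factor $\tfrac12$.

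Then I would telescope: iterating the recurrence from $i$ down to the base case $E[\hat N_{\mathcal{T}^c}(u,h_e)^1]\le C_{\mathcal{T}^c}^*$ multiplies the per-step factors $\tfrac{j}{2}\,C_{\mathcal{T}^c}^*\,\bigl(h_e\,O(\log\kappa)\bigr)$ over $j=2,\dots,i$, giving exactly $\frac{i!\,(C_{\mathcal{T}^c}^*)^i\,(h_e\,O(\log\kappa))^{i-1}}{2^{i-1}}$, as claimed. The main obstacle I anticipate is making the conditioning step fully rigorous: cleanly decomposing the subtree of $u$ on the event $I_w=1$ into the selected path plus the off-path subtrees, verifying that the off-path contributions remain governed by the inductive $(i-1)$-st-moment bound without re-introducing dependence through shared random choices, and checking that the level-by-level summation of the decaying reach probabilities really tightens to the stated per-step factor with its $O(\log\kappa)$ term and the crucial factor $\tfrac12$ of slack. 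Should this conditioning prove delicate, an alternative is to use the child recurrence $\hat N_{\mathcal{T}^c}(u,h_e)=1+\sum_j X_j\,\hat N_{\mathcal{T}^c}(v_j,h_e-1)$, which is a sum of independent terms over the children $v_j$, expand the power by the multinomial theorem, and control the elementary symmetric sums of the means $\mu_j=p(v_j,u)\,C_{\mathcal{T}^c}(v_j)$ through the identity $\sum_j\mu_j=C_{\mathcal{T}^c}(u)-1\le C_{\mathcal{T}^c}^*$, inducting on $h_e$ rather than on $i$.
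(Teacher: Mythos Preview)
The paper's own proof is a one-line citation: after observing $\hat N_{\mathcal{T}^c}(u,h_e)=N_{\mathcal{T}^c}(u,h_e)$ (since every non-sink node in $\mathcal{T}^c$ is a source), it simply invokes Corollary~4.5.1 of \cite{merge03}, remarking that the tree transformation inside that corollary is what produces the extra $O(\log\kappa)$ factor in the height. So the paper does not give a self-contained argument here at all.

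Your primary approach, however, has a real gap. From the expansion $E[\hat N^i]=\sum_w E[I_w\,\hat N^{i-1}]$ there is no evident source for the factor $i$ in your claimed single-step recurrence $E[\hat N^i]\le\tfrac{i}{2}\,C^*\,(h_e\,O(\log\kappa))\max_v E[\hat N(v)^{i-1}]$: writing $\hat N^i=\hat N\cdot\hat N^{i-1}$ peels off one copy of $\hat N$, not $i$ of them, and your appeal to ``moment weighting spreading mass across $O(h_e)$ levels'' does not manufacture this factor. Without the $i$, the telescoping does not produce $i!$. Separately, Lemmas~\ref{lmm-NonSinkProb} and~\ref{lmm-height} bound reach probabilities and effective heights for the random process on the DAG $D$; in the fixed contracted tree $\mathcal{T}^c$ there is no global geometric decay rate to sum, so invoking those lemmas to obtain ``the factor $\tfrac12$'' and the $O(\log\kappa)$ term is misplaced.

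Your fallback plan is the correct route and is essentially what \cite{merge03} does. The child recurrence $\hat N_{\mathcal{T}^c}(u,h)=1+\sum_j X_j\,\hat N_{\mathcal{T}^c}(v_j,h-1)$ with independent Bernoulli $X_j$ is exactly the structure exploited there: one first passes to a suitably balanced tree (this transformation is what costs the $O(\log\kappa)$ factor in the height parameter, as the paper notes), and then bounds moments by induction on $h$ using independence across children and the constraint $\sum_j p(v_j,u)\,C(v_j)\le C^*$. The $i!$ and $2^{-(i-1)}$ emerge from the combinatorics of distributing moment orders among the children in that induction, not from a per-step factor of $i/2$ in an induction on $i$.
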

\begin{proof}
Note that $\hat{N}_{\mathcal{T}^c}(u,h_e)=N_{\mathcal{T}^c}(u,h_e)$. Thus,
due to Corollary~4.5.1 in~\cite{merge03}, the lemma follows. Also note that the transformation inside Corollary~4.5.1 would introduce a $O(\log \kappa)$ factor in the tree height.
\end{proof}

\subsubsection{Proof of Theorem~\ref{thm-mergetree}}

This section completes the proof of Theorem~\ref{thm-mergetree}. To bound the congestion of $f'$, we only need to bound the number of sources in the
subtree rooted at each sink $t_j$ under the random process $\mathscr{P}(D)$. Due to Lemma~\ref{lmm-height},  the effective height of any tree in $\mathscr{P}(D)$ is $O(C_D^*\log \kappa)$ with probability at least $1-\kappa^{-c}$. It suffices to bound $\hat{N}_{D}(t_j,h_e)$, where $h_e=O(C_D^*\log \kappa)$.

We use Markov's inequality
\begin{eqnarray}
&& Pr[\hat{N}_{D}(t_j,h_e) > \alpha h_e C_D^*\cdot O(\log \kappa)] \nonumber\\
&=& Pr[(\hat{N}_{D}(t_j,h_e) )^i> \alpha^i h_e^i (C_D^* \cdot O(\log \kappa))^i]\nonumber\\
& \leq & \frac{E[(\hat{N}_{D}(t_j,h_e) )^i]}{\alpha^i h_e^i (C_D^* \cdot O(\log \kappa))^i}\nonumber\\
& \leq & \frac{E[(\hat{N}_{\mathcal{T}}(t_j,h_e) )^i]}{\alpha^i h_e^i (C_D^* \cdot O(\log \kappa))^i} \hspace{2cm} \text{(due to Lemma~\ref{lmm-HigherMoment})}\nonumber\\
& \leq & \frac{i!(C_{D}^* )^i(h_e\cdot O(\log \kappa))^{i-1}}{2^{i-1}\alpha^i h_e^i (C_D^* \cdot O(\log \kappa))^i} \hspace{1.35cm} \text{(due to Lemma~\ref{lmm-FinalCong} and $C_{D}^*\geq C_{\mathcal{T}^c}^*$)}\nonumber\\
& \leq & \frac{i!}{2^{i-1}\alpha^i h_e \cdot O(\log \kappa)}. \nonumber
\end{eqnarray}
Now, by setting $\alpha=i=O(\log \kappa)$, the probability above is at most $\kappa^{-r}$ for a large positive constant $r$. It immediately means that the congestion of $f'$ is at most $O(NC(f)^2\log^3 \kappa)$ (since $NC(f)=C_{D}^*$), with probability of at least $1-\kappa^{-r}$.

Since $\kappa\geq 2$, we have $1-\kappa^{-r}\geq 1/2$. We repeat our randomized rounding algorithm for $O(\log n)$ iterations, which yields a confluent flow with desired congestion with high probability. Therefore, we complete the proof of Theorem~\ref{thm-mergetree}.

\subsection{Static Length-Bounded Confluent Flows in Monotonic Networks}\label{apd:monotonic_network}

This section presents the proof of Theorem~\ref{thm-monotonicflow} and \ref{thm-monotonicflow-relaxed}.

{\em Construction of sub-networks.} Without loss of generality, we assume the minimum node capacity is 1. We first process the original network: (i) Round up each node capacity to the nearest power of $\log^4 n$; (ii) Partition the nodes into groups according to their capacities, such that the nodes in the same group have  the same capacity; (iii) For any edge $(u,v)$ with $c(u)=\log^{4i} n$ and $c(v)=\log^{4(i+r)} n$ $(r>1)$, add dummy nodes $u^{(j)}$ of $u$ with capacity of $\log^{4(i+j)} n$ where $1\leq j<r$, and replace $(u,v)$ with a path connecting $u,u^{(1)},u^{(2)},...,u^{(r-1)},v$ sequentially (all dummy nodes have zero supplies).
%
%
%
%
%
%

Given a 1-satisfiable flow $f$ with node congestion at most 1, we now construct the $i$-th sub-network $G_i:=(V_i,A_i)$ as follows:
\begin{enumerate}

\item [$\bullet$] Set $V_i$ as the set of nodes of capacity $\log^{4i} n$ and  their  incident nodes  of capacity $\log^{4(i+1)} n$;

\item [$\bullet$] Let $S_i$ be the set of nodes of capacity $\log^{4(i+1)} n$, signifying the sink set in $G_i$;

\item [$\bullet$] Let $A_i$ consist of all arcs  induced by $V_i$ in $G$, excluding all arcs between nodes in $S_i$;

\item [$\bullet$]  Set the supply $d_i(v)$ at the node $v\in V_i\setminus S_{i}$ as $d_i(v):=d(v)+\sum_{u:u\in V_{i-1}\setminus S_{i-1}}f(u,v)$.

\end{enumerate}

Conceptually, we would partition nodes into $r:=\lceil \log c_{\max} /(4\log\log n)\rceil$ groups, where $c_{\max}$ is the maximum node capacity, and then $r$ sub-networks, the size of which might not be polynomial. However, because of the fact that the total number of nodes  is at most $n$ (excluding those dummy nodes whose total size will be reduced to polynomial later), there exist at most $n$ sub-networks. Hence, the decomposition of network can be done in  polynomial time.

Note that $f$ induces a flow $f_i$  with node congestion at most 1 in $G_i$. For each sub-network $G_i$: as $f_i$ might not be confluent, we would like to round it to a confluent flow $f'_i$. Because all non-sink nodes have uniform capacity due to our construction, we then can view the sub-network as an uncapacitated network. On the other hand, since the node congestion of $f_i$ is at most 1 in $G_i$, the supply $d_i(v)$ at each non-sink node $v$ is at most the node capacity  $\log^{4i} n$, and now we can view all supplies as $\log^{4i} n$ (i.e., uniform-supply case). Thus,
we can utilize Theorem~\ref{thm-mergetree} to compute a confluent flow $f'_i$ from $f_i$.

Remember that those rounding processes in each $G_i$ are independent from each other. In the following, we  bound the node congestion of $f'_i$.
\begin{lemma}\label{lmm-sepcongestion}
Whp, the node congestion at $v \in \mathcal{S}_i$ is at most $1 + \frac{O(1)}{\log n}$ in the confluent flow $f'_i$.
\end{lemma}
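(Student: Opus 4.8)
The plan is to show that the random rounding of Theorem~\ref{thm-mergetree}, applied inside the sub-network $G_i$, overloads each high-capacity sink $v\in S_i$ by only a $1+O(1/\log n)$ factor, by establishing a sharp concentration for the number $\hat N_D(v)$ of sources that $f'_i$ routes into $v$. Recall that $G_i$ is treated as an uncapacitated network in which every non-sink node carries the uniform supply $\log^{4i}n$, so the confluent inflow into $v$ equals $\hat N_D(v)\cdot\log^{4i}n$ and, since $c(v)=\log^{4(i+1)}n$, the congestion at $v$ is exactly $\hat N_D(v)/\log^{4}n$. Thus it suffices to prove that, whp, $\hat N_D(v)\le\log^4 n+O(\log^3 n)$.

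First I would pin down the mean. Because the splittable flow $f_i$ has node congestion at most $1$ and $v$ has capacity $\log^{4(i+1)}n$, flow conservation on the global flow $f$ gives that the splittable inflow into $v$ is at most $c(v)$; since the edge-selection probabilities of the process $\mathscr{P}(D)$ are exactly $f(e)/f^{out}(\cdot)$, linearity yields $C_D(v)=E[\hat N_D(v)]\le c(v)/\log^{4i}n=\log^4 n$, i.e.\ the \emph{expected} congestion at $v$ is already at most $1$. The entire content of the lemma is therefore the concentration statement that $\hat N_D(v)$ exceeds its mean by at most $O(\log^3 n)$.

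The main difficulty is that the generic moment estimate behind Theorem~\ref{thm-mergetree} (Lemmas~\ref{lmm-HigherMoment}--\ref{lmm-FinalCong}) only controls the tail at the \emph{multiplicative} scale $O(\log^3\kappa)$ and is far too weak to give an $O(1/\log n)$ relative deviation at a node whose mean is as large as $\log^4 n$. To obtain the sharp additive bound I would instead decompose the tree rooted at $v$ by \emph{effective distance}: working in the contracted DAG $D^c$, in which every hop is effective so that all internal tree nodes are sources, let $L_\ell$ be the number of tree-sources at effective distance $\ell$ from $v$, so $\hat N_D(v)=\sum_{\ell\ge 1}L_\ell$. Revealing the independent out-edge selections in backward-BFS order from $v$ (deferred decisions), one checks that, conditioned on level $\ell-1$, the quantity $L_\ell$ is a sum of \emph{independent} $0/1$ indicators, each candidate source independently deciding whether its selected out-edge lands in the previous frontier. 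A conditional Chernoff bound then gives, whp, $L_\ell\le E[L_\ell\mid\text{level }\ell-1]+O\!\left(\sqrt{E[L_\ell\mid\cdot]\,\log n}+\log n\right)$.

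Finally I would sum over levels. By Lemma~\ref{lmm-height}, the effective height is $O(C_D^*\log\kappa)=O(\log n)$ since $C_D^*\le 1$ on non-sink nodes, so there are only $O(\log n)$ levels and $\sum_\ell E[L_\ell\mid\cdot]$ stays anchored at $C_D(v)\le\log^4 n$. Applying Cauchy--Schwarz to the square-root terms, the accumulated deviation is $O\!\left(\sqrt{(\log n)\cdot\log^4 n\cdot\log n}+(\log n)\cdot\log n\right)=O(\log^3 n)$; dividing by the $\log^4 n$ capacity gap turns this into a congestion of at most $1+O(1/\log n)$, and a union bound over the at most $n$ sinks $v\in S_i$ makes the bound hold simultaneously whp. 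The step I expect to be most delicate is controlling the conditional means $E[L_\ell\mid\text{level }\ell-1]$ across levels rigorously --- turning the level-by-level Chernoff bounds into one martingale/union-bound argument that keeps $\sum_\ell E[L_\ell\mid\cdot]$ pinned near $C_D(v)$ --- since it is precisely the interplay between the bounded effective height and the large mean that produces the crucial $o(\text{mean})$ deviation.
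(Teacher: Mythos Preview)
Your proposal rests on a misreading of Theorem~\ref{thm-mergetree}. You write that the theorem ``only controls the tail at the multiplicative scale $O(\log^3\kappa)$'' relative to the mean $C_D(v)$ at the particular sink~$v$, and that this is too weak when $C_D(v)$ is as large as $\log^4 n$. But the bound produced by the Markov-inequality step in the proof of Theorem~\ref{thm-mergetree} is \emph{absolute}: it says $\hat N_D(t_j)\le O\bigl((C_D^*)^2\log^3\kappa\bigr)$ whp, where $C_D^*=NC(f_i)$ is the maximum expected load over the \emph{non-sink} nodes (sinks are excluded in the definition of $NC$, and in the contraction step all non-source nodes are treated as sinks, so $C_{\mathcal T^c}^*$ never sees the large sink mean). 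In the scaled sub-network $G_i$ every non-sink node has unit capacity and unit supply, so $C_D^*=NC(f_i)\le 1$, and the theorem directly yields $\hat N_D(v)=O(\log^3 n)$ whp at every node, \emph{including} the sinks $v\in S_i$. Dividing by the sink capacity $\log^4 n$ and adding the at-most-$1$ contribution from $v$'s own initial supply gives $NC(v)\le 1+O(1)/\log n$; this is exactly the paper's proof, and the entire reason the layer ratio was chosen to be $\log^4 n$ rather than, say, $\log^3 n$.

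Your level-by-level Chernoff/martingale scheme is therefore unnecessary. It is also shakier than you acknowledge: the step you yourself flag as ``most delicate''---keeping $\sum_\ell E[L_\ell\mid\text{level }\ell-1]$ anchored near the \emph{unconditional} mean $C_D(v)$---is genuinely problematic, because once an early level overshoots, the conditional means of all subsequent levels inflate and the deviations can compound multiplicatively rather than additively. Your Cauchy--Schwarz calculation silently substitutes the unconditional sum of means for the conditional one. So the proposal replaces a one-line black-box application with a harder argument that has its own gap.
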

\begin{proof}
Consider the $i$-th sub-network, where all non-sink nodes have capacity $\log^{4i} n$ and sink nodes have capacity of $\log^{4(i+1)} n$. Also,
because the node congestion of $f_i$ is at most 1 in $G_i$, according to our construction, the supply $d_i(v)$ at each non-sink node $v$ is at most the node capacity  $\log^{4i} n$. Hence, we scale all node capacities and supplies down by a $\log^{4i} n$ factor. Note that the scaling does not affect the congestion. We can view the new network as  an ``uncapacitated'' network with unit supplies
(ignoring the capacity and supplies of {\em sinks in $S_i$} at this stage).

Now, we can utilize Theorem~\ref{thm-mergetree} to round the flow $f_i$ to a confluent flow $f'_i$ in $G_i$, which results in the node congestion $O(NC(f_i)^2\log^3 n)=O(\log^3 n)$ in the ``uncapacitated'' network $G_i$ whp. This implies the rounding process adds the flow  of value at most $O(\log^3 n)$ at any sink $v\in S_i$ (this flow increment only results from the supplies of non-sink nodes, and we have not yet taken into account those supplies at sinks in $S_i$).

On the other hand, note that after the scaling via Theorem~\ref{thm-mergetree}, for each sink $v\in S_i$, the initial node congestion induced by the original flow $f_i$ is at most 1, which means the supplies at $v$ is at most equal to its capacity $\log^{4} n$. Together with the flow increased by rounding, the total flow located at $v$ is at most $\log^{4} n+O(\log^{3} n)$. Thus, it follows that, whp, the node congestion at $v$
\begin{eqnarray}
NC(v)\leq \frac{\log^{4} n+O(\log^{3} n)}{\log^{4} n}=1+\frac{O(1)}{\log n}.\nonumber
\end{eqnarray}
\end{proof}

Now, we have a collection of confluent flows $\{f'_0,f'_1,...,f'_r\}$ in the  sub-networks $\{G_0,G_1,G_2,...,G_r\}$, where $r=\lceil \log c_{\max} /(4\log\log n)\rceil$.
Obviously, linking the support of those flows $\{f'_0,f'_1,...,f'_r\}$ induces a confluent flow $\mathbbm{f}$ for routing all supplies in the original network $G$. However, the node congestion of $\mathbbm{f}$ might be large, because the flows $f_1,...,f_r$ are turned into $f_1',...,f_r'$ with larger congestion  than before.

To bound the congestion of $\mathbbm{f}$, let $\mathbbm{f}_i$ be the induced confluent flow by $\{f'_0,f'_1,...,f'_i\}$ on $V_0\cup V_1\cup \cdots \cup V_i$. Observe that the congestion of node $v\in V_i$ in $\mathbbm{f}_i$ is the same as its congestion in $\mathbbm{f}$. It suffices to analyze the node congestion of $v\in V_i$ in $\mathbbm{f}_i$.
\begin{lemma}\label{lmm-mergecongestion}
For any $i\in\{0,1,...,r\}$, the congestion of $v\in V_i$ is at most $O(\log^4 n)$ in the confluent flow $\mathbbm{f}_i$ whp.
\end{lemma}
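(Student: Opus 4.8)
The plan is to bound the node congestion of $\mathbbm{f}_i$ layer by layer through an induction that tracks the \emph{actual} amount of flow that the combined confluent flow delivers into each layer from below, as opposed to the idealized amount $d_i(\cdot)$ that was used to build the trees. For a node $w\in V_i\setminus S_i$ (equivalently $w\in S_{i-1}$), let $A_i(w)$ denote the flow that $\mathbbm{f}_{i-1}$ routes into $w$, and set $M_i:=\max_{w\in V_i\setminus S_i} A_i(w)/\log^{4i} n$, the worst normalized in-flow entering layer $i$. The crucial point is that inside $G_i$ the confluent tree $T_i$ (the support of $f'_i$) must carry at each non-sink $w$ an \emph{effective supply} of $d(w)+A_i(w)$; since the node congestion of the induced flow $f_i$ is at most $1$ we have $d(w)\le d_i(w)\le \log^{4i} n$, so each effective supply is at most $(1+M_i)\log^{4i} n$. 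With $M_0=0$ as the base case (layer $0$ receives nothing from below), it suffices to control how $M_i$ propagates.

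Within layer $i$, the flow that $\mathbbm{f}_i$ sends out of a non-sink $u\in V_i\setminus S_i$ equals $\sum_{w\in T_i(u)}\bigl(d(w)+A_i(w)\bigr)$, where $T_i(u)$ is the set of nodes in the subtree of $T_i$ rooted at $u$. Applying Theorem~\ref{thm-mergetree} to the scaled unit-supply uncapacitated version of $G_i$ (where $NC(f_i)\le 1$ and $\kappa\le n$), every such subtree contains only $O(\log^3 n)$ source nodes whp. Hence the out-flow at $u$ is at most $O(\log^3 n)\cdot(1+M_i)\log^{4i} n$, giving node congestion $O\bigl(\log^3 n\,(1+M_i)\bigr)$ at $u$. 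It therefore remains to show that $M_i$ stays polylogarithmically bounded and, in particular, does not accumulate geometrically across the $r$ layers.

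To control $M_i$ I would set up the recursion on the sinks. For $s\in S_i$ the flow delivered into $s$ is $A_{i+1}(s)=\sum_{w\in T_i(s)}\bigl(d(w)+A_i(w)\bigr)$. By Lemma~\ref{lmm-sepcongestion} the congestion at $s$ in $f'_i$ is at most $1+O(1)/\log n$, so the subtree $T_i(s)$ holds at most $(1+O(1)/\log n)\log^4 n$ sources; combined with the effective-supply bound this yields $A_{i+1}(s)\le (1+O(1)/\log n)(1+M_i)\log^{4(i+1)} n$, i.e.\ the recursion $M_{i+1}\le (1+O(1)/\log n)(1+M_i)$ with $M_0=0$. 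Because the monotonic network has only $r=\lceil \log c_{\max}/(4\log\log n)\rceil=O(\log n/\log\log n)$ layers, unrolling this recursion produces a multiplicative prefactor $(1+O(1)/\log n)^{r}=O(1)$ together with an additive term of order $r$, so $M_i=O(\log n/\log\log n)$ for every $i$. Substituting back gives congestion $O(\log^3 n)(1+M_i)=O(\log^4 n)$ at each layer-$i$ non-sink; the sinks of $S_i$ carry no $\mathbbm{f}_i$-outflow and are charged instead as non-sinks of layer $i+1$. A union bound over the at most $n$ sub-networks preserves all the per-layer whp guarantees simultaneously, which finishes the proof.

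The main obstacle is precisely this cross-layer accumulation: a naive argument would multiply the per-layer $O(\log^3 n)$ blow-up across all $r$ layers and explode. The delicate part is verifying that each layer inflates the normalized in-flow only by the benign factor $1+O(1)/\log n$ — which is exactly what the sink bound of Lemma~\ref{lmm-sepcongestion} buys us — so that over $r=O(\log n/\log\log n)$ layers the total inflation stays $O(1)$ and $M_i$ grows only additively to $O(\log n/\log\log n)$, leaving a final congestion of $O(\log^4 n)$.
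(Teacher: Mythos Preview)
Your inductive scheme and use of Lemma~\ref{lmm-sepcongestion} match the paper's approach, but there is a genuine gap: you assert that the number of layers satisfies $r=\lceil \log c_{\max}/(4\log\log n)\rceil=O(\log n/\log\log n)$, and your whole bound on $M_i$ hinges on this. Nothing in the setup of Theorem~\ref{thm-monotonicflow} (which Lemma~\ref{lmm-mergecongestion} supports) forces $c_{\max}$ to be polynomial in $n$; indeed the flow-length bound in that theorem carries an explicit $\log c_{\max}$ factor precisely because $c_{\max}$ may be super-polynomial. If, say, $c_{\max}=2^{n}$, then $r=\Theta(n/\log\log n)$ and your recursion $M_{i+1}\le(1+O(1)/\log n)(1+M_i)$ blows up exponentially, so the argument fails for all layers beyond $\Theta(\log n)$.

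The paper handles this with a truncation argument you are missing. Its (purely multiplicative) recursion $NC(v^{(i)})\le NC(v^{(i-1)})\bigl(1+O(1)/\log n\bigr)$ gives $NC(v^{(i)})\le(1+O(1)/\log n)^{i+1}=O(1)$ only for $i\le 2\log n$. For $i>2\log n$ the paper observes that nodes in layers $0,\dots,i-2\log n-1$ have capacity (hence supply) at most a $1/n^{2}$ fraction of the capacity in $S_i$; since at most $n$ nodes carry nonzero supply, their total contribution to $NC(v^{(i)})$ is negligible. One may therefore restrict attention to the last $2\log n$ layers and reuse the Case~(a) bound. Without this observation your proof does not cover the range $2\log n<i\le r$ at all. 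Once you insert this truncation step, your argument and the paper's coincide.
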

\begin{proof}
Consider any sink $v^{(i)}\in S_i\subset V_i$.  Before linking the flows, $f_i$ has congestion at most 1 in $G_i$, and hence the flow $f_i$ induces the supply of $\log^{4i} n$ at each non-sink node in $V_i$. By Lemma~\ref{lmm-sepcongestion}, we know, whp, the rounding of $f_i$ would induces at most $(1+O(1)/\log n)\log^{4(i+1)} n$ units of new flow load at $v^{(i)}$.
Note that this is based on the fact that $f_i$  only induces the supply of at most $\log^{4i} n$ at each non-sink node (up to now, we only consider the sub-network separately).

However, after linking those flows, the congestion of $v$ is caused by the supplies induced by $\mathbbm{f}_{i-1}$, rather than $f_{i}$. Suppose the sink $v^{(i-1)}\in S_{i-1}$ has congestion at most $NC(v^{(i-1)})$ in $\mathbbm{f}_{i-1}$. This would bring in the supply of $NC(v^{(i-1)})\log^{4i} n$ at those non-sink nodes in $V_i$. 
Hence, noting that the supply of {\em non-sink} node is $\log^{4i} n$  before linking (as mentioned above), the supply of {\em non-sink} node in $V_i$ would be larger than the supply before linking by an $NC(v^{(i-1)})$ factor. It implies that the flow load at any sink $v^{(i)}\in S_i$ after rounding would be also increased by at most $NC(v^{(i-1)})$. Thus, whp,
the congestion of $v^{(i)}\in S_i$ is
\begin{eqnarray}\label{eqn-congestion}
NC(v^{(i)})\leq \frac{NC(v^{(i-1)}) (1+O(1)/\log n)\log^{4(i+1)} n}{\log^{4(i+1)} n}\leq NC(v^{(i-1)}) (1+O(1)/\log n).
\end{eqnarray}
Recall that the congestion of any sink in $V_0$ is at most $(1+O(1)/\log n)$, namely $NC(v^{(0)})\leq (1+O(1)/\log n)$. Recursively utilizing Inequality~\ref{eqn-congestion} yields $NC(v^{(i)})\leq (1+O(1)/\log n)^{i+1}$.

Now there exists two cases: (a) $0\leq i\leq 2\log n$, and (b)  $2\log n<i\leq r$. In Case~(a), it is clear that the congestion $NC(v^{(i)})\leq O(1)$. In Case~(b), observe that the capacity, and hence the supply, of
nodes in groups $\{f'_0,f'_1,...,f'_{i-2\log n-1}\}$ are at most a factor $1/n^2$ of the capacity in $S_i$.
Since  at most $n$ nodes have non-zero supplies, these nodes' contribution to $NC(v^{(i)})$ is negligible. Thus, one can truncate those flows into $\{f'_{i-2\log n},f'_{i-2\log n+2},...,f'_{i}\}$, and thus the congestion brought by them is bounded by $NC(v^{(i)})\leq O(1)$ as above. Hence, we bound the congestion of sinks in each $V_i$.

On the other hand, we also need to bound the congestion of non-sink nodes in each $V_i$. Since any non-sink node has capacity smaller than the sink in $V_i$ by a $\log^4 n$ factor, while its load is at most as large as the load of sink, where its load is sent, then its congestion is at most $O(\log^4 n)$ times larger than the sink. Thus, we complete the proof.
\end{proof}

This  immediately yields the following lemma:
\begin{lemma}\label{lmm-finalcongestion}
Whp, the node congestion of the confluent flow $\mathbbm{f}$ is at most $O(\log^4 n)$.
\end{lemma}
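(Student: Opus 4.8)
The plan is to derive this statement as an essentially immediate consequence of Lemma~\ref{lmm-mergecongestion} together with the observation recorded just before it. Recall that $\mathbbm{f}$ is obtained by linking the confluent flows $\{f'_0,f'_1,\ldots,f'_r\}$ across the sub-networks $\{G_0,G_1,\ldots,G_r\}$, and that $\mathbbm{f}_i$ denotes the confluent flow induced by the partial linking $\{f'_0,\ldots,f'_i\}$ on $V_0\cup\cdots\cup V_i$. The key structural fact I would use is that, for a node $v\in V_i$, all of the flow entering $v$ has already been accumulated by the time $\mathbbm{f}_i$ is formed, so that the congestion of $v$ in $\mathbbm{f}_i$ coincides with its congestion in the full flow $\mathbbm{f}$.

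First I would observe that the node sets $V_0,V_1,\ldots,V_r$ cover every vertex of $G$: by construction each node is rounded to some capacity $\log^{4i}n$ and therefore lies in the corresponding $V_i$ (and, being of capacity $\log^{4((i-1)+1)}n$, possibly reappears as a sink in $V_{i-1}$). Hence it suffices to bound, for every $i$ and every $v\in V_i$, the congestion of $v$ in $\mathbbm{f}$. By the observation above this equals the congestion of $v$ in $\mathbbm{f}_i$, and Lemma~\ref{lmm-mergecongestion} states that for each fixed $i$ this congestion is at most $O(\log^4 n)$ whp. Combining these steps gives the desired pointwise $O(\log^4 n)$ bound on $NC(\mathbbm{f})$.

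The one point requiring care---and the main, if modest, obstacle---is that Lemma~\ref{lmm-mergecongestion} is a whp statement for a single index $i$, whereas we need the bound to hold simultaneously across all sub-networks (equivalently, at all vertices). I would close this gap with a union bound: there are at most $r\le n$ sub-networks, and the rounding in each $G_i$ fails to meet its congestion bound with probability at most $n^{-c}$. Summing these failure probabilities over the at most $n$ indices still leaves a total failure probability of at most $n^{-(c-1)}$, which is again of the form $n^{-c'}$. Thus the bound $NC(\mathbbm{f})=O(\log^4 n)$ holds whp, as claimed.
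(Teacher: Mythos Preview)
Your proposal is correct and follows the paper's approach exactly: the paper does not give a separate proof but simply writes ``This immediately yields the following lemma,'' treating Lemma~\ref{lmm-finalcongestion} as a direct consequence of Lemma~\ref{lmm-mergecongestion} together with the preceding observation that the congestion of $v\in V_i$ in $\mathbbm{f}_i$ equals its congestion in $\mathbbm{f}$. Your added union-bound step over the at most $n$ sub-networks is the only extra detail, and it is exactly what the paper leaves implicit.
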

Up to now, we have guaranteed the computed confluent flow $\mathbbm{f}$ has small node congestion. More importantly, our method has one more advantage: The length of any path in the resulting confluent flow $\mathbbm{f}$ is bounded.
\begin{lemma}\label{lmm-flowlength}
For any path in the confluent flow $\mathbbm{f}$, the length is at most $O\left( \frac{L(f)\log n \log c_{\max}}{\log\log n} \right)$ whp.
\end{lemma}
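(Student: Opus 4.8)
The plan is to bound the length of a leaf-to-root path of $\mathbbm{f}$ by splitting it across the $r+1$ sub-networks $G_0,\dots,G_r$ that it traverses, where $r=\lceil \log c_{\max}/(4\log\log n)\rceil$, and bounding its length inside each sub-network separately. The two ingredients I would combine are (i) a crude per-arc length bound and (ii) a per-sub-network hop bound. For (i), note that every arc $e$ carrying flow lies on some path $P_j$ of the given flow $f$, so $\ell(e)\le L(P_j)\le L(f)$; hence every arc that appears in $\mathbbm{f}$ has length at most $L(f)$ (the dummy arcs introduced when an original arc is split into a path have lengths summing to the original length, so each is also at most $L(f)$). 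For (ii), I would invoke Lemma~\ref{lmm-mergetree}: inside $G_i$ the induced flow $f_i$ has node congestion at most $1$, so the confluent tree produced by the randomized rounding of Theorem~\ref{thm-mergetree} has height, measured in \emph{number of hops}, at most $O(NC(f_i)\log n)=O(\log n)$ whp.

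First I would argue that, because the network is monotonic, a path of $\mathbbm{f}$ visits the capacity groups in strictly increasing order of capacity and therefore enters each sub-network $G_i$ at most once, exiting it at one of its sinks in $S_i$ (which is a non-sink of $G_{i+1}$). Thus the path decomposes into at most $r+1$ consecutive segments, one per sub-network, and its length is the sum of the lengths of these segments. Each segment is a portion of a leaf-to-root path of the confluent tree $f'_i$, so by ingredient (ii) it uses at most $O(\log n)$ arcs of $G_i$, and by ingredient (i) each such arc has length at most $L(f)$; hence each segment has length $O(L(f)\log n)$. Summing over the at most $r+1=O(\log c_{\max}/\log\log n)$ segments yields
\[
O\!\left(\frac{\log c_{\max}}{\log\log n}\right)\cdot O(L(f)\log n)=O\!\left(\frac{L(f)\,\log n\,\log c_{\max}}{\log\log n}\right),
\]
which is exactly the claimed bound. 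The \emph{whp} qualifier is handled by a union bound: the height bound of Lemma~\ref{lmm-mergetree} fails in any fixed $G_i$ with probability $n^{-\Omega(1)}$, and there are at most $n$ sub-networks, so all the per-sub-network hop bounds hold simultaneously whp.

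The step I expect to be the main obstacle is ingredient (ii): one must be careful that Lemma~\ref{lmm-mergetree} bounds the \emph{true} number of hops of a tree path, and not merely its effective height (the number of \emph{source} nodes on it), since it is the raw hop count that gets multiplied by the weak per-arc length bound $L(f)$. This is precisely why that lemma is obtained by setting $\kappa=n$ and treating every node as a source, forcing effective height to coincide with true height; applying it with $NC(f_i)\le 1$ then delivers the $O(\log n)$ hop bound needed here. The remaining work is routine bookkeeping: confirming that the number of sub-networks is $r+1=O(\log c_{\max}/\log\log n)$ and that the dummy-node splitting in the construction never inflates a single arc's length beyond $L(f)$.
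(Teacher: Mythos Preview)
Your proposal is correct and follows essentially the same approach as the paper's proof: bound each arc's length by $L(f)$ because it lies in the support of $f$, invoke Lemma~\ref{lmm-mergetree} (with $NC(f_i)\le 1$) to get $O(\log n)$ hops per sub-network, and multiply by the $r+1=O(\log c_{\max}/\log\log n)$ sub-networks. Your write-up is in fact more careful than the paper's in making explicit the union bound over sub-networks and the distinction between true height and effective height, but the underlying argument is the same.
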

\begin{proof}
First, note that we utilize Theorem~\ref{thm-mergetree} to round $f_i$ to the confluent flow $f'_i$, whose support is a tree. As $f_i$ is induced by $f$ in $G_i$, it is clear that the flow length of $f_i$ is at most that of  $f$, i.e., $L(f_i)\leq L(f)$. Also,
noting that the randomized rounding process in Theorem~\ref{thm-mergetree} is based on the support of $f_i$, we know the length of any edge in the resulting tree is not larger than $L(f_i)$. Furthermore, by Lemma~\ref{lmm-mergetree}, the height of the tree support of $f'_i$ is at most $O(\log n)$ whp. It immediately implies that any path length in $f'_i$ is at most $O(L(f)\log n)$ whp.

On the other hand, to construct the final confluent flow $\mathbbm{f}$, we link the support of those confluent flows $f'_0,...,f'_r$, where $r=\lceil \log c_{\max} /(4\log\log n)\rceil$. As there exists at most $r$ layers, it means any path length in $\mathbbm{f}$ would be bounded by $O(rL(f)\log n)$ as desired.
\end{proof}

The remaining thing needed to be dealt with is those introduced dummy nodes in (iii). When the maximum capacity  $c_{\max}$ is large, there would be $O(\log c_{\max})$ dummy nodes. However, one can just introduce a single dummy node $u'$ of capacity $\log^{4(i+1)} n$, and, after obtaining $\mathbbm{f}_i$, contract $u'$ into $v$ as the other dummy nodes are superfluous. This renders the sub-networks of polynomial size as desired. Thus, combining Lemma~\ref{lmm-finalcongestion} and~\ref{lmm-flowlength}, and noting that the rounding up of each node capacity in (i) also induces a $O(\log^4 n)$ factor in congestion,  we conclude as stated in Theorem~\ref{thm-monotonicflow}.

Now we show how to restrict our technique to the confluent routing where the length-bounded constraint is removed, hence proving Theorem~\ref{thm-monotonicflow-relaxed}.

\begin{proof}[Proof of Theorem~\ref{thm-monotonicflow-relaxed}]
Again, we assume the minimum node capacity is 1.
First, we modify the construction of sub-networks such that all capacities depend on $\kappa$:
(i) Round up each node capacity to the nearest power of $\log^4 \kappa$; (ii) Partition the nodes into groups according to their capacities, such that the nodes in the same group have  the same capacity; (iii) For any edge $(u,v)$ with $c(u)=\log^{4i} \kappa$ and $c(v)=\log^{4(i+r)} \kappa$ $(r>1)$, add dummy nodes $u^{(j)}$ of $u$ with capacity of $\log^{4(i+j)} \kappa$ where $1\leq j<r$, and replace $(u,v)$ with a path connecting $u,u^{(1)},u^{(2)},...,u^{(r-1)},v$ sequentially (all dummy nodes have zero supplies).

Then, we would like to find an unsplittable flow in $G$.
We will utilize the following theorem:
\begin{theorem}[\cite{dinitz1999single}]\label{thm-unsplitflow}
Let $G=(V,E)$ be an edge-capacitated directed graph with a single sink $t$ and $k$ sources $s_i$ with supplies $d_i\   ( i\in [k])$. If there is a feasible flow for routing all supplies to $t$, and $G$ satisfies $\max_{i\in[k]} d_i\leq \min_{e\in E} c(e)$, then there exists a polynomial-time algorithm for computing an unsplittable flow satisfying all supplies with edge congestion at most 2.
\end{theorem}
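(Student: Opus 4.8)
The plan is to reconstruct the single-sink unsplittable-flow rounding argument of Dinitz, Garg and Goemans underlying this cited result. The input is the feasible splittable flow $f$ routing every demand $d_i$ into the single sink $t$, and the target is to reroute each demand onto a single $s_i$--$t$ path so that the resulting unsplittable flow $g$ obeys $g(e)\le f(e)+d_{\max}$ on every edge $e$, where $d_{\max}=\max_{i\in[k]}d_i$. Granting this per-edge bound, the theorem is immediate: feasibility of $f$ gives $f(e)\le c(e)$, and the hypothesis $\max_{i}d_i\le \min_{e}c(e)$ gives $d_{\max}\le c(e)$, so $g(e)\le 2c(e)$, i.e.\ edge congestion at most $2$. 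Thus the whole task reduces to establishing the additive $+d_{\max}$ rounding bound in polynomial time.

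First I would preprocess $f$ so that its support is acyclic. Any directed cycle in the support can be cancelled by pushing flow around it until one edge empties; this never increases $f(e)$ on any edge and strictly shrinks the support, so after polynomially many cancellations the flow lives on a DAG $D$ whose edges all carry positive flow toward $t$. Acyclicity is what makes the subsequent consolidation terminate and lets me reason along the topological order of $D$.

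The heart of the argument is an iterative rounding maintaining a single invariant: at every stage I hold a valid flow $\bar f$ routing all demands, in which a growing subset of the sources is already committed to single paths, and $\bar f(e)\le f(e)+d_{\max}$ on every edge. Initially $\bar f=f$ with no source committed, so the invariant holds. One step commits one more source: choosing a still-split source $s_i$, I locate within the current support an alternating path/cycle through the edges carrying $s_i$'s flow and shift the entire demand $d_i$ onto a single outgoing branch, rebalancing along the rest of the structure to restore conservation at internal nodes and preserve the demand value at $s_i$. A potential function (the number of uncommitted sources, or the number of flow-carrying edges) strictly decreases at each step, so there are only polynomially many steps, each a polynomial-time flow update on $D$.

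The main obstacle, and the crux of the Dinitz--Garg--Goemans analysis, is choosing the alternating structure so that the invariant $\bar f(e)\le f(e)+d_{\max}$ survives each consolidation, i.e.\ so that committing $s_i$ never drives any edge above its fractional value plus the maximum demand. The difficulty is that a demand of size up to $d_{\max}$ gets pushed onto edges, and a naive scheme would let these overshoots compound across the many demands crossing a fixed edge $e$. The resolution exploits acyclicity together with a disciplined commitment order (processing demands so that the branch receiving $d_i$ is one from which an at-least-compensating amount of flow is simultaneously removed): the per-edge increments are charged against flow vacated from the same edge and telescope, leaving a single surviving additive term bounded by $d_{\max}$. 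Verifying that this bookkeeping is consistent on the DAG is the delicate part of the proof. Once the invariant is carried to termination, where every source is committed and $\bar f=g$ is unsplittable, we obtain $g(e)\le f(e)+d_{\max}$ edge by edge, and combining this with $f(e)\le c(e)$ and $d_{\max}\le\min_e c(e)\le c(e)$ yields the claimed congestion bound of $2$.
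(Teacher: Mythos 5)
You should first note that the paper contains no proof of this statement at all: it is imported verbatim from Dinitz, Garg and Goemans \cite{dinitz1999single} and used as a black box inside the proof of Theorem~\ref{thm-monotonicflow-relaxed}, so the only benchmark for your attempt is the DGG argument itself. Your outer shell is correct and complete: reversing arcs converts the single-sink statement to DGG's single-source one; cancelling cycles in the support never increases $f(e)$ and yields a DAG in polynomial time; and the additive bound $g(e)\le f(e)+d_{\max}$, combined with $f(e)\le c(e)$ and the no-bottleneck hypothesis $d_{\max}\le\min_{e}c(e)$, immediately gives congestion at most $2$. So you have correctly reduced the theorem to the additive rounding bound.

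The genuine gap is that everything after that reduction is asserted rather than proved, and the mechanism you sketch is not the one that works. You propose finalizing one source at a time, shifting its entire demand $d_i$ onto a single branch while maintaining $\bar f(e)\le f(e)+d_{\max}$ as a step-by-step invariant, and you then concede that ``verifying that this bookkeeping is consistent on the DAG is the delicate part.'' That delicate part is the entire content of the theorem: you give no rule for choosing the branch or the ``disciplined commitment order,'' and without such a rule the overshoots really can compound --- a fixed edge may receive the rerouted demand of many sources, and nothing in the sketch prevents its load from drifting past $f(e)+d_{\max}$. The actual DGG algorithm avoids this with a structurally different scheme: it does not commit one commodity at a time, but advances \emph{all} terminals gradually through a sequence of flow augmentations along alternating cycles (whose forward portions run through arcs that are the unique flow-carrying arcs out of their tails), where each augmentation either empties an arc, shrinking the support, or advances a terminal across an arc whose \emph{current flow exactly matches that terminal's demand}. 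This matching condition is precisely what makes the charging argument close and yields $g(e)< f(e)+d_{\max}$, and the monotone forward progress of terminals in the DAG --- not your proposed potential of ``uncommitted sources or flow-carrying edges,'' which a single augmentation need not decrease --- is what bounds the number of iterations polynomially. As written, your proposal is an accurate summary of what must be shown, together with a correct derivation of the theorem from it, but the core combinatorial lemma is named rather than proved.
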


In fact, Theorem~\ref{thm-unsplitflow} only works for edge-capacitated networks. However, we can use it to find an unsplittable flow in the   node-capacitated network. Indeed, we can induce an edge-capacitated network $G'$ from the original node-capacitated network $G$ in the following: (a) Divide each node $v$ into $v_{in}$ and $v_{out}$ with capacity $c(v)$; (b) Connect $v_{in}$ with all arcs going into $v$, and $v_{out}$ with all arcs going out of $v$, and add a new arc $(v_{in},v_{out})$;
(c) Let each arc $e=(u,v)$ have the capacity $c(e)=\min(c(u),c(v))$.

Obviously, if there is a 1-satisfiable  flow with node congestion at most 1 in $G$, there must be a 1-satisfiable flow with edge congestion at most 1 in $G'$. If $G$ satisfies the no-bottleneck assumption $\max_{i\in[k]} d_i\leq \min_{v\in V} c(v)$, $G'$ must satisfy $\max_{i\in[k]} d_i\leq \min_{e\in E} c(e)$. Now, we can apply Theorem~\ref{thm-unsplitflow} to $G'$ for finding an unsplittable flow $\zeta$ with edge congestion at most 2 in $G'$. Consider the arc $e=(v_{in},v_{out})$ for each $v\in V$. This arc has congestion at most 2, which means the node congestion of $v$ is at most 2. Thus, we can use $\zeta$ to induce back an unsplittable flow $f$ with node congestion at most 2 in $G$.

Given a 1-satisfiable {\em unsplittable} flow $f$ with node congestion at most 2, we now construct the $i$-th sub-network $G_i:=(V_i,A_i)$ as follows (similar to before):
\begin{enumerate}

\item [$\bullet$] Set $V_i$ as the set of nodes of capacity $\log^{4i} \kappa$ and  their  incident nodes  of capacity $\log^{4(i+1)} \kappa$;

\item [$\bullet$] Let $S_i$ be the set of nodes of capacity $\log^{4(i+1)} \kappa$, signifying the sink set in $G_i$;

\item [$\bullet$] Let $A_i$ consist of all arcs  induced by $V_i$ in $G$, excluding all arcs between nodes in $S_i$;

\item [$\bullet$]  Set the supply $d_i(v)$ at the node $v\in V_i\setminus S_{i}$ as $d_i(v):=d(v)+\sum_{u:u\in V_{i-1}\setminus S_{i-1}}f(u,v)$.

\end{enumerate}
Note that {\em the number of non-zero supplies is $O(\kappa)$ in each $G_i$}.  Because we utilize an unsplittable flow, and there are at most $\kappa$ flow paths from $V_{i-1}\setminus S_{i-1}$ to $V_{i}\setminus S_{i}$, which means the number of the extra sources induced by flows between sub-networks is at most $\kappa$.

Finally, we apply similar analysis as shown in the proof of Theorem~\ref{thm-monotonicflow}, and bound the node congestion by $O(\log^8 \kappa)$ whp. Thus, we complete the proof.
\end{proof}


\subsection{Static Length-Bounded Confluent Flows in General Networks}\label{Apd:StaticConFlow}

\begin{figure}
\begin{center}
\vspace{-.0cm}\hspace{-0cm}{\includegraphics[scale=0.6]{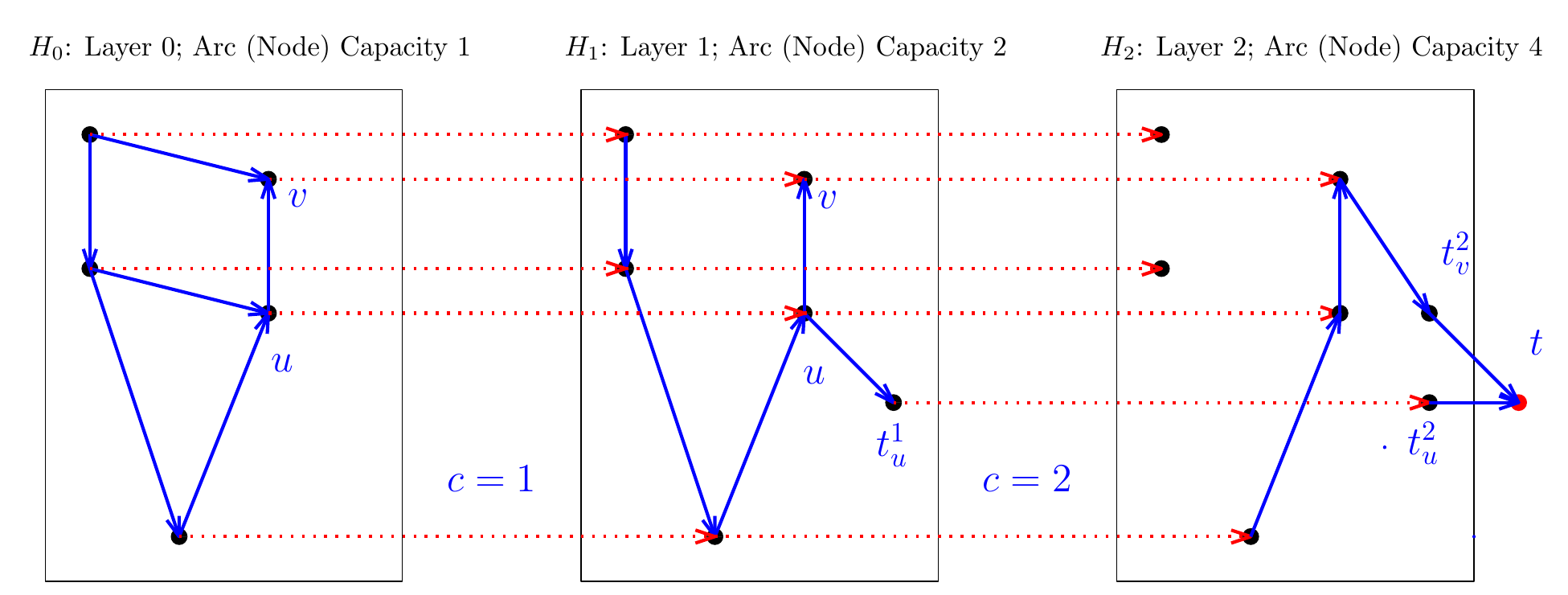}}
\caption{An example of a 3-layered network.}\vspace{-0.8cm}
\label{fig:3layer}
\end{center}
\end{figure}

We give the missing lemmas and proofs in Section~\ref{sec:StaticConFlow}.

Note that our multi-layer network construction differs from~\cite{shepherd2015polylogarithmic} in three main places. (i) Our multi-layer network takes into account bounding the length of the resulting flow.
(ii) Our technique works for the routing of confluent flow in {\em edge-capacitated} network. 
Moreover, because of (ii), our vertical arcs are contained in the $i$-th layers of $H$ when their original edge capacities in $G$ are at least $2^i$, namely depending on their own capacities, rather than depending on the capacities of their incident nodes. 
(iii) Another big difference is that we introduce dummy sinks so as to bound the edge congestion of confluent flow.
These facts allow us to have the following lemmas.

First, we show  there is a polynomial-time scheme for re-routing $h$, which is a confluent flow $h$ in the $k$-layer network $H$, into a confluent flow in the original network $G$.
\begin{lemma}\label{lmm-rerouteflow}
Given a confluent flow $h$ for routing all unit supplies in the $k$-layer network $H$ with node congestion $NC(h)$ and flow length $L(h)$, one can, in polynomial time, re-route $h$ to produce a confluent flow $f$ that routes a subset of supplies with value of at least a $\frac{1}{2k-1}$ fraction  of the total amount in $G$.
Furthermore, the node congestion and flow length of $f$ are bounded by $NC(h)$ and $k\cdot L(h)$, respectively.
\end{lemma}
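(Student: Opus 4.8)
The plan is to read a genuine confluent routing of $G$ directly off the confluent flow $h$ in $H$, by choosing, for every original vertex $v$, one outgoing $G$-edge, and then to show that a $\frac{1}{2k-1}$ fraction of the supply can be routed along these choices without increasing congestion. First I would set up the natural projection $\pi$ that maps each copy $v^i$ back to $v$, sends every vertical arc $(u^i,v^i)$ to the edge $(u,v)$ of $G$, and contracts every horizontal arc $(v^i,v^{i+1})$ (which has length $0$ by construction). Since $h$ is confluent in $H$ (Theorem~\ref{thm-klayerflow}), its support is an in-forest toward the (dummy) sinks, and $\pi$ carries it to a subgraph of $G$. The only obstruction to confluence in $G$ is that the $k$ copies $v^0,\dots,v^{k-1}$ of one vertex may leave along vertical arcs projecting to different $G$-edges. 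The crucial structural fact I would exploit is monotonicity: along any source-to-sink trajectory of $h$ the layer index is non-decreasing and changes at most $k-1$ times, so each projected trajectory is a concatenation of at most $k$ within-layer pieces.

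The reformulation I would use is to choose, for each vertex $v$, a single \emph{designated exit copy} $v^{j(v)}$, and to retain only those supplies whose $h$-trajectory leaves every vertex $v$ it visits through the designated copy $v^{j(v)}$. All retained flow then leaves $v$ along one $G$-edge, so the resulting $f$ is confluent; and since each $G$-arc receives only the flow that $h$ already placed on the corresponding copy, the node congestion of $f$ is bounded by $NC(h)$ with \emph{no} blow-up. A retained trajectory, after projection, is a concatenation of at most $k$ within-layer pieces of the support of $h$ while the horizontal transitions cost nothing, so its length in $G$ is at most $k\,L(h)$. Thus both the congestion bound $NC(h)$ and the length bound $k\,L(h)$ fall out of the projection, independently of how the designated copies are chosen.

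It remains to choose the designated copies and the surviving supplies so that at least a $\frac{1}{2k-1}$ fraction of the total supply is retained. Here I would exhibit a family of $2k-1$ \emph{congestion-safe} designations that together cover every unit of supply, and keep the best by a pigeonhole/averaging argument. The natural index set has size $2k-1$ because a trajectory can be tagged by one of the $k$ layers or one of the $k-1$ horizontal transitions at which it leaves its critical vertex; assigning each supply to the class determined by this tag, one shows that within a single class the induced designated-exit routing is automatically consistent (no two retained trajectories demand conflicting exits at a shared vertex) and charges each arc of the support of $h$ at most once. Consequently each class individually yields a confluent $G$-routing with congestion at most $NC(h)$, and some class carries at least a $\frac{1}{2k-1}$ fraction of the supply.

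The main obstacle is precisely this last combinatorial step: proving that the supplies split into $2k-1$ classes each of which is simultaneously \emph{exhaustive} over the events and \emph{individually feasible} as a confluent routing. The delicate point is that collapsing all flow onto one designated copy per vertex can in principle merge trajectories that want to leave different vertices at incompatible layers; the fact that the exit layers along any one trajectory are non-decreasing and take at most $k$ distinct values is exactly what I would use to rule this out within a class and to guarantee that the flow charged to any copy $v^{j}$ never exceeds what $h$ already routed through $v^{j}$. Checking that the bookkeeping yields the factor $2k-1$ rather than something larger, and that the classes are genuinely both disjoint-covering and feasible, is where the real work lies; once this is in hand, the confluence, the congestion bound $NC(h)$, and the length bound $k\,L(h)$ follow immediately from the projection $\pi$ set up at the start.
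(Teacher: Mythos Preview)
Your overall architecture matches the paper's: invoke the \emph{conflict-free routing} technique of \cite{shepherd2015polylogarithmic} to extract a confluent $G$-flow from the confluent $H$-flow $h$ with a $\tfrac{1}{2k-1}$ loss and no congestion blow-up, and then argue the length bound separately using the layer structure. The paper does not reprove the $\tfrac{1}{2k-1}$ combinatorics; it simply quotes the statement as a black-box lemma and then observes that since every $f$-path restricted to a single layer is a sub-path of the support of $h$ (hence of length at most $L(h)$), the total length over the $k$ layers is at most $k\,L(h)$.

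Where your write-up diverges is that you try to sketch the conflict-free step itself, and that sketch has real gaps. First, your framing is pure \emph{selection}: pick a designated exit copy $v^{j(v)}$ per vertex and keep only those supplies whose $h$-path already exits through those copies. But the paper explicitly says the conflict-free method \emph{re-routes} $h$; within each layer the $f$-path is only guaranteed to be some sub-path of the $h$-support, not the projection of one fixed source-sink trajectory of $h$. If selection alone sufficed, the length bound would in fact be $L(h)$, not $k\,L(h)$, so the looser bound in the lemma is a signal that something beyond selection is happening. Second, your proposed partition into $2k-1$ classes (``tag a trajectory by one of the $k$ layers or one of the $k-1$ horizontal transitions at which it leaves its critical vertex'') is not a definition: you never say what the critical vertex is, nor do you verify the two properties you need---that each class is individually confluent in $G$ without exceeding the congestion $NC(h)$, and that the classes cover all supply. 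You yourself flag this as ``where the real work lies''; as written it is an intention, not an argument.

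The cleanest fix is to do what the paper does: cite the conflict-free routing lemma from \cite{shepherd2015polylogarithmic} for the $\tfrac{1}{2k-1}$ fraction and the congestion bound, note (as the paper does) that the construction differences in $H$ do not matter because the re-routing only uses the support of $h$, and keep your length argument essentially as is---but phrase it for the re-routed flow, i.e., each layer contributes a sub-path of the $h$-support of length at most $L(h)$, giving at most $k\,L(h)$ in total.
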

\begin{proof}
We will use the following standard {\em conflict-free routing} technique to produce a confluent flow in $G$.

\begin{lemma}[\cite{shepherd2015polylogarithmic}]
Suppose $H$ is one $k$-layer node-capacitated network induced by $G$. Given a confluent flow $h$ in $H$ that satisfies all unit supplies $\{d_i\}_{i\in [\lambda]}$ and has node congestion $NC(h)$,  one can, in polynomial time, find a confluent flow $f$  in $G$ that routes a subset of supplies with value of at least a $\frac{1}{2k-1}$ fraction  of the total amount and has node congestion at most $NC(h)$.
\end{lemma}


Note that our multi-layer network differs from that in~\cite{shepherd2015polylogarithmic}. However,
the differences would not influence the re-routing of $h$. Indeed, when re-routing $h$, we can ignore the new introduced parameters in each arc, i.e., arc capacity and arc length. Meanwhile, the bi-directional vertical arcs (for undirected graphs) in each layer of $H$, as well as the fact that the appearance of vertical arcs depends on edge capacity rather than node capacity, would not bring in any difference. Because the re-routing of $h$ is based on the {\em support of $h$} rather than the constructed network $H$. Finally, we view those dummy sinks as the normal nodes in $H$.

By the above analysis, we can find a confluent flow $f$ in $G$ from the given confluent flow  $h$. Also, $f$ has node congestion at most $NC(h)$, and it can route a subset of supplies with value of at least a $\frac{1}{2k-1}$ fraction  of the total amount.


The only remaining thing is to bound the length of the resulting flow. Recall that the re-routing of $h$ is carried out on the support of $h$ (a directed rooted tree) plus all horizonal arcs. Without those horizonal arcs, the remaining support of $h$ are a collection of (directed) sub-trees in the $k$ corresponding layers. Also, because the flow length of $h$ is bounded by $L(h)$, any path in those sub-trees has its length bounded by $L(h)$ as well.
Note that the  conflict-free method finds the confluent flow $f$ by re-routing $h$, and the support of any path in $f$ in each layer is certain sub-path of $h$ in that  layer. Thus, any path in $f$ has length at most  $k\cdot L(h)$, since there are at most $k$ layers.
\end{proof}

On the other hand, note that Lemma~\ref{lmm-rerouteflow} can only bound the node congestion.
With the help of monotonic structure and dummy sinks, we bound the edge congestion as follow.
\begin{lemma}\label{lmm-edgecongestion}
The confluent flow $f$ in $G$ found  in Lemma~\ref{lmm-rerouteflow} has edge congestion at most $NC(h)$.
\end{lemma}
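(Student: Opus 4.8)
The plan is to bound, for every edge $e=(u,v)\in E$ that carries flow in $f$, the ratio $f(e)/c(e)$ by tracing the flow on $e$ back to a single arc of the confluent flow $h$ in the layered network $H$. Since $f$ is confluent in $G$, all flow leaving $u$ uses the single edge $e$, so $f(e)=f^{out}(u)$; it therefore suffices to understand where this flow originates in $H$. First I would recall how $f$ is produced in Lemma~\ref{lmm-rerouteflow}: the conflict-free re-routing retains, for each node $u$, only the flow that exits $u$ at one layer, discarding the flow that exits at the other layers (this is what costs the $\frac{1}{2k-1}$ fraction of the total value). Writing $j^\ast=j^\ast(u)$ for that exit layer, the flow on $e$ in $G$ is exactly the flow that $h$ pushes across the single vertical arc $(u^{j^\ast},v^{j^\ast})$ (or, when $v=t$, across $(u^{j^\ast},t_u^{j^\ast})$). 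Consequently $f(e)\le h(u^{j^\ast},v^{j^\ast})\le h^{out}(u^{j^\ast})$.

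Next I would combine the node-congestion bound on $h$ with the monotonic construction of $H$. The copy $u^{j^\ast}$ has capacity $2^{j^\ast}$, and since it is a non-sink node, $h^{out}(u^{j^\ast})\le NC(h)\cdot 2^{j^\ast}$. The crucial structural fact is that the arc $(u^{j^\ast},v^{j^\ast})$ can exist in $H$ only at a layer $j^\ast$ with $2^{j^\ast}\le c(e)$: for an ordinary edge this is precisely the construction rule that creates a vertical arc in layer $j^\ast$ iff the capacity of $(u,v)$ in $G$ is at least $2^{j^\ast}$; for an edge into the sink, the dummy-sink tower $t_u^{i},t_u^{i+1},\dots,t_u^{k-1}$ forces the crossing to occur at the single layer $i=\log_2 c(u,t)$, so again $2^{j^\ast}\le c(e)$. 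Combining the two inequalities gives $f(e)\le NC(h)\cdot 2^{j^\ast}\le NC(h)\cdot c(e)$, that is $EC(e)\le NC(h)$, and taking the maximum over all $e$ yields $EC(f)\le NC(h)$.

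The step I expect to be the main obstacle is the first one: arguing that $f(e)$ is charged to a \emph{single} arc of $h$ rather than to the sum of the crossings $(u^{j},v^{j})$ over all layers $j\le\log_2 c(e)$. A naive summation only gives $\sum_{j\le\log_2 c(e)}2^j<2\,c(e)$, hence the weaker bound $EC(e)\le 2\,NC(h)$. The point to pin down is that the monotonic structure (flow may climb via horizontal arcs but never descend) makes each node's exit layer well defined, and that the conflict-free re-routing of Lemma~\ref{lmm-rerouteflow} deliberately keeps flow from only one exit layer per node, with the dummy sinks playing the analogous single-layer role for edges into $t$. It is precisely this single-layer property, and not edge-capacity accounting alone, that sharpens the bound from $2\,NC(h)$ to $NC(h)$.
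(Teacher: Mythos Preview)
Your direct approach differs from the paper's: you try to charge $f(e)$ to a \emph{single} copy $u^{j^\ast}$ in $H$, whereas the paper lifts $f$ back to an auxiliary confluent flow $h'$ in $H$ (one layer per edge, chosen by the rule ``use the copy whose capacity is the smallest power of two at least $f(e)$'') and then applies the layer-wise observation that, in an uncapacitated layer, node congestion dominates edge congestion (their Observation~\ref{obs-1}), together with the re-routing guarantee that $NC(h')$ is controlled by $NC(h)$.

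The step you yourself flag as the obstacle is a genuine gap, and your proposed resolution of it is not correct. Conflict-free re-routing discards a supply only when its projected $G$-path \emph{diverges} from another retained path at some vertex. If the tree of $h$ happens to contain two vertical arcs $(u^{j_1},v^{j_1})$ and $(u^{j_2},v^{j_2})$ with $j_1\neq j_2$ but the \emph{same} target $v$, both project to the single edge $(u,v)$ in $G$; there is no conflict, and both can survive the re-routing. In that situation $f(u,v)=h|_S(u^{j_1},v^{j_1})+h|_S(u^{j_2},v^{j_2})$, and there is no single $j^\ast$ with $f(e)\le h^{out}(u^{j^\ast})$. (Concretely: one source enters $u^0$ and exits via $(u^0,v^0)$, another climbs to $u^2$ and exits via $(u^2,v^2)$; both are kept.) The monotonicity of $H$ only guarantees that layer indices are non-decreasing along each individual $H$-path; it does not force all retained paths through $u$ to exit at the same layer. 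The dummy-sink tower does force a single exit layer, but only at the very last edge $(z,t)$, not at a generic edge $(u,v)$.

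Your fallback summation $\sum_{j:\,2^j\le c(e)}NC(h)\cdot 2^j<2\,NC(h)\,c(e)$ is rigorous and yields $EC(f)\le 2\,NC(h)$, which is entirely adequate for the paper's downstream $O(\log^{8} n)$ bounds. If you want the constant as stated, the cleanest fix is to adopt the paper's device: rather than arguing about where the flow sat in $h$, relift $f$ to a \emph{new} confluent flow $h'$ in $H$ that by construction uses exactly one layer per edge, and compare congestions there.
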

\begin{proof}
In Lemma~\ref{lmm-rerouteflow}, we bound the node congestion of the resulting confluent flow $f$. We would like to bound the edge congestion of $f$ by utilizing its node congestion.

Indeed, for a confluent flow, we have the following observations.
\begin{observation}\label{obs-1}
For a confluent flow in the uncapacitated network, the node congestion of any node $v$ is no less than the maximum edge congestion over all edges incident to $v$.
\end{observation}

In fact,  the confluent flow $f$ can induce a confluent flow $h'$ in $H$ by mapping a flow on one edge $e$ into its edge copy in $H$ with capacity that is larger than and nearest to the value $f(e)$. Note that the induced flow $h'$ does not use the identical vertical arc-copy in different layers. Clearly,
the node and edge congestion of $f$ are identical to the node and edge congestion of $h'$.
Thus, it suffices to consider the node and edge congestion of $h'$ in $H$.


Now let us focus on only a fixed layer of $H$, and consider the nodes and vertical arcs that the flow $h'$ passes through inside that layer. Due to the construction of the multi-layer graph, all vertical arcs and nodes in the same layer possess the same capacity, which we can view as uncapacitated network. Thus, by  Observation~\ref{obs-1}, we bound the edge congestion in each layer by the node congestion of $h'$. Regarding those horizontal arcs, since they connect different copies of nodes, it would not induce any edge in the original network $G$ and we don't need to consider its edge congestion.

Finally, we need to consider the congestion of the edge incident to the sink $t$ in $f$, say $(u,t)$. Suppose $(u,t)$ has capacity of $2^i$. Then, there are a copy $t_u^{j}$ of $t$ in the $j$-th layer, for each $j=i,i+1,...,k-1$, and a directed path $u^i,t_u^{i},t_u^{i+1},...,t_u^{k-1},t$, which is mapped to one arc $(u,t)$ in $f$. Note that those dummy sinks are not connected with any other node in $H$ except $u^i$ and $t$, and then there is no other in-flow to $t$ along this path.
Thus, this path contains flow of value the same as the arc $(u^i,t_u^{i})$. We  only need to consider the edge congestion of $(u^i,t_u^{i})$, which is the same as the edge congestion of $(u,t)$ in $f$. Clearly, the edge congestion of $(u^i,t_u^{i})$ is bounded by the node congestion of $h'$.

Since the node congestion of $h'$ remains the same as the node congestion of $h$ due to the re-routing scheme. Thus, the edge congestion of $f$ is also bounded by $NC(h)$.
\end{proof}

Putting all together, we can prove Theorem~\ref{thm-generallengthboundflow}.
\begin{proof}[Proof of Theorem~\ref{thm-generallengthboundflow}]
As described in the begin of Section~\ref{sec:StaticConFlow}, we pre-process the network and group together those  supplies of the same size, and then  consider routing each group separately via a confluent flow, outputting the best group as the solution. Note that this process will lose only an $O(\log \kappa)$ factor in the approximation guarantee. Also, note that the rounding up of capacities and supplies to the power of 2 would bring in a 2 factor to the edge congestion, because after rounding, certain supplies would be allowed to go through edges with smaller capacity.

Suppose there exists an  $L$-length-bounded confluent flow $f$ with edge congestion at most 1 for routing all supplies to the single sink in $G$.
Observe that $f$ induces an $L$-length-bounded confluent sub-flow $f'$ for routing all supplies in that group (uniform-supply case).
Then, we construct the multi-layer network $H$ for this supply group.
Also, notice that the such a confluent sub-flow  $f'$  induces a confluent $L$-length-bounded flow with edge congestion at most 1 in $H$.  Due to the construction of $H$, in each layer, edge capacity is the same as  node capacity, which can be viewed as uncapacitated network. Observe that a confluent flow in an uncapacitated network has node congestion the same as the edge congestion. Thus, the induced flow by $f'$ has node congestion at most 1.

Thus, there exists an $L$-length-bounded flow with node congestion at most 1 in $H$.
To route each group separately, we first
utilize  standard techniques\footnote{e.g., Theorems~2.5 and~2.8 in G.~Baier, {\em Flows with path restrictions}, PhD thesis, TU Berlin, 2003.}
to find out a feasible $O(1)L$-length-bounded (splittable) flow $\tilde{f}$ for routing all supplies to the sink.
Then,
applying Theorem~\ref{thm-klayerflow} immediately yields a confluent flow for routing all supplies in this group with node congestion at most $O(\log^8 n)$ and with length bounded by $O\left( L\log^2 n/\log\log n \right)$ whp. Finally, utilizing  Lemmas~\ref{lmm-rerouteflow} and~\ref{lmm-edgecongestion}, we can obtain a confluent flow for routing a subset of supplies, with value of at least a $\dfrac{1}{2k-1}=1/O(\log \kappa)$ faction of the total amount in this group. Furthermore, the found flow has edge congestion at most $O(\log^8 n)$ and length bounded by $O\left( L\log^3 n/\log\log n \right)$ whp.

Note that, together with the lose of approximation factor in grouping and rounding, this indeed gives us a confluent flow $\mathbbm{f}$ for routing a subset of supplies, with value of at least a $1/O(\log^2 \kappa)$ faction of the total amount in $G$, with the same edge congestion and flow length mentioned above. Thus, we complete the proof.
\end{proof}


\subsection{Polylogarithmic Bicriteria Approximation for  Confluent Dynamic Flows}
\label{apd:proof_main_thm}

This section proves the two main theorems for confluent dynamic flows.
First, we give the following lemmas for the transformation between static and dynamic flows.
\begin{lemma}\label{lmm-trans1}
A feasible unsplittable/confluent dynamic flow, which routes the supply $d_i$ from $s_i$ to $t$ ($\forall i\in[k]$) within time horizon $T$ , induces
a feasible unsplittable/confluent  $T$-length-bounded static flow, which routes the supply $d_i/T$ from $s_i$ to $t$ ($\forall i\in[k]$) in the same underlying network.
\end{lemma}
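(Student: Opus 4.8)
The plan is to obtain the static flow by \emph{time-averaging} the given dynamic flow over the horizon $[0,T]$. First I would extract the routing structure of the dynamic flow. In the unsplittable case all of $d_i$ travels from $s_i$ to $t$ along a single path $P_i$; in the confluent case the support is an in-tree rooted at $t$, so each $s_i$ again sends its supply along the unique $s_i$--$t$ path $P_i$ in that tree, and distinct sources may merge but never split. In both cases the collection $\{P_i\}_{i\in[k]}$ carries the flow and inherits the unsplittable/confluent structure. The static flow I would build is: send $f_i:=d_i/T$ units of flow along $P_i$, for every $i\in[k]$. Because it reuses exactly the same paths, it is immediately unsplittable (resp.\ confluent), so only two things remain to check: that it is $T$-length-bounded, and that it is feasible.

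Second, for the length bound, I would observe that any supply routed along $P_i$ must physically traverse every edge of $P_i$, which takes at least $L(P_i)=\sum_{e\in P_i}\ell(e)$ time, and by assumption it reaches $t$ within time $T$. Hence $L(P_i)\le T$ for every $i$, so $L(f)=\max_{i\in[k]} L(P_i)\le T$ and the induced static flow is $T$-length-bounded.

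Third, and this is where the real work lies, I would prove feasibility $f(e)\le c(e)$ for every edge $e$ by a counting argument. Fix $e$ and let $\mathcal{S}(e)=\{i:e\in P_i\}$; then $f(e)=\tfrac1T\sum_{i\in\mathcal{S}(e)}d_i$, so it suffices to show that the total amount of flow crossing $e$ during the dynamic process, namely $\sum_{i\in\mathcal{S}(e)}d_i$, is at most $c(e)\cdot T$. The dynamic capacity constraint allows at most $c(e)$ units of flow to enter $e$ in each time unit, and since every unit of flow must still reach $t$ by time $T$, it must have entered $e$ at one of the time steps $0,1,\dots,T-1$ (entering any later would force its arrival at $t$ to exceed $T$, as edge lengths are positive). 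Summing over these at most $T$ time steps gives $\sum_{i\in\mathcal{S}(e)}d_i\le c(e)\cdot T$, hence $f(e)\le c(e)$, as required.

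The main obstacle is exactly this last counting step: one must argue carefully, using the discrete mechanics described in the preliminaries (groups of size at most $c(e)$ entering one per time unit, with possible waiting/congestion in front of $e$), that congestion cannot let more than $c(e)\cdot T$ units cross $e$ within the horizon, independently of how the dynamic flow schedules entries. Once that bound is secured, the averaging immediately yields feasibility, while the unsplittable/confluent property and the $T$-length bound follow transparently from reusing the same paths $\{P_i\}_{i\in[k]}$.
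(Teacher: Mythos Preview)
Your proposal is correct and follows essentially the same time-averaging approach as the paper: both divide the total flow crossing each edge over the horizon by $T$ and use the per-time-step capacity bound $\sum_i f_e^i(t)\le c(e)$ to deduce feasibility, while the unsplittable/confluent structure and the $T$-length bound are inherited from the support of the dynamic flow. The only cosmetic difference is that the paper writes the averaging directly on edge-flow values $y_e:=\tfrac1T\sum_{t=1}^T\sum_i f_e^i(t)$ rather than first extracting the paths $P_i$, which makes the feasibility step a one-liner without needing your ``entered at some step in $\{0,\dots,T-1\}$'' argument.
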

\begin{proof}
Suppose $f$ is the given dynamic flow for sending the supplies $\{d_i\}_{i\in [k]}$ located at $\{s_i\}_{i\in [k]}$. Let $f_e^i(t)$ be the value of the supply brought from $s_i$ by $f$ on edge $e$ at time $t$. Set $x_e^i:=\frac{1}{T}\sum_{t=1}^T f_e^i(t)$, and let $y_e=\sum_{i\in[k]}x_e^i$. Clearly, all $y_e$'s together induce a static flow, denoted as $f'$, in the same underlying network.

Since $f$ is feasible,
$$
\forall e\in E,\ \ \ \forall t\in\{0,1,...,T\},\ \ \ \sum_{i\in[k]}f_e^i(t)\leq c(e).
$$
This implies
$$\forall e\in E,\ \ \
y_e=\sum_{i\in[k]}\left( \frac{1}{T}\sum_{t=1}^T f_e^i(t) \right)=\frac{1}{T}\sum_{t=1}^T\left( \sum_{i\in[k]} f_e^i(t) \right)\leq c(e),
$$
which mean $f'$ is feasible.

Also, since we take the average of $f$ over time $T$ for each supply, $f'$ can send only the supply  $d_i/T$ from $s_i$ to the sink. Furthermore, when taking average of $f$ to produce $f'$, we still use the support of the dynamic flow $f$ to route $f'$. Hence, if $f$ is an  unsplittable/confluent dynamic flow, $f'$ must be a unsplittable/confluent static flow. Finally, since $f$ has time horizon $T$, any path inside the support of  $f$ must be $T$-length-bounded. Thus, we complete the proof.
\end{proof}

\begin{lemma}\label{lmm-trans2}
A feasible unsplittable/confluent $T$-length-bounded static flow, which routes the supply $d_i$ from $s_i$ to $t$ ($\forall i\in[k]$), induces a feasible unsplittable/confluent  dynamic flow, which routes the supply $z\cdot d_i$ from $s_i$ to $t$ ($\forall i\in[k]$) within time horizon $T+z$  in the same underlying network.
\end{lemma}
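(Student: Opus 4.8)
The plan is to realize the dynamic flow as a \emph{temporally repeated flow} that replays the static flow once per time unit. First I would fix a path decomposition of the given static flow: in the unsplittable case each source $s_i$ routes $d_i$ along a single path $P_i$ to $t$, and in the confluent case the support is an in-tree rooted at $t$, so each $s_i$ routes $d_i$ along its unique tree path $P_i$. In either case feasibility gives, for every edge $e$, that $f(e)=\sum_{i:\,e\in P_i}d_i\le c(e)$, and $T$-length-boundedness gives $L(P_i)\le T$ for all $i$. The dynamic flow, call it $g$, is then defined by injecting from each source $s_i$ exactly $d_i$ units into $P_i$ at each integer time $\theta\in\{0,1,\dots,z-1\}$ and letting this flow travel along $P_i$ under the edge lengths $\ell(e)$. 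Since there are $z$ injections of $d_i$ units each, $g$ sends exactly $z\cdot d_i$ units from $s_i$; and because $g$ uses only the edges of $P_i$ (respectively only the single out-edge at each node of the tree), it is unsplittable (respectively confluent) in the same underlying network.

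The key step is verifying feasibility, i.e.\ that at no time does the flow entering any edge exceed its capacity, so that no congestion or waiting arises. For an edge $e=(u,v)$ and a source $s_i$ with $e\in P_i$, the flow injected at time $\theta$ reaches the tail $u$ at time $\theta+D_i(u)$, where $D_i(u)$ is the sum of lengths from $s_i$ to $u$ along $P_i$; hence $s_i$ contributes $d_i$ units entering $e$ at each of the $z$ consecutive times $D_i(u),\dots,D_i(u)+z-1$. Therefore the total amount entering $e$ at any fixed time $s$ equals $\sum_{i:\,e\in P_i,\ D_i(u)\le s\le D_i(u)+z-1}d_i$, a sub-sum of $\{d_i:e\in P_i\}$, and is thus bounded by $\sum_{i:\,e\in P_i}d_i=f(e)\le c(e)$. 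Consequently every group entering $e$ has size at most $c(e)$, and by induction on time no items are ever left waiting at $u$ (everything arriving in one time unit departs in that same unit), so the flow propagates exactly as described and $g$ is feasible.

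Finally I would bound the time horizon: the last injection from $s_i$ occurs at time $z-1$ and reaches $t$ at time $(z-1)+D_i(t)=(z-1)+L(P_i)\le (z-1)+T$, so all of $g$ arrives by time $T+z-1\le T+z$, as required.

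The main obstacle to watch is precisely the feasibility argument in the confluent case: flows from distinct sources merge at internal nodes and reach a shared edge through windows shifted by the differing distances $D_i(u)$, so one must rule out that these shifted streams pile up beyond $c(e)$ or create queues that would delay arrivals past the horizon. The observation that the instantaneous rate on any edge is always a sub-sum of the static edge flow $f(e)\le c(e)$ resolves this and is the crux of the proof; the remaining details (integer edge lengths, the one-group-per-unit bookkeeping, and the fact that no intermediate storage is used) are routine.
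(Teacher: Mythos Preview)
Your proof is correct and follows essentially the same temporally-repeated-flow approach as the paper: both decompose the static flow into paths, replay each path once per time unit, and conclude that arrival occurs by time $T+z$. Your write-up is actually more careful than the paper's, which merely asserts ``Clearly, $f'$ is feasible'' without the explicit sub-sum argument bounding the instantaneous rate on each edge by $f(e)\le c(e)$.
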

\begin{proof}
Suppose $f$ is  the given dynamic flow for sending the supplies $\{d_i\}_{i\in [k]}$ located at $\{s_i\}_{i\in [k]}$. Also, suppose $f$ can be  specified by a collection of source-sink paths $\mathcal{P}=(P_1,...,P_K)$ and corresponding flow values $f_1,...,f_K$.

Then, we let $f'$ be the dynamic flow that sends $f_j$ units of supplies along the path $P_j$ per unit time, for each $j\in[K]$. Clearly, $f'$ is feasible. Then, since each path $P_j$ is  $T$-length-bounded, the supply we send along it will reach the sink after $T$ units of time. Within time horizon $T+z$, $f'$ can totally send  at least $z f_j$ along $P_j$.  Recall that $f$ routes the supply $d_i$ from $s_i$ to $t$ ($\forall i\in[k]$). It means that $f'$ can route the supply $z\cdot d_i$ from $s_i$ to $t$ ($\forall i\in[k]$) within time horizon $T+z$.

On the other hand, note that we use the same support of $f$ to route $f'$. Thus, if  $f$ is an unsplittable/confluent static flow, $f'$ must be an unsplittable/confluent  dynamic flow. Thus, we complete the proof.
\end{proof}

Now, we start to prove Theorem~\ref{thm-PolylogConQuickFlow}.
\begin{proof}[Proof of Theorem~\ref{thm-PolylogConQuickFlow}]
Suppose $G$ is a directed/undirected network with a sink $t$ and a collection of supplies $\{d_i\}_{i=1}^\kappa$ located at sources $\{s_i\}_{i=1}^\kappa$.
Since $OPT$\footnote{This paper assumes that the optimal solution $OPT$ is bounded by $2^{O(n)}$.} is the optimal time, there exists a feasible confluent dynamic flow for routing all supplies $\{d_i\}_{i=1}^\kappa$ to the single sink with time horizon $OPT$. By Lemma~\ref{lmm-trans1}, we know it induces a feasible confluent $OPT$-length-bounded static flow for routing supplies $\{d_i/OPT\}_{i=1}^\kappa$.

Then, applying Theorem~\ref{thm-generallengthboundflow}, we can compute, in polynomial time, a confluent static flow for routing {\em a subset of supplies} with value
at least $\sum_{i\in[\kappa]}d_i/O(\log^{2} \kappa)$,
with edge congestion $O(\log^{8} n)$ and  flow length $O(\log^3 n/\log\log n)\cdot OPT$, whp. We scale the flow down by an $O(\log^{8} n)$ factor, which ensures the resulting flow is feasible
and allows it to send to the sink $t$ a subset of supplies among $\{d_i/(O(\log^{8} n)\cdot OPT)\}_{i=1}^\kappa$, with value at least a $1/O(\log^{2} \kappa)$ fraction of the total amount.
%
%
%
Now, by Lemma~\ref{lmm-trans2}, we can change the found flow into a feasible confluent dynamic flow that routes {\em a subset of supplies} with value
at least $\sum_{i\in[\kappa]}d_i/O(\log^{2} \kappa)$ to $t$,
$$
O(\log^3 n/\log\log n)\cdot OPT+O(\log^{8} n)\cdot OPT=O(\log^{8} n)\cdot OPT.
$$

However, $OPT$ is not explicitly known. To deal with this issue, we use a binary search to determine $OPT$. Initially, we set $T_1=0$ and $T_2\geq OPT$. Start from  $T=(T_1+T_2)/2$ and find a confluent dynamic flow by the above process. Clearly, if $T\geq OPT$, there is a feasible confluent dynamic flow to route all supplies with time $T$, and we can find out a bicriteria approximation of such a flow; if $T< OPT$, we fail and return no flow. When we succeed in constructing an approximation, we set $T_2=T$ and continue the binary search; otherwise, we set $T_1=T$ and repeat. Thus, at most $O(\log (OPT))$ times of computation suffice to determine $OPT$. This gives the polynomial-time algorithm for finding the desired bicriteria approximation of the single-sink Confluent Quickest Flow problem whp.
\end{proof}

Similarly, we can prove Theorem~\ref{thm-PolylogConFlowOverTime}.
\begin{proof}[Proof of Theorem~\ref{thm-PolylogConFlowOverTime}]
Suppose $G$ is a directed/undirected network with a sink $t$ and a collection of sources $\{s_i\}_{i=1}^\kappa$.
Since $OPT$ is the optimal amount of supplies that can be confluently sent to $t$ within the time horizon $T$, there exists a feasible confluent dynamic flow for routing $OPT$ units of supplies to the sink within the time horizon $T$.

By Lemma~\ref{lmm-trans1}, we know it induces a feasible confluent  $T$-length-bounded static flow for routing totally $OPT/T$ units of supplies to $t$. By  standard techniques
\footnote{e.g., Theorem~2.8 in G.~Baier, {\em Flows with path restrictions}, PhD thesis, TU Berlin, 2003.}
one can find out a splittable $O(1)T$-length-bounded flow $f$ with the total flow amount at least as large as the maximum $T$-length-bounded  flow. Note that the total value of $f$ is at least $OPT/T$. Suppose $f$ sends $d_i$ units of supplies from $s_i$ to $t$, for each $i\in[\kappa]$, with $\sum_{i}d_i\geq OPT/T$.
Hence, there exists a feasible splittable  $O(1)T$-length-bounded static flow for routing all supplies $\{d_i\}_{i=1}^\kappa$. Applying Theorem~\ref{thm-generallengthboundflow}, we can compute, in polynomial time, a confluent static flow for routing at least a $1/O(\log^{2} \kappa)$ fraction of supplies $\{d_i\}_{i=1}^\kappa$, with edge congestion bounded by $O(\log^{8} n)$ and flow length  bounded by $O(\log^3 n/\log\log n)\cdot T$, whp. Scaling the flow down by a $O(\log^{8} n)$ factor yields a feasible confluent flow. Now, we have a feasible $O(\log^3 n/\log\log n)\cdot T$-length-bounded confluent static flow  that routes at least a $1/O(\log^{2} \kappa)$ fraction of supplies $\{d_i/O(\log^{8} n)\}_{i=1}^\kappa$ whp.

Thus, by Lemma~\ref{lmm-trans2}, we can change the found flow into a feasible confluent dynamic flow $f'$ that routes at least a $1/O(\log^{2} \kappa)$ fraction of supplies $\{d_i T\}_{i=1}^\kappa$ to $t$, and whp  the time horizon is bounded by
$$
O(\log^3 n/\log\log n)\cdot T+O(\log^{8} n)\cdot T=O(\log^{8} n)\cdot T.
$$
Since $\sum_{i}d_i T\geq OPT$, $f'$ is a $(O(\log^2 \kappa),O(\log^8 n))$-approximation for  the single-sink Confluent Maximum Flow Over Time problem.
Since the value of $OPT$ is unknown, we again utilize the binary search to determine it. Thus, there exists a polynomial-time algorithm that, whp, computes the desired bicriteria approximation of the single-sink Confluent Maximum Flow Over Time problem.
\end{proof}

We now restrict our technique to static networks, and prove Theorem~\ref{thm-StaticConfluent}.
\begin{proof}[Proof of Theorem~\ref{thm-StaticConfluent}]
We start from the directed, edge-capacitated static network $G(V,A)$. Suppose $G(V,A)$ has $\kappa$ supplies $\{d_i\}_{i=1}^\kappa$. We pre-process the network into a multi-layer network, the same as Section~\ref{sec:StaticConFlow} except that we ignore arc lengths. Also,
we group together those  supplies of the same size, and then  consider routing each group separately via a confluent flow, outputting the best group as the solution. Note that this process will lose only a $O(\log \kappa)$ factor in the approximation guarantee.

Since each group has uniform supplies, we can view all supplies as unit supplies and  all capacities are integral. Hence we can use the standard technique to find an integral maximum flow, which upper bounds the maximum flow for routing supplies in a group. Then, because the constructed multi-layer network is a (node-capacitated) monotonic network with a single sink, satisfying the no-bottleneck assumption, we utilize Theorem~\ref{thm-monotonicflow-relaxed} to find out a confluent flow in the multi-layer network. Recall that Lemma~\ref{lmm-rerouteflow} works for the routing of static flows without the length-bounded constraint.
Then, by the analysis similar to Theorem~\ref{thm-generallengthboundflow}, we show one can find a confluent flow $f'$ for routing a subset of supplies to the sink, with edge congestion at most $O(\log^8 \kappa)$ whp. The total value of those supplies is at least $1/O(\log^2 \kappa)$ of $OPT$.

Scaling $f'$ down by a $O(\log^8 \kappa)$ factor  makes it a feasible flow that sends at least a $1/O(\log^8 \kappa)$ fraction of those supplies.
Then, we apply the following theorem:
\begin{theorem}[\cite{Chekuri2007}]
Let $T$ be an edge-capacitated tree instance with the NBA. Suppose there is a
fractional flow that routes a fraction $\gamma_i \in [0, 1]$ of each commodity $i$. Then a subset of the items can be found, in polynomial time, that feasibly routes a total demand of $\Omega(\sum_i \gamma_i d_i)$ on $T$.
\end{theorem}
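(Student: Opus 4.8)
The plan is to read the statement as an \emph{LP-rounding} claim. On a tree each commodity $i$ has a unique routing path $P_i$, so the fractional flow $\{\gamma_i\}$ is precisely a feasible point of the packing polytope $\{x\in[0,1]^{\text{items}} : \sum_{i:\,e\in P_i} d_i x_i \le c(e)\ \ \forall e\}$, whose objective $\sum_i d_i x_i$ equals $\sum_i \gamma_i d_i$ at $x=\gamma$. It therefore suffices to exhibit, in polynomial time, an \emph{integral} point of value $\Omega(\sum_i \gamma_i d_i)$. First I would normalise so that $d_{\max}=1$; the no-bottleneck assumption then gives $c(e)\ge 1$ on every edge. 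Next I would bucket the commodities geometrically by demand, $C_k=\{i:\, d_i\in(2^{-k-1},2^{-k}]\}$, and round each demand up to the top value $2^{-k}$ of its bucket, losing only a constant factor. Inside a single bucket all demands are equal, so each edge constraint collapses to the integer-capacity packing $\#\{i\in S:\,e\in P_i\}\le \lfloor c(e)\,2^{k}\rfloor$, i.e.\ a uniform-demand integral multicommodity-flow (path-packing) instance on the tree.

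The second ingredient is that uniform-demand path packing on a tree has a constant integrality gap. Rooting the tree and restricting attention to vertical (ancestor--descendant) paths, the edge--path incidence matrix is, up to sign, a \emph{network matrix} and hence totally unimodular, so the corresponding packing LP is integral; splitting every path at its LCA into its two vertical halves shows that a general-tree uniform instance is within a constant factor of its rooted restriction (this is the same phenomenon underlying the Garg--Vazirani--Yannakakis bound for integer multiflow on trees). Consequently each bucket $C_k$ can be rounded, on its own, to an integral selection whose value is $\Omega$ of its fractional value $\sum_{i\in C_k}\gamma_i d_i$, while respecting the integer capacities $\lfloor c(e)2^{k}\rfloor$ of that bucket. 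At this point I have an integral solution \emph{per bucket}, each feasible in isolation.

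The main obstacle, and the place where the NBA is indispensable, is recombining the buckets: naively summing the per-bucket integral selections can overrun an edge's capacity, since every bucket was allowed to fill that edge by itself. This is exactly a \emph{column-restricted} packing integer program --- a $0/1$ incidence matrix whose columns are scaled by the demands $d_i$ --- whose $0/1$ (uniform) version I have just shown to be constant-factor roundable. The plan is to run the Kolliopoulos--Stein column-restricted rounding argument: separate \emph{large} items, whose demand is within a constant factor of the bottleneck capacity on their path, from \emph{small} ones. The NBA forces only $O(1)$ large items across any edge, so they occupy only $O(1)$ relevant buckets and can be packed integrally through the network-matrix argument with constant loss; for the small items, a solution using at most half of each capacity can be assembled from the per-bucket integral selections, with the NBA guaranteeing that the geometric tail of ever-smaller buckets contributes a bounded amount to any fixed edge, so the total overflow telescopes to a convergent geometric series. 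Balancing the large and small contributions and tracking the three constant-factor losses (bucket rounding, the rooted/LCA gap, and the large/small split) yields an integral subset of total demand $\Omega(\sum_i \gamma_i d_i)$, computable in polynomial time. I expect this recombination across the $O(\log(1/d_{\min}))$ buckets to be the technical heart of the proof: the per-bucket integrality is immediate from total unimodularity, but controlling the accumulated capacity violation without losing more than a constant factor is precisely what the no-bottleneck assumption is there to enable.
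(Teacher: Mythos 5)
You should note at the outset that the paper itself contains no proof of this statement: it is imported verbatim from \cite{Chekuri2007} and applied as a black box in the proof of Theorem~\ref{thm-StaticConfluent}. So the comparison is against the proof of Chekuri, Mydlarz and Shepherd, and at the level of strategy your reconstruction is faithful to it: geometric bucketing of demands, reduction to unit-demand path packing on the tree, and a Kolliopoulos--Stein grouping-and-scaling recombination in which the no-bottleneck assumption controls both the large items and the capacity rounding. You have correctly identified where the NBA does its work and that the cross-bucket recombination is the technical heart.

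However, two of your steps would fail as literally written. First, the total-unimodularity argument covers only \emph{vertical} (ancestor--descendant) instances. After splitting each path at its LCA, an item is routed only if \emph{both} of its halves are selected, and rounding the two half-instances independently by LP integrality gives no consistent joint selection; the combined incidence matrix, with two intervals per column, is not TU --- consistent with the facts that unit-demand integral multiflow on capacitated trees is NP-hard and its LP integrality gap is $2$, not $1$. The Garg--Vazirani--Yannakakis primal-dual bound, which you mention only parenthetically as an analogy, must actually serve as the unit-demand engine; TU-plus-LCA-splitting cannot. Second, your recombination claim --- that ``a solution using at most half of each capacity can be assembled from the per-bucket integral selections'' --- does not hold: each bucket was rounded against the \emph{full} capacity, so the union of the per-bucket solutions can overrun an edge by a factor equal to the number of buckets, i.e.\ $\Theta(\log (d_{\max}/d_{\min}))$, not a constant. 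Grouping-and-scaling avoids this not by telescoping a union of full-capacity solutions but by first \emph{allocating} to each demand class, on each edge, capacity proportional to that class's fractional load (merging adjacent classes, using the NBA, so that every allocation is at least the class's demand scale) and only then rounding each class integrally against its own allocation; the convergent geometric series you anticipate is a consequence of this allocation step, which is missing from your outline. With these two repairs your argument coincides with the cited proof; without them, both the per-bucket integral packing of non-vertical paths and the cross-bucket assembly are genuinely broken.
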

This give us a confluent flow that routes a subset of supplies whose total value is at least $1/O(\log^{10} \kappa)$ of $OPT$.

For directed, node-capacitated networks, we only need to induce edge capacities: For each edge $e=(u,v)$, we let edge capacity $c(e)=\min(c(u),c(v))$. Then, we utilize the method above for edge-capacitated network to compute a confluent flow with edge congestion bounded. Note that, in such a confluent flow, the node congestion of a node $u$ is bounded by the congestion of edge that carried all flows out of $u$. This gives the desired approximation.

Undirected networks can be dealt with by replacing each edge by two opposite directional edges. Thus we complete the proof.
\end{proof}





\end{document}